\documentclass[11pt, twoside]{amsart}

%%%%%%%%%%%%%%%%%%%%%%%% packages %%%%%%%%%%%%%%%%%%%%%%%%%%%%%%%%%%%%%%%
\usepackage{fullpage}
\usepackage{amsmath}
\usepackage{amssymb}
\usepackage{amsfonts}
\usepackage{graphicx}
\usepackage{amsthm}
\usepackage{tikz}
\usepackage{hyperref}
\usepackage{enumitem}
\usepackage[nice]{nicefrac}
%\usepackage{biblatex}
%\addbibresource{bibliography.bib}

\setcounter{tocdepth}{2}

\newcommand{\tcr}[1]{\textcolor{red}{#1}}

%%%%%%%%%%%%%%%%%%%%%%%%%% new commands %%%%%%%%%%%%%%%%%%%%%%%%%%%%%%%%

\newcommand{\oRe}{\operatorname{Re}}
\newcommand{\oIm}{\operatorname{Im}}
\newcommand{\dz}{\text{d}z}
\newcommand{\dw}{\text{d}w}
\newcommand{\ds}{\text{d}s}
\newcommand{\dr}{\text{d}r}
\newcommand{\nhalf}{\frac{N}{2}}
\newcommand{\eps}{\varepsilon}
\newcommand{\tB}{\mathtt{B}}
\newcommand{\tW}{\mathtt{W}}
\newcommand{\tE}{\mathtt{E}}
\newcommand{\al}{\alpha}
\newcommand{\tV}{\mathtt{V}}
\newcommand{\im}{\textrm{i}}
\newcommand{\imt}{\emph{i}}

\newtheorem{theorem}{Theorem}[section]
\newtheorem*{theorem*}{Theorem}
\newtheorem{proposition}{Proposition}[section]
\newtheorem{lemma}{Lemma}[section]
\newtheorem{corollary}{Corollary}[section]

\newtheorem{definition}{Definition}[section]
\theoremstyle{remark}
\newtheorem{remark}{Remark}[section]

\newif\ifShowComments
\ShowCommentstrue
\newcounter{CommentCounter}

%%%%%%%%%%%%%%%%%%%%%%%%%%%%%%%%%%%%%%%%%%%%%%%%%%%%%%%%%%%%%%%%%%%%%%%%

\title{Split Two-Periodic Aztec Diamond}
\author{Meredith Shea}
\address{Department of Mathematics, Vassar College, Poughkeepsie NY}
\email{mshea@vassar.edu}

\begin{document}

\begin{abstract}
Recent advancements have been made to understand the statistics of the Aztec diamond dimer model under general periodic weights. In this work we define a model that breaks periodicity in one direction by combining two different two-periodic weightings. We compute the correlation kernel for this Aztec diamond dimer model by extending the methods developed by Berggren and Duits (2019), which utilize the Eynard-Mehta theorem and a Wiener-Hopf factorization. From a form of the correlation kernel that is suitable for asymptotics, we compute the local asymptotics of the model in the different macroscopic regions present. We prove that the local asymptotics of the model agree with the typical two-periodic model in the highest order, however the sub-leading order terms are affected.
\end{abstract}

\maketitle

\tableofcontents

\section{Introduction}

\subsection{The Aztec Diamond and Dimer Models}
Over the past quarter century, planar dimer models have been an active area of study. A \textit{dimer covering} of a graph, is a subset of the edges such that each vertex of the graph is incident to exactly one edge in the covering. A \textit{dimer model} is a probability measure on the set of all dimer coverings. To define the measure one can assign \textit{edge weights} to the graph. The probability of a certain covering is proportional to the product of the edge weights of the dimers contained in the covering. The ideas behind dimer models were initially introduced by Kasteleyn \cite{Kas61} and Temperley and Fisher \cite{TF61}.

Dimer models are often studied under some appropriate scaling limit. Cohn, Kenyon and Propp \cite{CKP01} showed that a class of uniform dimer models satisfy a variational principle and that the \textit{height function} of the model converges, in the scaling limit, to a deterministic limit. Subsequently, Kenyon, Okounkov, and Sheffield \cite{KOS03} showed that there are three types of Gibbs measures that can appear in dimer model with doubly periodic edge weights. These \textit{macroscopic regions} are classified as: frozen, rough, and smooth.\footnote{In the literature these regions are also referred to as solid, liquid, and gas, respectively.} In the frozen region dimers are deterministic. In the rough region dimer correlations decay polynomially with distance, while in the smooth region dimer correlations decay exponentially with distance. Not all dimer models exhibit smooth regions, however most exhibit rough regions. 

A seminal example of a dimer model is domino tilings of the \textit{Aztec diamond} \cite{EKLP92}. The asymptotics of the simplest case, the Aztec diamond with uniform edge weights, was studied by Jockusch, Propp and Shor in \cite{JPS98}. They proved that the boundary between the deterministic and non-deterministic regions forms a circle. This is known as the \textit{Arctic Circle Theorem}. A key aspect to understanding correlations of the Aztec diamond is the fact that the model is a \textit{determinantal point process}. One approach to computing correlations is through computing the \textit{inverse Kasteleyn matrix}.\footnote{For a general introduction to dimer models via Kasteleyn theory see \cite{Ken2009lectures}.} The inverse Kasteleyn matrix for the Aztec diamond with two-periodic weights was originally computed by Chhita and Young in \cite{CY14}. Further asymptotics of this model have been studied in \cite{CJY15,CJ16,BCJ18} among others. The two-periodic Aztec diamond is, notably, the simplest Aztec diamond model which exhibits all three macroscopic regions. 

\begin{figure}[h]
    \centering
    \includegraphics[scale=0.5]{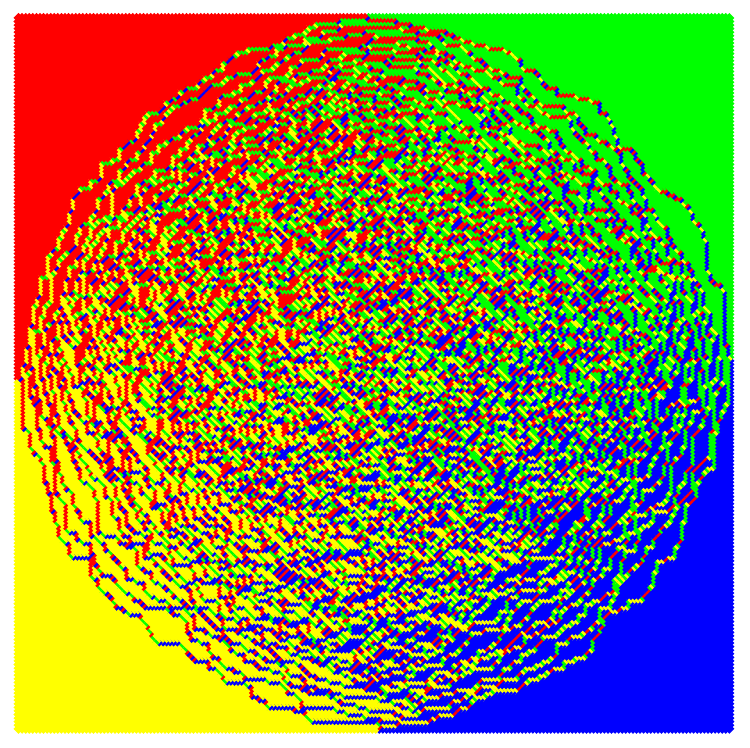} \hspace{1cm} \includegraphics[scale=0.5]{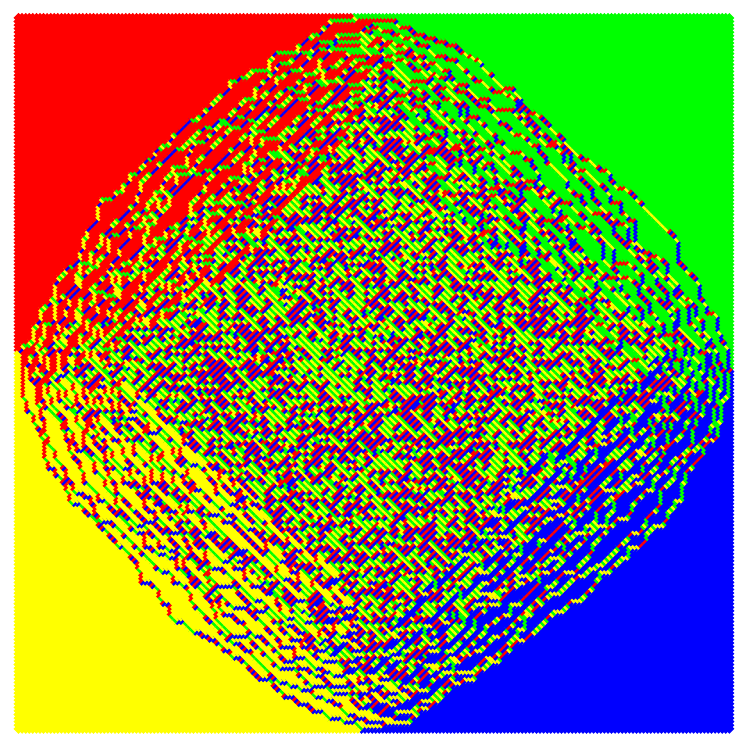}
    \caption{On the left is a simulation of the Aztec diamond with uniform weights. On the right is a simulation of the Aztec diamond with two periodic weights. Original code for simulation was provided by Sunil Chhita.}
    \label{fig:ADexamples}
\end{figure}

An alternative approach to understanding the correlations of the dimer model uses a bijection between the Aztec diamond and a non-intersecting paths model along side the Eynard-Mehta theorem \cite{EM97}. The correlation kernel of the two-periodic Aztec diamond was computed using this approach by Duits and Kuijlaars \cite{DK21}, via \textit{matrix valued orthogonal polynomials} (MVOP) and Berggren and Duits \cite{BD19}, via a \textit{Wiener-Hopf factorization}. In both of these instances, the methods produced forms of the correlation kernel that are well suited for asymptotic analysis. The Wiener-Hopf method has been related to the Kasteleyn treatment in \cite{CD23} and to the MVOP method in \cite{KP24}.

Recently, there have been many results about the Aztec diamond with more general periodic weightings, see \cite{BB24,Ber19,BD23,BB23}. There is less work, however, on dimer models that involve non-periodic weightings. One notable non-periodic set up is the $q^{vol}$-weightings \cite{CY14,BBCCR17}. Additionally in \cite{BdT24} Boutillier and de Tili\`{e}re compute the inverse Kasteleyn matrix for the Aztec diamond with Fock's weights, which are only quasi-periodic. Other set ups and approaches, such as $t$-embeddings \cite{CLR23,BNR24} and the Enyard-Mehta theorem \cite{DK21,BD19}, seem to have possible applications to certain non-periodic set ups. 

In Section \ref{section:modeldef}, we define an extension of the typical two-periodic Aztec diamond, which we refer to as the \textit{split two-periodic Aztec diamond}. In the split two-periodic model, periodicity is broken along a line which we refer to as the \textit{interface} of the model. A simulation of the split two-periodic Aztec diamond is shown in Figure \ref{fig:SplitADsimulation}. We compute the correlation kernel of this new model by extending the methods developed in \cite{BD19}. The statement of the correlation kernel is found in Section \ref{section:kernelstatement}. The rest of our discussion in Section \ref{section:results} relates to defining the macroscopic regions and computing the local asymptotic of the model away from the boundaries and interface. 

\begin{figure}[h]
    \centering
    \includegraphics[scale=0.7]{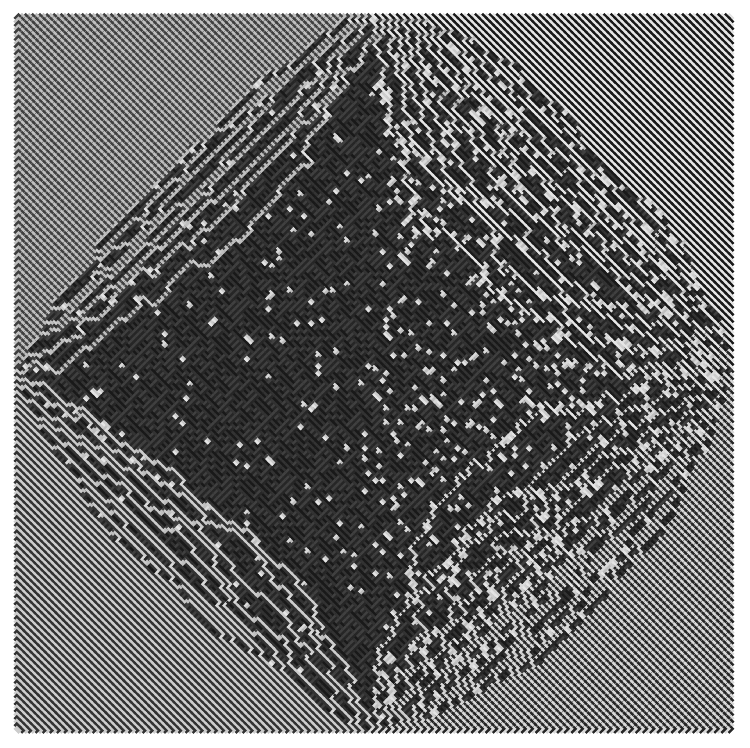}
    \caption{Simulated split Aztec diamond where $N = 200$, $\alpha = 1/4$, and $\beta = 1/2$. The tiles are colored by an 8 color gray scale to accentuate the smooth region. Original code for simulation was provided by Sunil Chhita. Simulations are sampled exactly from the distribution using domino-shuffling, see \cite{Pro03}.}
    \label{fig:SplitADsimulation}
\end{figure}

The local asymptotics illustrate how the new model maintains much of the asymptotic behavior seen in the two-periodic case. The differences between the split two-periodic model and its two-periodic counter parts are highlighted via the discussion in Section \ref{section:asymptoticsstatement}. The biggest difference in the local asymptotic behavior is seen in the decay of the sub-leading order term of the kernel in certain parts of the smooth region of the model. Further work should be done to study the behavior of the model near and across the interface. These results suggest that where the smooth-rough boundary meets the interface behaves differently than the typical rough-smooth boundary cusp found in the two-periodic Aztec diamond. 

This work also presents a prescription for deriving the correlation kernel for other non-periodic models. One can follow the work presented in Section \ref{section:BDstuff} to obtain an intermediate kernel, like the one stated in Lemma \ref{lemma:BDstuff}, for instances when the interface is moved to the right or left in the scaled picture of the Aztec diamond. Model simulations suggest that the correlation between the two sides of the interface has more complicated behavior under these circumstances. 

\subsubsection{Acknowledgements}
The author would like to thank Sunil Chhita for fruitful discussion throughout this project. Part of this research was performed while the author was visiting the Institute for Pure and Applied Mathematics (IPAM), which is supported by the National Science Foundation Grant No. DMS-1925919.

%%%%%%%%%%%%%%%%%%%%%%%%%%%%%%%%%%%%%%%%%%%%%%%%%%%%%%%%%%%%%%%%%%%%%%%%%%%%%%%%%%%%%%%%%%%%%%%%%%%%%%%%%%%%%%%%%%%%%%%%%%%%%%%%%%%%%%%%%%%%%%%%
\subsection{The Split Two-Periodic Aztec Diamond} \label{section:modeldef}
We start by defining the vertex and edge sets of the Aztec diamond of size $n$. The vertices are give by,
\begin{equation} \label{eq:coordstart}
    \tW^{\text{Az}}_n = \Big\{ (2j+1,2k) : 0 \leq j \leq n-1 \text{ and } 0 \leq k \leq n \Big\}
\end{equation}
and
\begin{equation} \label{eq:coordend}
    \tB^{\text{Az}}_n = \Big\{ (2j,2k+1) : 0 \leq j \leq n \text{ and } 0 \leq k \leq n-1 \Big\}.
\end{equation}
The Aztec diamond is a bipartite graph, so we write the set of all vertices as $\tV_n^{\text{Az}} = \tW^{\text{Az}}_n \sqcup \tB^{\text{Az}}_n$. The edges are given by,
\begin{multline} \label{eq:edgecoord}
    \tE^{\text{Az}}_n = \Big\{ ((2j+1, 2k), (2j+1\pm 1, 2k+1)) : 0 \leq j \leq n-1, 0 \leq k \leq n-1 \Big\} \\
    \cup \Big\{ ((2j+1, 2k), (2j+1\pm 1, 2k-1)) : 0 \leq j \leq n-1, 1\leq k \leq n \Big\}.
\end{multline}
\begin{figure}[h]
    \centering
    \begin{tikzpicture}[scale=0.75]
    \foreach \t in {1,3,5}
        {
        \draw (0,\t) -- (\t,0);
        \draw (\t,6) -- (6,\t);
        \draw (\t,0) -- (6,6-\t);
        \draw (0,\t) -- (6-\t,6);
        }
    \foreach \j in {0,1,2,3}
        \foreach \k in {0,1,2}
            {
            \filldraw[black] (2*\j,2*\k+1) circle (2pt);
            }
    \foreach \j in {0,1,2}
        \foreach \k in {0,1,2,3}
            {
            \filldraw[color=black,fill=white] (2*\j+1,2*\k) circle (2pt);
            }
    \foreach \j in {0,1,2,3,4,5,6}
        {
        \node at (\j,-1) {$\j$};
        \node at (-1,\j) {$\j$};
        }
    \end{tikzpicture}
    \hspace{1cm}
    \begin{tikzpicture}[scale=0.75]
    \foreach \t in {1,3,5,7}
        {
        \draw (0,\t) -- (\t,0);
        \draw (\t,8) -- (8,\t);
        \draw (\t,0) -- (8,8-\t);
        \draw (0,\t) -- (8-\t,8);
        }
    \foreach \j in {0,1,2,3,4}
        \foreach \k in {0,1,2,3}
            {
            \filldraw[black] (2*\j,2*\k+1) circle (2pt);
            }
    \foreach \j in {0,1,2,3}
        \foreach \k in {0,1,2,3,4}
            {
            \filldraw[color=black,fill=white] (2*\j+1,2*\k) circle (2pt);
            }
    \foreach \j in {0,1,2,3,4,5,6,7,8}
        {
        \node at (\j,-1) {$\j$};
        \node at (-1,\j) {$\j$};
        }
    \end{tikzpicture}
    \caption{Examples of the Aztec diamond graph. On the left is the Aztec diamond of size $n = 3$, on the right is the Aztec diamond of size $n = 4$.}
    \label{fig:AztecDiamondGraph}
\end{figure}
Examples of the Aztec diamond graph are given in Figure \ref{fig:AztecDiamondGraph}. For the split two-periodic Aztec diamond we will assume that $n$ is even, so we write $n = 2N$. Now we are ready to define the weights of the split two-periodic Aztec diamond. Let $b = (b_x,b_y) \in \tB$ and $w = (w_x,w_y)\in \tW$. We say that $b$ and $w$ are neighbors if there is an edge between the two vertices. If this is the case we write $b \sim w$. We will use the notation $\text{wt}(b,w)$ to denote the weight of the edge connecting the vertices $b$ and $w$. Note that $\text{wt}(b,w) = \text{wt}(w,b)$. The weight function for the split two-period Aztec diamond is, 
\begin{equation} \label{eq:edgewts}
    \text{wt}(b,w) = \begin{cases}
        \al^2 & \text{if } w_x \equiv 1 \bmod{4}, \, w_y \equiv 0 \bmod{4}, \, w-b = (\pm1,-1), \text{ and } w_x < 2N\\
        \al^{-2} & \text{if }  w_x \equiv 1 \bmod{4}, \, w_y \equiv 2 \bmod{4}, \, w-b = (\pm1,-1), \text{ and } w_x < 2N\\
        \beta^2 & \text{if } w_x \equiv 1 \bmod{4}, \, w_y \equiv 0 \bmod{4}, \, w-b = (\pm1,-1), \text{ and } w_x > 2N\\
        \beta^{-2} & \text{if } w_x \equiv 1 \bmod{4}, \, w_y \equiv 2 \bmod{4}, \, w-b = (\pm1,-1), \text{ and } w_x > 2N\\
        1 & \text{if } w_x \equiv 1 \bmod{4}, \text{ and } w-b = (\pm1,1)\\
        1 & \text{if } w_x \equiv 3 \bmod{4}, \text{ and } w-b = (\pm1,\pm1)\\
        0 & \text{otherwise}
    \end{cases}
\end{equation}
where $0<\al,\beta<1$ are constants. The last case in the above formula indicates the the weight is 0 if the vertices are not neighbors. If $\alpha = \beta$, this model recovers the usual two-periodic Aztec diamond. If $\alpha = \beta = 1$ we recover the uniform Aztec diamond, however we will generally assume that the constants are not equal to 1. The split two-periodic Aztec diamond of size $2N=4$ is shown in Figure \ref{fig:AztecDiamondExample}. We will refer to the line were the weighting changes as the \textit{interface} of the model. In this coordinate convention, the interface occurs along the vertical line $x = 2N$.

\begin{figure}[h]
    \centering
    \begin{tikzpicture}
    \foreach \t in {1,3,5,7}
        {
        \draw (0,\t) -- (\t,0);
        \draw (\t,8) -- (8,\t);
        \draw (\t,0) -- (8,8-\t);
        \draw (0,\t) -- (8-\t,8);
        }
    \foreach \j in {0,1,2,3,4}
        \foreach \k in {0,1,2,3}
            {
            \filldraw[black] (2*\j,2*\k+1) circle (2pt);
            }
    \foreach \j in {0,1,2,3}
        \foreach \k in {0,1,2,3,4}
            {
            \filldraw[color=black,fill=white] (2*\j+1,2*\k) circle (2pt);
            }
    \foreach \j in {0,1,2,3,4,5,6,7,8}
        {
        \node at (\j,-1) {$\j$};
        \node at (-1,\j) {$\j$};
        }
    \node at (0.7,0.7) {$\scriptstyle \alpha^2$};
    \node at (1.4,0.7) {$\scriptstyle \alpha^2$};
    \node at (0.7,2.7) {$\scriptstyle \alpha^{-2}$};
    \node at (1.4,2.7) {$\scriptstyle \alpha^{-2}$};
    \node at (0.7,4.7) {$\scriptstyle \alpha^2$};
    \node at (1.4,4.7) {$\scriptstyle \alpha^2$};
    \node at (0.7,6.7) {$\scriptstyle \alpha^{-2}$};
    \node at (1.4,6.7) {$\scriptstyle \alpha^{-2}$};
    \node at (4.7,0.7) {$\scriptstyle \beta^2$};
    \node at (5.4,0.7) {$\scriptstyle \beta^2$};
    \node at (4.7,2.7) {$\scriptstyle \beta^{-2}$};
    \node at (5.4,2.7) {$\scriptstyle \beta^{-2}$};
    \node at (4.7,4.7) {$\scriptstyle \beta^2$};
    \node at (5.4,4.7) {$\scriptstyle \beta^2$};
    \node at (4.7,6.7) {$\scriptstyle \beta^{-2}$};
    \node at (5.4,6.7) {$\scriptstyle \beta^{-2}$};
    \end{tikzpicture}
    \caption{The Aztec diamond of size $2N = 4$ with coordinates given by equations \eqref{eq:coordstart}-\eqref{eq:coordend} and weights given by equation \eqref{eq:edgewts}. All unlabeled edged have weight 1. }
    \label{fig:AztecDiamondExample}
\end{figure}

\subsection{A Related Paths Process and the Correlation Kernel} \label{section:pathsprocess}
To study the split two-periodic Aztec diamond we will extend the work of Berggren and Duits in \cite{BD19}. Their work focuses on a class of non-intersecting paths models which, under appropriate conditions on the edge weights, are equivalent to certain Aztec diamond dimer models. The equivalence of the Aztec diamond to certain non-intersecting paths models was initially realized by Johansson \cite{Jo03}. Here we briefly introduce the non-intersection paths process which is equivalent to the split two-periodic Aztec diamond. Section \ref{section:BDstuff} details the equivalence between these processes. 

We will consider a graph which we will refer to as the BD-paths graph of size $n=2N$. The graph has the following vertex set, 
\begin{equation*}
    \mathtt{V}^{\text{BD}}_n = \Big\{ (j,k) : 0 \leq j \leq 2n, \, -n -1 \leq k \leq -1 \Big\}  
\end{equation*}
and the following edge set,
\begin{multline*}
    \mathtt{E}^{\text{BD}}_n = \Big\{((j,k),(j+1,k)) : 0 \leq j < 2n, \, -n -1 \leq k \leq -1 \Big\} \\
    \cup \Big\{((2j,k),(2j,k-1)) : 1 \leq j \leq n, \, -n \leq k \leq -1 \Big\} \\
    \cup \Big\{((2j,k),(2j+1,k-1)) : 0 \leq j \leq n, \, -n \leq k \leq -1 \Big\}. 
\end{multline*}
The edges are split up into horizontal, vertical, and diagonal edges. The BD-paths graph of size $4$ is show in Figure \ref{fig:BDgraph}. On the graph we assume that there are $n$ non-intersecting paths that start at the points $(0,-k)$ for $-n \leq k \leq -1$ and that end at the points $(2j,-n-1)$ for $1 \leq j \leq 2n$. We say that there is a point at the vertex $(j,k)$ if a path crosses through the vertex and the point $(j,k+1)$ does not lie on the same path. An example configuration of the non-intersecting paths process is show in Figure \ref{fig:BDpaths}. Figure \ref{fig:BDpaths} also shows the equivalent Aztec diamond dimer covering (more details on this can be found in Section \ref{section:BDstuff}).

\begin{figure}
    \centering
    \begin{tikzpicture}[scale=0.5]
        \foreach \t in {0,2,4,6,8,10,12,14,16}
            {
            \foreach \k in {0,2,4,6,8}
                {
                \filldraw[black] (\t,\k) circle (2pt);
                }
            }
        \foreach \t in {0,2,4,6,8}
            {
            \draw (0,\t) -- (16,\t);
            }
        \foreach \t in {4,8,12,16}
            {
            \draw (\t,0) -- (\t,8);
            }
        \foreach \t in {0,4,8,12}
            {
            \foreach \k in {8,6,4,2}
                { 
                \draw (\t,\k) -- (\t+2,\k-2);
                }
            }
        \foreach \t in {0,1,2,3,4,5,6,7,8}
            {
            \node at (2*\t,-1) {$\t$};
            }
        \foreach \k in {1,2,3,4,5}
            {
            \node at (-1,10-2*\k) {$-\k$};
            }
        \node at (1,1) {$\scriptstyle\alpha^2$};
        \node at (1,3) {$\scriptstyle\alpha^{-2}$};
        \node at (1,5) {$\scriptstyle\alpha^2$};
        \node at (1,7) {$\scriptstyle\alpha^{-2}$};
        \node at (4,1) {$\scriptstyle\alpha^2$};
        \node at (4,3) {$\scriptstyle\alpha^{-2}$};
        \node at (4,5) {$\scriptstyle\alpha^2$};
        \node at (4,7) {$\scriptstyle\alpha^{-2}$};
        \node at (9,1) {$\scriptstyle\beta^2$};
        \node at (9,3) {$\scriptstyle\beta^{-2}$};
        \node at (9,5) {$\scriptstyle\beta^2$};
        \node at (9,7) {$\scriptstyle\beta^{-2}$};
        \node at (12,1) {$\scriptstyle\beta^2$};
        \node at (12,3) {$\scriptstyle\beta^{-2}$};
        \node at (12,5) {$\scriptstyle\beta^2$};
        \node at (12,7) {$\scriptstyle\beta^{-2}$};
    \end{tikzpicture}
    \caption{An example of the BD-paths graph for $n = 4$. The graph is also equipped with the edge weights that make it equivalent to the split two-periodic Aztec diamond, see Section \ref{section:BDstuff} for more details. Unlabeled edges have weight 1.}
    \label{fig:BDgraph}
\end{figure}

\begin{figure}
    \centering
    \begin{tikzpicture}[scale=0.45]
        \foreach \t in {0,2,4,6,8,10,12,14,16}
            {
            \foreach \k in {0,2,4,6,8}
                {
                \filldraw[black] (\t,\k) circle (2pt);
                }
            }
        \foreach \t in {0,2,4,6,8}
            {
            \draw (0,\t) -- (16,\t);
            }
        \foreach \t in {4,8,12,16}
            {
            \draw (\t,0) -- (\t,8);
            }
        \foreach \t in {0,4,8,12}
            {
            \foreach \k in {8,6,4,2}
                { 
                \draw (\t,\k) -- (\t+2,\k-2);
                }
            }
        \foreach \k in {2,4,6,8}
            {
            \filldraw[black] (0,\k) circle (5pt);
            }
        \draw[line width=0.1cm] (0,2) -- (2,0) -- (4,0);
        \draw[line width=0.1cm] (0,4) -- (4,4) -- (6,2) -- (8,2) -- (8,0);
        \draw[line width=0.1cm] (0,6) -- (6,6) -- (8,4) -- (12,4) -- (12,0);
        \draw[line width=0.1cm] (0,8) -- (16,8) -- (16,0);
        {
        \filldraw[black] (2,8) circle (5pt);
        \filldraw[black] (4,8) circle (5pt);
        \filldraw[black] (6,8) circle (5pt);
        \filldraw[black] (8,8) circle (5pt);
        \filldraw[black] (10,8) circle (5pt);
        \filldraw[black] (12,8) circle (5pt);
        \filldraw[black] (14,8) circle (5pt);
        \filldraw[black] (16,8) circle (5pt);
        \filldraw[black] (2,6) circle (5pt);
        \filldraw[black] (4,6) circle (5pt);
        \filldraw[black] (6,6) circle (5pt);
        \filldraw[black] (8,4) circle (5pt);
        \filldraw[black] (10,4) circle (5pt);
        \filldraw[black] (12,4) circle (5pt);
        \filldraw[black] (2,4) circle (5pt);
        \filldraw[black] (4,4) circle (5pt);
        \filldraw[black] (6,2) circle (5pt);
        \filldraw[black] (8,2) circle (5pt);
        \filldraw[black] (2,0) circle (5pt);
        \filldraw[black] (4,0) circle (5pt);
        }
    \end{tikzpicture}
    \hspace{0.5cm}
    \begin{tikzpicture}[scale=0.5]
        \foreach \t in {1,3,5,7}
            {
            \draw (0,\t) -- (\t,0);
            \draw (\t,8) -- (8,\t);
            \draw (\t,0) -- (8,8-\t);
            \draw (0,\t) -- (8-\t,8);
            }
        \foreach \j in {0,1,2,3,4}
            \foreach \k in {0,1,2,3}
                {
                \filldraw[black] (2*\j,2*\k+1) circle (2pt);
                }
        \foreach \j in {0,1,2,3}
            \foreach \k in {0,1,2,3,4}
                {
                \filldraw[color=black,fill=white] (2*\j+1,2*\k) circle (2pt);
                }
        \draw[line width=0.1cm] (0,7) -- (1,8);
        \draw[line width=0.1cm] (2,7) -- (3,8);
        \draw[line width=0.1cm] (4,7) -- (5,8);
        \draw[line width=0.1cm] (6,7) -- (7,8);
        \draw[line width=0.1cm] (7,6) -- (8,7);
        \draw[line width=0.1cm] (7,4) -- (8,5);
        \draw[line width=0.1cm] (7,2) -- (8,3);
        \draw[line width=0.1cm] (7,0) -- (8,1);
        \draw[line width=0.1cm] (0,5) -- (1,6);
        \draw[line width=0.1cm] (0,3) -- (1,4);
        \draw[line width=0.1cm] (0,1) -- (1,0);
        \draw[line width=0.1cm] (3,0) -- (4,1);
        \draw[line width=0.1cm] (5,0) -- (6,1);
        \draw[line width=0.1cm] (1,2) -- (2,1);
        \draw[line width=0.1cm] (2,3) -- (3,2);
        \draw[line width=0.1cm] (2,5) -- (3,4);
        \draw[line width=0.1cm] (5,2) -- (6,3);
        \draw[line width=0.1cm] (4,3) -- (5,4);
        \draw[line width=0.1cm] (3,6) -- (4,5);
        \draw[line width=0.1cm] (5,6) -- (6,5);
    \end{tikzpicture}
    \caption{An example of the non-intersecting path process for $n = 4$. The equivalent dimer covering of the size 4 Aztec diamond is also pictured.}
    \label{fig:BDpaths}
\end{figure}

To make the paths process have the same probability measure as the split two-periodic Aztec diamond we assign the following weights to the edges: to the horizontal edges, 
$$\text{wt}((j,k),(j+1,k)) = 1$$
to the vertical edges, 
$$\text{wt}((2j,k),(2j,k-1)) = 
\begin{cases}
    1 & \text{if } j \equiv 0 \bmod{2} \\
    \alpha^2 & \text{if } j \equiv 1 \bmod{2}, \, j < n, \text{ and } k \equiv 0 \bmod{2} \\
    \alpha^{-2} & \text{if } j \equiv 1 \bmod{2}, \, j < n, \text{ and } k \equiv 1 \bmod{2} \\
    \beta^2 & \text{if } j \equiv 1 \bmod{2}, \, j > n, \text{ and } k \equiv 0 \bmod{2} \\
    \beta^{-2} & \text{if } j \equiv 1 \bmod{2}, \, j > n, \text{ and } k \equiv 1 \bmod{2}
\end{cases}$$
and to the diagonal edges,
$$\text{wt}((2j,k),(2j+1,k-1)) = 
\begin{cases}
    1 & \text{if } j \equiv 1 \bmod{2} \\
    \alpha^2 & \text{if } j \equiv 0 \bmod{2}, \, j < n, \text{ and } k \equiv 0 \bmod{2} \\
    \alpha^{-2} & \text{if } j \equiv 0 \bmod{2}, \, j < n, \text{ and } k \equiv 1 \bmod{2} \\
    \beta^2 & \text{if } j \equiv 0 \bmod{2}, \, j \geq n, \text{ and } k \equiv 0 \bmod{2} \\
    \beta^{-2} & \text{if } j \equiv 0 \bmod{2}, \, j \geq n, \text{ and } k \equiv 1 \bmod{2}
\end{cases}.$$
As stated in \cite{BD19}, this paths process of size $n=2N$ is a determinantal point process (as is the Aztec diamond dimer model), meaning there exists a matrix $\mathbb{K}_N$ such that 
$$\mathbb{P}\big(\text{points at } (m_1, u_1), \, \dots, \, (m_k,u_k)\big) = \det\big( \mathbb{K}_N(m_i,u_i;m_j,u_j)\big)_{i,j=1}^k.$$
We call this matrix the \textit{correlation kernel} of the process. 
\begin{remark}
    The correlation kernel of a determinantal point process is, in general, not unique. For those familiar with the Kasteleyn approach to dimer models (see \cite{Ken2009lectures} and \cite{Gor20} for overviews), the inverse Kasteleyn is a correlation kernel of the dimer model, however the Kasteleyn matrix depends on a choice of Kasteleyn weighting (and is thus not unique, though the matrices are relatable). 
\end{remark} 
\begin{remark}
    We will use the coordinate notation from the paths process when defining the correlation kernel of the split two-periodic Aztec diamond in Section \ref{section:kernelstatement}. One can, heuristically, think of these coordinate changes as scalar transformations, even though the bijection between the models is a bit more complicated (see Section \ref{section:BDstuff}). One can also refer to the work in \cite{CD23} for a detailed explanation on the relationship between these processes and their kernels. 
\end{remark}

%%%%%%%%%%%%%%%%%%%%%%%%%%%%%%%%%%%%%%%%%%%%%%%%%%%%%%%%%%%%%%%%%%%%%%%%%%%%%%%%%%%%%%%%%%%%%%%%%%%%%%%%%%%%%%%%%%%%%%%%%%%%%%%%%%%%%%%%%%%%%%%%%%%%%%%%%%%%%
\section{Summary of Results} \label{section:results}
In this section, we will detail the main results of this work. This starts with the statement of the correlation kernel of the split two-periodic Aztec diamond, which we later compute by extending the methods of Berggren and Duits \cite{BD19}. We then relate this kernel to the kernel of the typical two-periodic Aztec diamond model. Lastly, we classify the macroscopic regions of the model by way of computing the local asymptotics. In particular, we emphasize the differences in the local asymptotics of the split two-periodic model compared to the typical two-periodic model.

\subsection{An Integral Representation of the Correlation Kernel} \label{section:kernelstatement}
In order to state the correlation kernel of the model, we first define some preliminary functions. We start with the matrices, 
\begin{align*}
    \phi_{\eps,1}(z) &= \begin{pmatrix}1 & \eps^2 z^{-1} \\ \eps^{-2} & 1\end{pmatrix}, & \phi_{\eps,2}(z) &= \frac{1}{1-z^{-1}}\begin{pmatrix}1 & \eps^2 z^{-1} \\ \eps^{-2} & 1\end{pmatrix},\\
    \phi_3(z) &= \begin{pmatrix}1 & z^{-1} \\ 1 & 1\end{pmatrix}, & \phi_4(z) &= \frac{1}{1-z^{-1}}\begin{pmatrix}1 & z^{-1} \\ 1 & 1 \end{pmatrix},
\end{align*}
and define
\begin{align}
\Phi_{\eps}(z) = \phi_{\eps,1}(z)\phi_{\eps,2}(z)\phi_3(z)\phi_4(z).
\end{align}
It is useful to express $\Phi_{\eps}(z)$ in terms of its eigen-decomposition, 
\begin{equation}
    \Phi_\eps(z) = E_\eps(z) \begin{pmatrix}r_{\eps,1}(z) & 0 \\ 0 & r_{\eps,2}(z) \end{pmatrix}E_\eps(z)^{-1}.
\end{equation}
The eigenvalues of $\Phi_\eps(z)$ are explicitly, 
\begin{equation} \label{eq:r1}
    r_{\eps,1}(z) = \frac{1}{(z-1)^2}\left((z+1)^2+2z(\eps^2 + \eps^{-2}) + 2 (\eps + \eps^{-1}) \sqrt{z^3+(\eps^2+\eps^{-2})z^2+z} \right)
\end{equation} 
and
\begin{equation} \label{eq:r2}
    r_{\eps,2}(z) = \frac{1}{(z-1)^2}\left((z+1)^2+2z(\eps^2 + \eps^{-2}) - 2 (\eps + \eps^{-1}) \sqrt{z^3+(\eps^2+\eps^{-2})z^2+z} \right).
\end{equation} 
We chose the branch cuts for the above to be $(-\infty,-\eps^{-2}] \cup [-\eps^2,0]$ and for $z > 0$ we choose the positive square root. We also adopt the notation,
\begin{align} \label{eq:f1}
    F_{\eps,1}(z) &= E_\eps(z) \begin{pmatrix}1 & 0 \\ 0 & 0\end{pmatrix} E_\eps(z)^{-1}\\ \nonumber
    &= \begin{pmatrix}
        \frac{1}{2} - \frac{z(\eps^2-1)}{2\sqrt{z(z+\eps^2)(1+\eps^2 z)}} & -\frac{\eps^2(z+1)}{2\sqrt{z(z+\eps^2)(1+\eps^2z)}} \\
        -\frac{z(z+1)}{2\sqrt{z(z+\eps^2)(1+\eps^2z)}} & \frac{1}{2} + \frac{z(\eps^2-1)}{2\sqrt{z(z+\eps^2)(1+\eps^2 z)}}
    \end{pmatrix} 
\end{align} 
and 
\begin{align} \label{eq:f2}
    F_{\eps,2}(z) &= E_\eps(z) \begin{pmatrix}0 & 0 \\ 0 & 1\end{pmatrix} E_\eps(z)^{-1} \\ \nonumber
    &= \begin{pmatrix}
        \frac{1}{2} + \frac{z(\eps^2-1)}{2\sqrt{z(z+\eps^2)(1+\eps^2 z)}} & \frac{\eps^2(z+1)}{2\sqrt{z(z+\eps^2)(1+\eps^2z)}} \\
        \frac{z(z+1)}{2\sqrt{z(z+\eps^2)(1+\eps^2z)}} & \frac{1}{2} - \frac{z(\eps^2-1)}{2\sqrt{z(z+\eps^2)(1+\eps^2 z)}}
        \end{pmatrix}.
\end{align}
which uses the same branch cuts as above. Lastly we define the function,
\begin{equation} \label{eq:g}
    g_{\alpha,\beta}(z) = \frac{2z(1+\alpha^2\beta^2)+\beta^2(z^2+1)+\alpha^2(z^2+1)}{4\sqrt{(z+\alpha^2)(1+\alpha^2z)(z+\beta^2)(1+\beta^2z)}}.
\end{equation}
Note that $g_{\alpha,\beta}(z) = g_{\beta,\alpha}(z)$, and $g_{\alpha,\alpha}(z) = \nicefrac{1}{2}$. Assuming $\beta > \alpha$,  $g_{\alpha,\beta}(z)$ has branch cuts $[-\alpha^{-2},-\beta^{-2}] \cup [-\beta^2,-\alpha^2]$. We are now prepared to state the main theorem of this paper, 
\begin{theorem} \label{theorem:kernel}
    Let $-N \leq \xi,\xi' \leq -1$ and $0 < m < N$. The split two-periodic Aztec diamond of size $2N$ has correlation kernel given by 
    \begin{multline} \label{eq:asidekernel}
        \Big[\mathbb{K}_N(4m',2\xi'+j;4m,2\xi+i)\Big]_{i,j=0}^1 = -\frac{\mathbb{I}_{m>m'}}{2 \pi \imt} \oint_{\gamma_{0,1}} \frac{\dz}{z} z^{\xi'-\xi} \Phi_{\alpha}(z)^{\frac{N}{2}-m'}\Phi_{\eps}(z)^{m-\frac{N}{2}} \\
        + \frac{1}{(2\pi\imt)^2} \oint_{\gamma_1} \dw \oint_{\gamma_{0,1}} \frac{\dz}{z(z-w)} \frac{w^{\xi'+N}(z-1)^{N}}{z^{\xi+N}(w-1)^N} r_{\alpha,1}(w)^{\nhalf-m'} F_{\alpha,1}(w)\Phi_\eps(z)^{m-\nhalf}\\
        + \frac{1}{(2\pi\imt)^2} \oint_{\gamma_1} \dw \oint_{\gamma_{0,1}} \frac{\dz}{z(z-w)} \frac{w^{\xi'+N}(z-1)^{N}}{z^{\xi+N}(w-1)^N} r_{\alpha,2}(w)^{\nhalf-m'} \frac{2 F_{\alpha,2}(w)F_{\beta,1}(w)F_{\alpha,1}(w)}{1+2g_{\alpha,\beta}(w)}\Phi_\eps(z)^{m-\nhalf} 
    \end{multline}
    when $0 < m' \leq \nicefrac{N}{2}$ and 
    \begin{multline} \label{eq:bsidekernelfinal}
        \Big[\mathbb{K}_N(4m',2\xi'+j;4m,2\xi+i)\Big]_{i,j=0}^1 = -\frac{\mathbb{I}_{m>m'}}{2 \pi \imt} \oint_{\gamma_{0,1}} \frac{\dz}{z} z^{\xi'-\xi} \Phi_{\beta}(z)^{\frac{N}{2}-m'}\Phi_{\eps}(z)^{m-\frac{N}{2}} \\
        + \frac{1}{(2\pi\imt)^2}\oint_{\gamma_1} \dw \oint_{\gamma_{0,1}} \frac{\dz}{z(z-w)} \frac{w^{\xi'+N}(z-1)^{N}}{z^{\xi+N}(w-1)^N} r_{\beta,1}(w)^{\nhalf-m'} F_{\beta,1}(w)\Phi_\eps(z)^{m-\nhalf}\\
        + \frac{1}{(2\pi\imt)^2} \oint_{\gamma_1} \dw \oint_{\gamma_{0,1}} \frac{\dz}{z(z-w)} \frac{w^{\xi'+N}(z-1)^{N}}{z^{\xi+N}(w-1)^N} r_{\beta,1}(w)^{\nhalf-m'} \frac{2F_{\beta,1}(w)F_{\alpha,1}(w)F_{\beta,2}(w)}{1+2g_{\alpha,\beta}(w)} \Phi_\eps(z)^{m-\nhalf}
    \end{multline}
    when $\nicefrac{N}{2} < m' <N$. For both cases, if $m \leq \nicefrac{N}{2}$ then $\eps = \alpha$ and if $m > \nicefrac{N}{2}$ then $\eps = \beta$. Additionally, $\gamma_1$ is a contour surrounding $1$ and not $0$, while $\gamma_{0,1}$ is a contour surrounding $0$ and $\gamma_1$. Both are positively oriented. We use the index $j$ to denote the matrix row and the index $i$ to denote the matrix column.  
\end{theorem}
\begin{figure}[h]
    \centering
    \begin{tikzpicture}
    \draw (0,-2) -- (0,2);
    \draw (-2,0) -- (2,0);
    \clip (-3,-2) rectangle (3,2);
    \filldraw[black] (1,0) circle (2pt) node[anchor=north]{\footnotesize{1}};
    \draw[ultra thick, blue] (0.25,0) ellipse (1.5 and 1.25);
    \draw[ultra thick, red] (1,0) circle (0.5);
    \draw[- stealth, ultra thick, red] (1.1,0.5) -- (0.95,0.5);
    \draw[-stealth, ultra thick, blue] (0.4,1.25) -- (0.2,1.25);
    \node at (0.7,-0.7) {$\gamma_1$};
    \node at (2.1,0.3) {$\gamma_{0,1}$};
    \end{tikzpicture}
    \caption{Depiction of the contours $\gamma_{0,1}$ and $\gamma_1$.}
    \label{fig:maincontour}
\end{figure}

\begin{remark} \label{rmk:1}
    This is not the complete version of the kernel as it only accounts for vertices of the form $(4m, 2\xi+i)$. In particular, there is a restriction on the horizontal coordinate being divisible by four. One can extend this kernel to account for all vertices by appending by certain matrix products that depend on the coordinates. For example, in the case where $m' \leq \nicefrac{N}{2}$ we can extend the kernel by writing,
    \begin{multline} \label{eq:extendedkernel}
        \Big[\mathbb{K}_N(4m'-k,2\xi'+j;4m+l,2\xi+i)\Big]_{i,j=0}^1 \\
        = -\frac{\mathbb{I}_{m>m'}}{2 \pi \im} \oint_{\gamma_{0,1}} \frac{\dz}{z} z^{\xi'-\xi} T_{k,\alpha}(z)\Phi_{\alpha}(z)^{\frac{N}{2}-m'}\Phi_{\eps}(z)^{m-\frac{N}{2}}S_{l,\eps}(z) \\
        + \frac{1}{(2\pi\im)^2} \oint_{\gamma_1} \dw \oint_{\gamma_{0,1}} \frac{\dz}{z(z-w)} \frac{w^{\xi'+N}(z-1)^{N}}{z^{\xi+N}(w-1)^N} r_{\alpha,1}(w)^{\nhalf-m'} T_{k,\alpha}(w)F_{\alpha,1}(w)\Phi_\eps(z)^{m-\nhalf}S_{l,\eps}(z)\\
        + \frac{1}{(2\pi\im)^2} \oint_{\gamma_1} \dw \oint_{\gamma_{0,1}} \frac{\dz}{z(z-w)} \frac{w^{\xi'+N}(z-1)^{N}}{z^{\xi+N}(w-1)^N} r_{\alpha,2}(w)^{\nhalf-m'} T_{k,\alpha}(w)\frac{2 F_{\alpha,2}(w)F_{\beta,1}(w)F_{\alpha,1}(w)}{1+2g_{\alpha,\beta}(w)} \\
        \times\Phi_\eps(z)^{m-\nhalf} S_{l,\eps}(z)
    \end{multline}
    where 
    $$S_{0,\eps}(z) = \mathbb{I}, \; S_{1,\eps}(z) = \phi_{\eps,1}(z), \; S_{2,\eps}(z) = \phi_{\eps,1}(z)\phi_{\eps,2}(z), \; S_{3,\eps}(z) = \phi_{\eps,1}(z)\phi_{\eps,2}(z)\phi_{3}(z)$$
    and 
    $$T_{0,\eps}(z) = \mathbb{I}, \; T_{1,\eps}(z) = \phi_{4}(z), \; T_{2,\eps}(z) = \phi_{3}(z)\phi_{4}(z), \; T_{3,\eps}(z) = \phi_{\eps,2}(z)\phi_{3}(z)\phi_{4}(z).$$
    Now the kernel can be computed for any vertex. In this work, we will only use the limited version of the kernel presented in Theorem \ref{theorem:kernel} instead of this longer version. This is because the limited kernel has enough detail for understanding the asymptotic results of the model, without becoming overly excessive in notation. See \cite[Remark 2.6]{DK21} for a similar conversation regarding the kernel of the two-periodic Aztec diamond. 
\end{remark}

The proof of Theorem \ref{theorem:kernel} is in Section \ref{section:proofofkernel1}, with some details deferred to Sections \ref{section:BDstuff} and \ref{section:analysisofeigen}. The kernel presented in Theorem \ref{theorem:kernel} is written in a way that makes comparisons to the typical two-periodic Aztec diamond more immediate. There is a way to represent the kernel with only one double contour integral. This form of the kernel is shown in equation \eqref{eq:bsidekernelinitial}, which is located in the proof of Theorem \ref{theorem:kernel} when $m' > \nicefrac{N}{2}$. 
 
%%%%%%%%%%%%%%%%%%%%%%%%%%%%%%%%%%%%%%%%%%%%%%%%%%%%%%%%%%%%%%%%%%%%%%%%%%%%%%%%%%%%%
\subsection{Comparison to the Two-Periodic Aztec Diamond} \label{section:twoperkernel}
Before we analyze the behavior of the model, we first want to comment on how the form of the correlation kernel relates to the correlation kernel of the typical two-periodic Aztec diamond. From the definition of the model, we can recover the two-periodic Aztec diamond (with two-periodic parameter $\eps$) from the split two-periodic Aztec diamond by setting $\alpha = \beta = \eps$. The correlation kernel of the two-periodic Aztec diamond, as formulated in \cite{BD19}, can be stated using the notation presented in this paper as 
\begin{theorem*}[\cite{BD19}, Theorem 5.2]
    Let $-N \leq \xi,\xi' \leq -1$ and $0 < m < N$. The two-periodic Aztec diamond of size $2N$ has correlation kernel given by 
    \begin{multline}
        \Big[\mathbb{K}_{TP,N}^{\eps}(4m',2\xi'+j;4m,2\xi+i)\Big]_{i,j=0}^1 = -\frac{\mathbb{I}_{m>m'}}{2 \pi \imt} \oint_{\gamma_{0,1}} \frac{\dz}{z} z^{\xi'-\xi} \Phi_{\eps}(z)^{m-m'} \\
        + \frac{1}{(2\pi\imt)^2} \sum_{k = 1,2} \oint_{\gamma_1} \dw \oint_{\gamma_{0,1}} \frac{\dz}{z(z-w)} \frac{w^{\xi'+N}(z-1)^{N}}{z^{\xi+N}(w-1)^N} r_{\eps,1}(w)^{\nhalf-m'} r_{\eps,k}(z)^{m-\frac{N}{2}} F_{\eps,1}(w)F_{\eps,k}(z).
    \end{multline}
\end{theorem*}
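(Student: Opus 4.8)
The displayed formula is exactly [BD19, Theorem~5.2] rewritten in the notation fixed in Section~\ref{section:kernelstatement}, so the plan is not to rederive the kernel but to verify that the two formulations coincide. The first step is to record the dictionary between the objects of [BD19] and the objects $\phi_{\eps,1},\dots,\phi_4$, $\Phi_\eps$, $r_{\eps,1},r_{\eps,2}$, $E_\eps$, $F_{\eps,1},F_{\eps,2}$ defined here: the transfer-matrix product $\Phi_\eps$ is the product of the four elementary factors of the two-periodic Aztec diamond with weight parameter $\eps$, the $r_{\eps,k}$ are its eigenvalues, the $F_{\eps,k}$ the associated spectral projections, and $\gamma_1,\gamma_{0,1}$ are the contours of Figure~\ref{fig:maincontour}. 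One then checks that the labels $\mathbb{K}_{TP,N}^\eps(4m',2\xi'+j;4m,2\xi+i)$ correspond to the positions in the Aztec diamond under the coordinate change of Section~\ref{section:modeldef} (white vertices pushed down one row, then $v_y\mapsto\tfrac12 v_y-2N-\tfrac12$). With these identifications the stated formula is literally [BD19, Theorem~5.2].

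A second and self-contained route — which doubles as a consistency check on Theorem~\ref{theorem:kernel} — is to specialize the split kernel \eqref{eq:kernelfinal} to $\alpha=\beta=\eps$. When $\alpha=\beta$ we have $\eps(m)=\overline{\eps}(m)=\eps$ for every $m$, so in the first line $\Phi_{\eps(m')}(z)^{\nhalf-m'}\Phi_{\eps(m)}(z)^{m-\nhalf}=\Phi_\eps(z)^{m-m'}$, and the second line of \eqref{eq:kernelfinal} is already the double-contour term of the claimed formula. For the third line one computes $g_{\eps,\eps}(z)=\tfrac12$ directly from \eqref{eq:g}: the numerator and the quantity under the (now perfect-square) radical both equal $2\big(z(1+\eps^4)+\eps^2(z^2+1)\big)$, so the prefactor $2/(1+2g_{\alpha,\beta}(w))$ collapses to $1$; moreover the matrix factor $F_{\eps(m'),2}(w)F_{\overline{\eps}(m'),1}(w)F_{\eps(m'),1}(w)F_{\eps(m),k}(z)$ becomes $F_{\eps,2}(w)F_{\eps,1}(w)F_{\eps,1}(w)F_{\eps,k}(z)$, which vanishes because $F_{\eps,2}$ and $F_{\eps,1}$ are complementary spectral projections of $\Phi_\eps(w)$, namely $F_{\eps,2}(w)F_{\eps,1}(w)=E_\eps(w)\operatorname{diag}(0,1)\operatorname{diag}(1,0)E_\eps(w)^{-1}=0$. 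Hence the whole third family of integrals drops out and \eqref{eq:kernelfinal} reduces to the two-term expression of the theorem.

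The only genuinely delicate point in either route is the branch of the square root. The evaluation $g_{\eps,\eps}(z)=\tfrac12$ uses $\sqrt{(z+\eps^2)^2(1+\eps^2 z)^2}=(z+\eps^2)(1+\eps^2 z)$, which is the correct choice on and inside $\gamma_{0,1}$ for the branch cuts fixed in Sections~\ref{section:kernelstatement}–\ref{section:analysisofeigen}; likewise, matching $r_{\eps,k}$ and $F_{\eps,k}$ to the objects of [BD19] requires a consistent choice of the branch of $\sqrt{z^3+(\eps^2+\eps^{-2})z^2+z}$ from \eqref{eq:r1}–\eqref{eq:r2}. One should also dispatch the boundary cases — $m'=\nhalf$ and the extreme values of $m,m'$ and of $\xi,\xi'$ — but these are treated exactly as in [BD19] and introduce nothing new. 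I expect this branch bookkeeping to be the main (and essentially the only) obstacle; the algebraic identities involved are short.
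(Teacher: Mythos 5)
Your proposal is correct, and its first route is exactly what the paper does: the statement is not proven here at all, it is \cite{BD19}, Theorem 5.2 imported verbatim, with the only work being the notational dictionary ($\Phi_\eps$ as the product of the four elementary transfer matrices, $r_{\eps,k}$ and $F_{\eps,k}$ as its eigenvalues and spectral projections, and the coordinate change of Section \ref{section:modeldef}). Your second route is a genuinely different and worthwhile addition: specializing \eqref{eq:kernelfinal} to $\alpha=\beta=\eps$, computing $g_{\eps,\eps}\equiv\tfrac12$ from \eqref{eq:g} so that $2/(1+2g_{\alpha,\beta}(w))=1$, and killing the third family of integrals via $F_{\eps,2}(w)F_{\eps,1}(w)=0$ is exactly the mechanism the paper itself gestures at in Section \ref{section:twoperkernel} (equation \eqref{eq:samesidecorker}) but never carries out; note it is not circular, since Theorem \ref{theorem:kernel} is derived from the non-intersecting-paths kernel of \cite{BD19}, Theorem 3.1 and the Wiener--Hopf factorization, not from Theorem 5.2. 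What the citation route buys is brevity; what your specialization buys is an internal consistency check on Theorem \ref{theorem:kernel}, with the only care point being the one you already flagged, namely choosing the branch of the square root in \eqref{eq:g} so that $g_{\eps,\eps}=+\tfrac12$ (consistent with the expansion $g_{a,\alpha,\beta}=\tfrac12+O(w-a^{\pm2})$ used in Section \ref{section:analysisofeigen}) rather than $-\tfrac12$.
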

\begin{remark}
    Once again, we have written the kernel only for coordinates of the form $(4m,2\xi+i)$. The kernel above can be extended to all coordinates via the method outlined in Remark \ref{rmk:1}. The matrix notation above is the same convention as the matrix notation in Theorem \ref{theorem:kernel}.
\end{remark}
At the level of the correlation kernel, we recover the kernel of the typical two-periodic Aztec diamond when we set $\alpha = \beta = \eps$. This is because the second double contour integral presented in equations \eqref{eq:asidekernel} and \eqref{eq:bsidekernelfinal} will contain the matrix products $F_{\alpha,1}(w)F_{\alpha,2}(w)$ and $F_{\alpha,2}(w)F_{\alpha,1}(w)$, respectively. Both of these products are exactly zero. 

Now consider the split two-periodic Aztec diamond under the condition that the coordinates lie on the same side of the interface, e.g. $m, \, m' \leq \nicefrac{N}{2}$ or $m,\,m' \geq \nicefrac{N}{2}$. The correlation kernel under these conditions simplifies to,
\begin{multline} \label{eq:samesidecorker}
    \Big[\mathbb{K}_N(4m',2\xi'+j;4m,2\xi+i)\Big]_{i,j=0}^1 = \Big[\mathbb{K}_{TP,N}^{\eps}(4m',2\xi'+j;4m,2\xi+i)\Big]_{i,j=0}^1\\
    + \frac{1}{(2\pi\im)^2} \sum_{k = 1,2} \oint_{\gamma_1} \dw \oint_{\gamma_{0,1}} \frac{\dz}{z(z-w)} \frac{w^{\xi'+N}(z-1)^{N}}{z^{\xi+N}(w-1)^N} r_{\eps,l}(w)^{\nhalf-m'} r_{\eps,k}(z)^{m-\frac{N}{2}} \\
    \times  \frac{2}{1+2g_{\alpha,\beta}(w)} F_{\eps,l}(w)F_{\bar{\eps},1}(w)F_{\eps,\bar{l}}(w)F_{\eps,k}(z) \\ 
\end{multline}
where 
\begin{equation*}
    (\eps, \bar{\eps}) = \begin{cases}
    (\alpha,\beta) & \text{if } m, \, m' < \nicefrac{N}{2} \\
    (\beta, \alpha) & \text{if } m, \, m' > \nicefrac{N}{2} 
    \end{cases}
\end{equation*}
and 
\begin{equation*}
    (l, \bar{l}) = \begin{cases}
    (2,1) & \text{if } m, \, m' < \nicefrac{N}{2} \\
    (1,2) & \text{if } m, \, m' > \nicefrac{N}{2} 
    \end{cases}.
\end{equation*}
From this expression we see how the correlation kernel of the split model relates to the typically two-periodic model. In order to understand how the local asymptotics of the split two-periodic Aztec diamond differ from the local asymptotics of the typical two-periodic Aztec diamond we only need to understand the asymptotics of the additional two double contour integrals. These will be our main focus when describing the local asymptotics in Section \ref{section:asymptoticsstatement}.\\

%%%%%%%%%%%%%%%%%%%%%%%%%%%%%%%%%%%%%%%%%%%%%%%%%%%%%%%%%%%%%%%%%%%%%%%%%%%%%%%%%%%
\subsection{Classification of the Macroscopic Regions} \label{section:classofphases}
Before we compute the explicit local asymptotics of the model, we would like to introduce the saddle functions associated with the kernel of the model. In doing so, we will consider the behavior of the model around some \textit{asymptotic coordinate}. Let $(x,y) \in (0,\nicefrac{1}{2})\cup(\nicefrac{1}{2},1) \times (-1,0)$, where we omit the case where the asymptotic coordinate is exactly on the interface of the model. Let $(m,\xi) = (xN, yN)$, then we say the vertex $(4m,2\xi+i)$ has asymptotic coordinate $(x,y)$.\footnote{Calling $(x,y)$ the asymptotic coordinate means that the asymptotic picture fits inside the square $[0,1] \times [-1,0]$, which is the typical asymptotic view of the Aztec diamond.} Now we define the functions,
\begin{equation} \label{eq:saddlefunc}
    \psi_{\eps,k}(z;x,y) = (y+1)\log z - \log(z-1) + \left(\frac{1}{2}-x\right)\log r_{\eps,k}(z)
\end{equation}
where $k = 1,2$. These functions play a crucial role in the asymptotics of the typical two-periodic Aztec diamond and are also key to understanding the asymptotics of the split two-periodic model. 
\begin{remark}
    The saddle functions $\psi_{\eps,k}(z)$ are, up to a scalar and change of parameters, the saddle functions defined and analyzed in \cite{DK21}.
\end{remark}
In other words, most of the analytic work we need in order to prepare our integrals for saddle point methods was accomplished in \cite{DK21}. We state the saddle functions in \cite[p. 15]{DK21} for comparison. 
\begin{equation*}
    \Phi_k(z) = -(1+\xi_2)\log z + 2 \log(z-1) + \xi_1 \log \lambda_k(z)
\end{equation*}
where $k=1,2$. They define 
\begin{equation*}
    \lambda_{1,2}(z) = \frac{\left((\alpha_{DK}+\beta_{DK})z\pm\sqrt{z(z+\alpha_{DK}^2)(z+\beta_{DK}^2)}\right)^2}{z(z-1)^2}
\end{equation*}
and $-1<\xi_1,\xi_2<1$ are parameters that define the asymptotic coordinate, albeit different from our $x$ and $y$. Note that the above $\alpha_{DK}$ and $\beta_{DK}$ are from the convention in \cite{DK21} and not the same $\alpha$ and $\beta$ presented in this text. Letting $\alpha_{DK} = \eps$ and $\beta_{DK} = \eps^{-1}$ it is easy to check that, 
\begin{equation*}
    r_{\eps,k}(z) = \lambda_k(z)
\end{equation*}
for $k = 1,2$. Using the change of parameter, $\xi_1 = 2x-1$ and $\xi_2 = 2y+1$, we obtain 
\begin{equation*} \label{eq:phipsi1}
    \Phi_k(z) = - 2 \psi_{\eps,k}(z;x,y)
\end{equation*}
Much like the two-periodic Aztec diamond, the split two-periodic model contains all three possible regions: smooth, rough, and frozen. The following piece-wise saddle function will help us succinctly define the boundary between these macroscopic regions. 
\begin{equation} \label{eq:saddlefunctionpw}
    \psi_k(z) = \begin{cases}
        \psi_{\alpha,k}(z; x,y) & \text{for } x < \nicefrac{1}{2} \\
        \psi_{\beta,k}(z; x,y) & \text{for } x > \nicefrac{1}{2}
    \end{cases}
\end{equation}
For fixed $(x,y)$ we can think of the saddle functions $\psi_1(z)$ and $\psi_2(z)$ as a single function, $\Psi(z)$, on some Riemann surface. The Riemann surface is two sheets of $\mathbb{C}$ glued together along the branch cuts of $\psi_k(z)$.\footnote{The branch cuts are dependent on $x$. If $x<\nicefrac{1}{2}$, then the branch cuts are $(-\infty,-\alpha^{-2}]\cup[-\alpha^2,0]$, while if $x > \nicefrac{1}{2}$ the branch cuts are $(-\infty,-\beta^{-2}]\cup[-\beta^2,0]$.} On one sheet of the surface, $\Psi(z) = \psi_1(z)$, while on the other sheet, $\Psi(z) = \psi_2(z)$. Note that we can consider $\Psi(z)$ to be a multivariate function on some Riemann surface as long as we restrict our asymptotic coordinate to one side of the interface.

There are four saddle points, counting multiplicity, of the function $\Psi(z)$ as stated in \cite{DK21}. The location of these saddle points on the Riemann surface classify the macroscopic region of the model,
\begin{definition} \label{def:macroscopicboundary}
Let $(x,y) \in (0,\nicefrac{1}{2})\cup(\nicefrac{1}{2},1) \times (-1,0)$ denote the asymptotic coordinate on the Aztec diamond.
    \begin{enumerate}
        \item If $\Psi$ has two simple saddle points on either copy of the positive real axis, then $(x,y) \in \mathfrak{F}$, the frozen region of the model.
        \item If $\Psi$ has two complex saddle points (on either sheet of the Riemann surface), then $(x,y) \in \mathfrak{R}$, the rough region of the model. 
        \item If all four saddle points of $\Psi$ are simple and located on either copy of the negative real axis away from the branch cuts, then $(x,y) \in \mathfrak{S}$, the smooth region of the model.
    \end{enumerate}
\end{definition}
This is the same definition of regions presented for the typical two-periodic model in \cite[Definition 2.9]{DK21}. Moreover, Duits and Kuijlaars go on to define the phase transitions as tuples, $(x,y)$, where the saddle points of the function coalesce. We do not state this full definition here, but the same applies as long as $x \neq \nicefrac{1}{2}$. The saddle function, $\Psi(z)$, can also be used to determine the boundaries between the macroscopic regions. Any point $(x,y)$ where saddle points coalesce is on some boundary between the macroscopic regions. Additionally, due to the piece wise nature of the saddle functions, we must check for transitions along the line $x = \nicefrac{1}{2}$. Figure \ref{fig:boundariesexample} depicts an example of the macroscopic boundaries for a split two-periodic model.
\begin{figure}[h]
    \centering
    \includegraphics[scale=0.5]{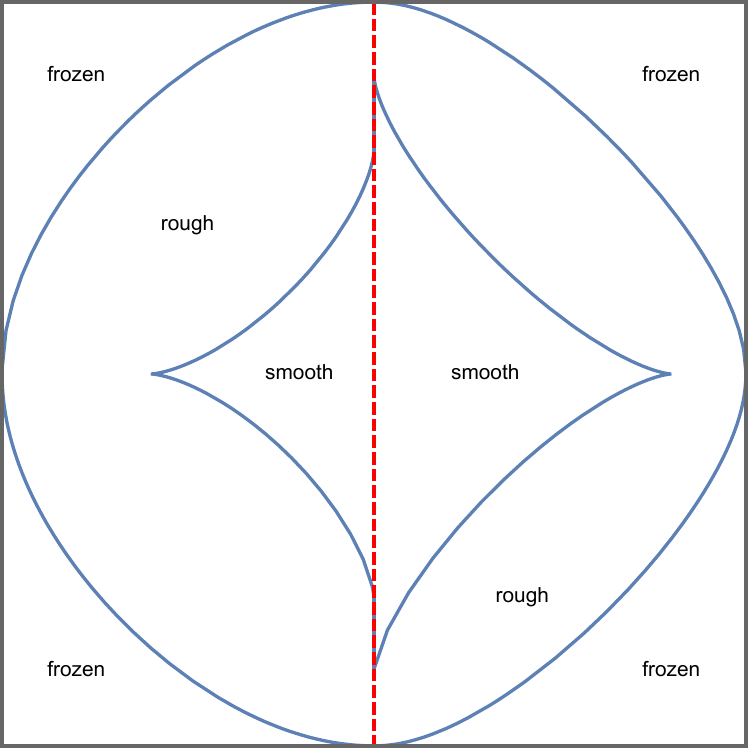}
    \caption{Boundaries between macroscopic regions of the split Aztec diamond when $\alpha = \nicefrac{1}{2}$ and $\beta = \nicefrac{1}{3}$. The dashed red line denoted the interface of the model. Boundary computed numerically using Mathematica.}
    \label{fig:boundariesexample}
\end{figure}

For either half of the model, the definition and boundaries between macroscopic regions agree with the typically two-periodic case.\footnote{See \cite{DK21} for a similar classification in the two periodic case.} In the section below we will show how this follows from the local asymptotics of the correlation kernel.

%%%%%%%%%%%%%%%%%%%%%%%%%%%%%%%%%%%%%%%%%%%%%%%%%%%%%%%%%%%%%%%%%%%%%%%%%%%%%%%%%%%%%%%%%%%%%%%%%%%%%%%%%%%%%%%%%%%%%%%%%%%%%%%%%%%%%%%%%%%%%%%%%%%%%%%%%%%%%%%%%%%%%%%%%%%%%%%%%%%%%%%%%%%%%%%%%%%%%%%%%%%%%%%%%%%%%%%%%%%%%%%%%%%%
\subsection{Asymptotic Behavior} \label{section:asymptoticsstatement}
Our goal in this section is to summarize the local asymptotics of the model, so we will assume the two coordinates $(4m', 2\xi'+j)$ and $(4m, 2\xi+i)$ are on the same side of the interface and that they are asymptotically close. We define this precisely. Start by writing, 
\begin{align} 
    m = xN + x_1 && m' = xN + x_2 \label{eq:localm}\\ 
    \xi = yN + y_1 && \xi' = yN + y_2 \label{eq:localxi}
\end{align}
where we assume $x \neq \nicefrac{1}{2}$, and $x_1$ and $x_2$ are small enough in magnitude so that the coordinates $(4m,2\xi+i)$ and $(4m',2\xi'+j)$ lie on the same side of the interface. In general, we will assume that $x_1, \, x_2, \, y_1,$ and $y_2$ are $o(N)$. Under these considerations, we rewrite the correlation kernel in terms of the saddle functions,
\begin{multline}
    \Big[\mathbb{K}_N(4m',2\xi'+j;4m,2\xi+i)\Big]_{i,j=0}^1 = \Big[\mathbb{K}_{TP,N}^{\eps}(4m',2\xi'+j;4m,2\xi+i)\Big]_{i,j=0}^1\\
    + \frac{1}{(2\pi\im)^2} \sum_{k = 1,2} \oint_{\gamma_1} \dw \oint_{\gamma_{0,1}} \frac{\dz}{z(z-w)}\frac{2}{1+2g_{\alpha,\beta}(w)} F_{\eps,l}(w)F_{\bar{\eps},1}(w)F_{\eps,\bar{l}}(w)F_{\eps,k}(z) \\
    \times \exp \left[ N\left(\psi_l(w;x,y) - \psi_k(z;x,y)\right) + \varphi_{\eps,l}(w;x_2,y_2) - \varphi_{\eps,k}(z;x_1,y_1) \right]
\end{multline}
We are using the same definitions of $\eps, \, \bar{\eps}, \, l,$ and $\bar{l}$ as defined below equation \eqref{eq:samesidecorker}. The saddle functions $\psi_k(z)$ are defined by equation \eqref{eq:saddlefunctionpw} and we define the functions $\varphi_{\eps,k}(z;x_1,y_1)$ as
\begin{equation}\label{eq:varphidef}
    \varphi_{\eps,k}(z;x_1,y_1) = y_1 \log z - x_1\log r_{\eps,k}(z)
\end{equation}
For brevity we let
\begin{multline} \label{eq:I2kdef}
    \Big[I^\eps_{l,k}(4m',2\xi'+j;4m,2\xi+i)\Big]_{i,j=0}^1 \\
    = \frac{1}{(2\pi\im)^2} \oint_{\gamma_1} \dw \oint_{\gamma_{0,1}} \frac{\dz}{z(z-w)}\frac{2}{1+2g_{\alpha,\beta}(w)} F_{\eps,l}(w)F_{\bar{\eps},1}(w)F_{\eps,\bar{l}}(w)F_{\eps,k}(z) \\
    \times \exp \left[ N\left(\psi_l(w;x,y) - \psi_k(z;x,y)\right) + \varphi_{\eps,l}(w;x_2,y_2) - \varphi_{\eps,k}(z;x_1,y_1) \right]
\end{multline}
for $k = 1, 2$. So we may write the kernel, compactly, as 
\begin{multline}
    \Big[\mathbb{K}_N(4m',2\xi'+j;4m,2\xi+i)\Big]_{i,j=0}^1 = \Big[\mathbb{K}_{TP,N}^{\eps}(4m',2\xi'+j;4m,2\xi+i)\Big]_{i,j=0}^1 \\
    + \sum_{k=1,2} \Big[I^\eps_{l,k}(4m',2\xi'+j;4m,2\xi+i)\Big]_{i,j=0}^1
\end{multline}
This work will focus on the asymptotic behavior on the $\alpha$-side of the interface, in other words when $m, \, m' < \nicefrac{N}{2}$. In this case the above equation becomes, 
\begin{multline}
    \Big[\mathbb{K}_N(4m',2\xi'+j;4m,2\xi+i)\Big]_{i,j=0}^1 = \Big[\mathbb{K}_{TP,N}^{\alpha}(4m',2\xi'+j;4m,2\xi+i)\Big]_{i,j=0}^1 \\
    + \sum_{k=1,2} \Big[I^\alpha_{2,k}(4m',2\xi'+j;4m,2\xi+i)\Big]_{i,j=0}^1
\end{multline}
Ultimately, our goal is to show that the contributions from the contour integrals $I^\alpha_{2,1}$ and $I^\alpha_{2,2}$ tend to $0$ as $N \to \infty$. This will justify the definition of the macroscopic boundary given in Definition \ref{def:macroscopicboundary}. The order of decay depends on the region of the model of the asymptotic coordinate. We will find that, in certain regions, these additional terms affect the second order term of the local asymptotics of $\mathbb{K}_N$. \\

Before we state the asymptotic results, we will give a brief justification on why we limit our analysis to the $\alpha$-side of the interface. The reason this is justified is due to a symmetry between the saddle functions $\psi_{\eps,1}(z)$ and $\psi_{\eps,2}(z)$. In particular, since $r_{\eps,1}(z) = r_{\eps,2}(z)^{-1}$, this implies 
$$\psi_{\eps,1}(z;x,y) = \psi_{\eps,2}(z;1-x,y)$$
which means there is a symmetry between the saddle functions when the lattice is flipped along the interface. This symmetry is not enough for equality between the contour integrals, but it is enough, along with the proofs detailed in Section \ref{section:asymptoticresults}, to conclude that the asymptotic behavior is the same. The difference in matrices still present in the integral only change a constant factor, not the rate of decay. If we let $m$, $m'$, $\xi$, $\xi'$ be defined by equations \eqref{eq:localm} and \eqref{eq:localxi} and $m,\,m' < N/2$, then $I^\eps_{2,2}(4m',2\xi'+j;4m,2\xi+i)$ has the same asymptotic behavior as $I^\eps_{1,2}(4(N-m'),2\xi'+j;4(N-m),2\xi+i)$. Additionally, $I^\eps_{2,1}(4m',2\xi'+j;4m,2\xi+i)$ has the same asymptotic behavior as $I^\eps_{1,1}(4(N-m'),2\xi'+j;4(N-m),2\xi+i)$. Thus below we only state asymptotic results for $I_{2,2}^\alpha$ and $I_{2,1}^\alpha$. The asymptotics on the $\beta$-side of the interface can be proven in an identical manner. 
\begin{remark}
    There is something interesting going on with regard to the symmetry of the model. There is not a discrete symmetry, like in the two-periodic model. One could try to switch parameters, flip the lattice, and then gauge transform but the interface becomes a problem and there exists no gauge transform. The is, however, a symmetry of the limit shape because the additional integrals (compared to the two-periodic model) behave the same, up to a scalar, asymptotically. 
\end{remark}
To examine all possible cases of model behavior we will consider the following two cases: (a) $0 < \beta < \alpha < 1$, and (b) $0 < \alpha < \beta < 1$. This is because the asymptotic behavior of $I_{2,2}^\alpha$ not only depends on the macroscopic regions of the model, but, if we are in case (b) above, it also depends on the region of the model in other ways. We restrict to the case of $x < \nicefrac{1}{2}$ and case (b) above. We can then consider the regions of the model where $(x,y)$ satisfies one of the following inequalities: 
\begin{enumerate}[label=(\roman*)]
    \item $\psi'_{\alpha,1}(-\beta^2;x,y) > 0$
    \item $\psi'_{\alpha,1}(-\beta^{-2};x,y) < 0$
\end{enumerate}
Since $\psi'_{\alpha,1}(z)$ is linear in $x$ and $y$, the inequalities just denote trapezoidal regions of the model. As we will show later in Lemma \ref{lem:psimax}, $\oRe \psi_{\alpha,1}(z)$ has exactly one maximum on the interval $[-\alpha^{-2},-\alpha^2]$, thus we conclude that no $(x,y)$ satisfies both inequalities. In the case of $\beta < \alpha$ we define no such region. The exact size and shape of these regions depends on the exact values of $\alpha$ and $\beta$. Figure \ref{fig:I22regions} gives examples of these regions for different values of $\alpha$ and $\beta$. 
\begin{figure}[h]
    \centering
    \rotatebox[origin=c]{180}{\includegraphics[scale=0.4]{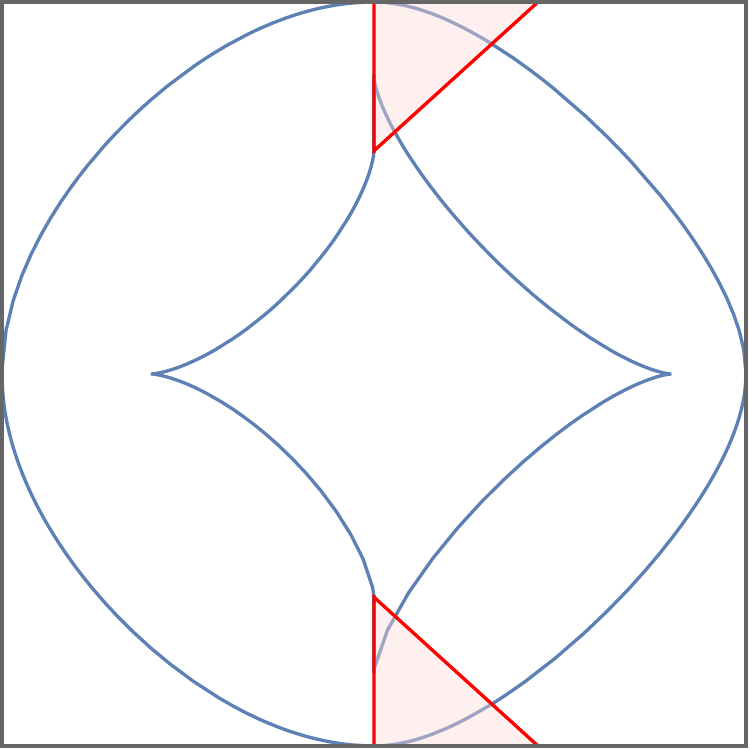}}
    \hspace{1cm}
    \rotatebox[origin=c]{180}{\includegraphics[scale=0.4]{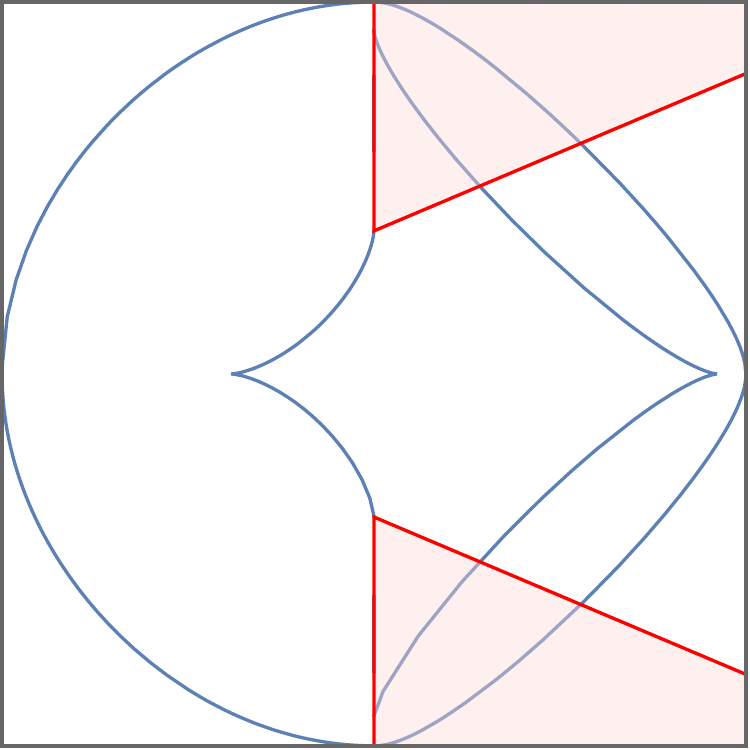}}
    \caption{Regions of the model where (i) and (ii) are satisfied are shown in red. We refer to these as the strong-coupling region.}
    \label{fig:I22regions}
\end{figure}
We will denote the asymptotic region where either (i) or (ii) is satisfied as the \textit{strong-coupling} region of the model, and denote it by $\mathfrak{C}$. As we will see below, the coupling between the two weightings of the model is strongest in this region. We state the following asymptotic results regarding the behavior of $I^\alpha_{2,2}$,
\begin{proposition} \label{prop:I22decay}
    Let $(m,\xi)$ and $(m',\xi')$ be given by equations \eqref{eq:localm} and \eqref{eq:localxi} such that $x < \nicefrac{1}{2}$. 
    \begin{enumerate}
        \item If $(x,y) \in \mathfrak{R}$, then 
        \begin{equation*}
            |I^\alpha_{2,2}(4m',2\xi'+j;4m,2\xi+i)| = c N^{-1} + o(N^{-1})
        \end{equation*}
        for some positive constant $c$.
        \item If $(x,y) \in \mathfrak{S}$ and $(x,y) \not\in \mathfrak{C}$, then 
        \begin{equation*}
            |I^\alpha_{2,2}(4m',2\xi'+j;4m,2\xi+i)| = ce^{-c_1N} + o(e^{-c_1N})
        \end{equation*}
        for some positive constants $c$ and $c_1$.
        \item If $(x,y) \in \mathfrak{S} \cap \mathfrak{C}$, then 
        \begin{equation*}
            |I^\alpha_{2,2}(4m',2\xi'+j;4m,2\xi+i)| = c N^{-1} + o(N^{-1})
        \end{equation*}
        for some positive constant $c$.
    \end{enumerate}
\end{proposition}
The exact constants can be extracted from the proof in Section \ref{section:asymptoticresults}. The behavior of the second double contour integral, $I^\alpha_{2,1}$, is less complicated as it does not depend on the macroscopic region of the model. We state the following proposition describing its behavior,
\begin{proposition} \label{prop:I21decay}
    Let $(m,\xi)$ and $(m',\xi')$ be given by equations \eqref{eq:localm} and \eqref{eq:localxi} such that $x < \nicefrac{1}{2}$, then 
    \begin{equation*}
        |I^\alpha_{2,1}(4m',2\xi'+j;4m,2\xi+i)| = c e^{-c_1N} + o(e^{-c_1N})
    \end{equation*}
    for some positive constants $c$ and $c_1$.
\end{proposition}
Moreover, the inequalities presented in the above propositions are optimal. Lastly, we should mention how these results combine with the local asymptotic behavior of the typical two-periodic kernel. The local asymptotics for the two-periodic Aztec diamond were initially stated in \cite[Theorem 2.6]{CJ16}. Let $\mathbb{K}_{FP}^\eps(\bar{p}_1;\bar{p}_2)$ denote the full plane correlation kernel for the infinite square lattice with the same two-periodic weighting as the two-periodic Aztec diamond described in Section \ref{section:twoperkernel}. The results from \cite[Theorem 2.6]{CJ16} and also \cite[Theorem 2.10]{DK21} state\footnote{The original results in \cite{CJ16} just focus along a diagonal of the model, the later results \cite{DK21} look at the entire smooth region.} that if $(x,y) \in \mathfrak{S}$ then, 
\begin{equation}
    \mathbb{K}^\eps_{TP,N}(4m',2\xi'+j;4m,2\xi+i) = \mathbb{K}_{FP}^\eps(x_1,y_1;x_2,y_2) + ce^{-c_1N} + O(e^{-c_1N}),
\end{equation}
where $c$ is a non-zero constant. However, in the case of the split two-periodic Asymptotic we see a difference in the sub leading order term. In particular, we wist to highlight that if $(x,y) \in \mathfrak{S} \cap \mathfrak{C}$ then
\begin{equation}
    \mathbb{K}^\eps_{TP,N}(4m',2\xi'+j;4m,2\xi+i) = \mathbb{K}_{FP}^\eps(x_1,y_1;x_2,y_2) + cN^{-1} + O(N^{-1}),
\end{equation}
where $c$ is again a non-zero constant. In the region $\mathfrak{S} \cap \mathfrak{C}$, not only do the additional integrals contribute to the sub-leading order term, but they change the decay of it significantly. Note that this does not happen in $\mathfrak{R} \cap \mathfrak{C}$, because in the rough region the two-periodic kernel converges to the related full plane kernel with an error of $N^{-1/2}$.

%%%%%%%%%%%%%%%%%%%%%%%%%%%%%%%%%%%%%%%%%%%%%%%%%%%%%%%%%%%%%%%%%%%%%%%%%%%%%%%%%%%%%%%%%%%%%%%%%%%%%%%%%%%%%%%%%%%%%%%%%%%%%%%%%%%%%%%%%%%%%%%%%%%%%%%%%%%%%
\section{Derivation of the Correlation Kernel} \label{section:proofofkernel1}
Our main goal in this section is to prove the results stated in Theorem \ref{theorem:kernel}. In order to prove Theorem \ref{theorem:kernel}, we start in Section \ref{section:intermediatestatement} by stating an intermediate version of the kernel which can be derived by using techniques from \cite{BD19}. The form of the kernel differs slightly depending on whether the coordinate $(4m',2\xi'+j)$ is on the $\alpha$-side or $\beta$-side of the interface. In Section \ref{section:mlessN/2proof}, we prove the statement of the kernel for $m' \leq \nicefrac{N}{2}$ and in Section \ref{section:m>N/2}, we prove the statement of the kernel for $m' > \nicefrac{N}{2}$.

\subsection{An Intermediate Correlation Kernel} \label{section:intermediatestatement}
We start by introducing some more necessary notation. We define,  
\begin{equation} \label{eq:phiNprod}
    \phi_N(z) = \Phi_\alpha(z)^\nhalf\Phi_\beta(z)^\nhalf
\end{equation}
and its eigen-decomposition,
\begin{equation} \label{eq:phiNdecomp}
    \phi_N(z) = r_{1,N}(z)F_{1,N}(z) + r_{2,N}(z)F_{2,N}(z).
\end{equation}
The explicit formulas for the eigenvalues and matrices are quite involved. We can, however, write the trace of $\phi_N(z)$ in a rather succinct way. We let,
\begin{multline} \label{eq:trace}
    t_N(z) = \text{tr } \phi_N(z) = \left(\frac{1}{2} + g_{\alpha,\beta}(z)\right)\left(r_{\alpha,1}(z)^{\nhalf}r_{\beta,1}(z)^{\nhalf} + r_{\alpha,2}(z)^{\nhalf}r_{\beta,2}(z)^{\nhalf}\right) \\
    +\left(\frac{1}{2} - g_{\alpha,\beta}(z)\right)\left(r_{\alpha,2}(z)^{\nhalf}r_{\beta,1}(z)^{\nhalf} + r_{\alpha,1}(z)^{\nhalf}r_{\beta,2}(z)^{\nhalf}\right) 
\end{multline}
which follows from the formulas in \eqref{eq:f1}, \eqref{eq:f2}, \eqref{eq:g}. Since $\det \phi_N(z) = 1$, the eigenvalues of $\phi_N(z)$ can be expressed as, 
\begin{align}
    r_{1,N}(z) = \frac{1}{2}\left(t_N(z) + \sqrt{t_N(z)^2-4}\right) && r_{2,N}(z) = \frac{1}{2}\left(t_N(z) - \sqrt{t_N(z)^2-4}\right)
\end{align}
If $E_N(z)$ is the eigenvector matrix that respects this ordering of eigenvalues, then $F_{k,N}(z)$ are the matrices, 
\begin{align}
    F_{1,N}(z) = E_N(z)\begin{pmatrix}1&0\\0&0\end{pmatrix}E_N(z)^{-1} && F_{2,N}(z) = E_N(z)\begin{pmatrix}0&0\\0&1\end{pmatrix}E_N(z)^{-1}
\end{align}
We now may state, 
\begin{lemma} \label{lemma:BDstuff}
   Let $0 < m < N$, and $-N \leq \xi, \xi' \leq -1$. The split two-periodic Aztec diamond of size $2N$ has a correlation kernel given by,
    \begin{multline}
        \Big[\mathbb{K}_N(4m',2\xi'+j;4m,2\xi+i)\Big]_{i,j=0}^1 = -\frac{\mathbb{I}_{m>m'}}{2 \pi \text{\emph{i}}} \oint_{\gamma_{0,1}} \frac{\dz}{z} z^{\xi'-\xi} \Phi_{\alpha}(z)^{\frac{N}{2}-m'}\Phi_{\eps}(z)^{m-\frac{N}{2}} \\
        + \frac{1}{(2\pi\text{\emph{i}})^2} \oint_{\gamma_1} \dw \oint_{\gamma_{0,1}} \frac{\dz}{z(z-w)} \frac{w^{\xi'+N}(z-1)^{N}}{z^{\xi+N}(w-1)^N} \Phi_\alpha(w)^{-m'}F_{1,N}(w)\Phi_\alpha(w)^{\frac{N}{2}}\Phi_\eps(z)^{m-\frac{N}{2}}
    \end{multline}
    when $0 < m' \leq \nicefrac{N}{2}$. Alternatively, when $\nicefrac{N}{2} < m' <N$, the correlation kernel is
    \begin{multline}
        \Big[\mathbb{K}_N(4m',2\xi'+j;4m,2\xi+i)\Big]_{i,j=0}^1 = -\frac{\mathbb{I}_{m>m'}}{2 \pi \imt} \oint_{\gamma_{0,1}} \frac{\dz}{z} z^{\xi'-\xi} \Phi_{\alpha}(z)^{\frac{N}{2}-m'}\Phi_{\eps}(z)^{m-\frac{N}{2}} \\
        + \frac{1}{(2\pi\imt)^2} \oint_{\gamma_1} \dw \oint_{\gamma_{0,1}} \frac{\dz}{z(z-w)} \frac{w^{\xi'+N}(z-1)^{N}}{z^{\xi+N}(w-1)^N} \Phi_{\beta}(w)^{\nhalf-m'}\Phi_{\alpha}(w)^{-\nhalf}F_{1,N}(w)\Phi_\alpha(w)^{\frac{N}{2}}\Phi_\eps(z)^{m-\frac{N}{2}}
    \end{multline}
    The contours $\gamma_1$ and $\gamma_{0,1}$ are the same contours detailed in Theorem \ref{theorem:kernel}.
\end{lemma}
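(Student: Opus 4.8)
The plan is to realize the split two-periodic Aztec diamond as a determinantal non-intersecting paths ensemble and then apply the Eynard--Mehta plus Wiener--Hopf machinery of \cite{BD19}; the only genuinely new feature is that the underlying matrix ``symbol'' is no longer constant in time but switches from $\Phi_\alpha$ to $\Phi_\beta$ across the interface, and the bulk of the work is organizing the factorization around the non-periodic product $\phi_N(z)=\Phi_\alpha(z)^{N/2}\Phi_\beta(z)^{N/2}$.

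First I would invoke the bijection set up in Section~\ref{section:BDstuff}, whose details I take for granted here: the dimer configurations of the size-$2N$ Aztec diamond, in the transformed coordinates of Section~\ref{section:modeldef}, are in weight-preserving bijection with the configurations of a non-intersecting paths process whose elementary one-step transition matrices are precisely the $2\times 2$ blocks $\phi_{\eps,1}(z),\phi_{\eps,2}(z),\phi_3(z),\phi_4(z)$, with $\eps=\alpha$ on the steps produced by columns with $w_x<2N$ and $\eps=\beta$ on those with $w_x>2N$. Bundling four consecutive elementary steps, the transition over one block $4k\to 4(k+1)$ is $\Phi_\eps(z)$ with $\eps=\alpha$ for $k<N/2$ and $\eps=\beta$ for $k\ge N/2$, so the transition across the whole strip is the time-inhomogeneous symbol $\phi_N(z)=\Phi_\alpha(z)^{N/2}\Phi_\beta(z)^{N/2}$ of \eqref{eq:phiNprod}; the initial and final conditions contribute the scalar factors $z^{\xi+N}$, $z^{\xi'+N}$, $(z-1)^N$, $(w-1)^N$ exactly as in the two-periodic derivation of \cite{BD19}.

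Next I would feed this data into the general correlation-kernel formula for such path ensembles from \cite{BD19}. After the Eynard--Mehta step the kernel is written through the inverse of a ``master matrix'' assembled from $\phi_N$ and the boundary data, and \cite{BD19} reduces the computation of that inverse to a Wiener--Hopf factorization of $\phi_N$ relative to the contour $\gamma_{0,1}$, i.e.\ a splitting of $\phi_N$ into a factor analytic and invertible inside $\gamma_{0,1}$ and one analytic and invertible outside. I would carry this out through the eigendecomposition $\phi_N(z)=r_{1,N}(z)F_{1,N}(z)+r_{2,N}(z)F_{2,N}(z)$ of \eqref{eq:phiNdecomp}: the trace formula \eqref{eq:trace} together with $\det\phi_N\equiv 1$ determines $r_{1,N},r_{2,N}$ and, more importantly, shows that $r_{1,N}$ (which to leading order behaves like $r_{\alpha,1}(z)^{N/2}r_{\beta,1}(z)^{N/2}$) carries the winding and the pole structure at $z=0,1$, while $r_{2,N}=r_{1,N}^{-1}$ has the complementary behavior; this is what forces the rank-one projection $F_{1,N}(w)$ to be the only piece surviving in the double-integral term, sandwiched as in the statement.

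Finally I would unfold the factorization in the coordinates $4m',4m$. For $m'\le N/2$ the vertex at path-time $4m'$ lies in the $\alpha$-region: going back from it to time $0$ produces $\Phi_\alpha(w)^{-m'}$, the leftover $\alpha$-part of the symbol produces the $\Phi_\alpha(w)^{N/2}$ sitting to the right of $F_{1,N}(w)$, and the evolution near time $4m$ produces $\Phi_\eps(z)^{m-N/2}$ with $\eps$ determined by the side of $4m$; this gives the first displayed formula, while the single-integral term is the bare propagator of the path ensemble, read off directly from the elementary transitions (and nonzero only when $m'<m$). For $N/2<m'<N$ the vertex at time $4m'$ is in the $\beta$-region, so walking back to time $0$ first traverses part of the $\beta$-block, contributing $\Phi_\beta(w)^{N/2-m'}$, and then must undo the full $\alpha$-block, contributing $\Phi_\alpha(w)^{-N/2}$, which produces the second displayed formula. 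I expect the main obstacle to be precisely the Wiener--Hopf step: one must check that the factorization of the time-inhomogeneous symbol $\phi_N=\Phi_\alpha^{N/2}\Phi_\beta^{N/2}$ exists with the correct partial index, that $r_{1,N}$ genuinely has the winding number making the inside/outside factors analytic and invertible, and that the $r_{2,N}$-component may be discarded on the relevant side; granting that, the remaining work is the routine bookkeeping of the scalar $z$- and $(z-1)$-factors, which follows \cite{BD19} essentially verbatim.
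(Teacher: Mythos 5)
Your overall route is the same as the paper's: pass to the non-intersecting paths model of \cite{BD19}, apply their Eynard--Mehta/Wiener--Hopf kernel with the inhomogeneous symbol $\phi_N(z)=\Phi_\alpha(z)^{N/2}\Phi_\beta(z)^{N/2}$, insert the eigendecomposition \eqref{eq:phiNdecomp}, and argue that only the $F_{1,N}$ projection survives, sandwiched by the appropriate powers of $\Phi_\alpha$, $\Phi_\beta$. However, there is a genuine gap, and it sits exactly where you write ``granting that.'' First, \cite[Theorem 3.1]{BD19} cannot be applied directly to the weights of the statement: the blocks $\phi_{\eps,2}$, $\phi_4$ give $\Phi_\eps(z)$ a pole at $z=1$, i.e.\ a singularity on the unit circle, which that theorem excludes. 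The paper therefore introduces a biased weighting with a parameter $a<1$ (symbols $\Phi_{\eps,a}$, $\phi_{a,N}=\Phi_{\alpha,a}^{N/2}\Phi_{\beta,a}^{N/2}$, singularities at $w=a^{\pm2}$), applies \cite{BD19} there, and only at the end sends $a\to1$. The limit is not automatic: in the raw kernel the $w$-integrand has singularities at $a^{2}$ and $a^{-2}$ approaching the contour $\gamma_a$ from opposite sides, so one must first rewrite $\Phi_{\alpha,a}(w)^{N/2-m'}\Phi_{\beta,a}(w)^{N/2}\tilde{\phi}_{+,a,N}(w)^{-1}$ using the spectral projections and $\tilde{\phi}_{+,a,N}^{-1}=\phi_{a,N}^{-1}\tilde{\phi}_{-,a,N}$ so that all remaining singularities lie inside $\gamma_a$; it is this rearrangement, together with the explicit identity $\tilde{\phi}_{-}(w)=(1-w^{-1})^{-N}\Phi_\alpha(w)^{N/2}$, that produces the sandwiched forms $\Phi_\alpha^{-m'}F_{1,N}\Phi_\alpha^{N/2}$ and $\Phi_\beta^{N/2-m'}\Phi_\alpha^{-N/2}F_{1,N}\Phi_\alpha^{N/2}$ in Lemma \ref{lemma:BDstuff}. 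Your proposal never confronts the on-circle singularity, so as written the very first application of \cite{BD19} is not justified.

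Second, the facts you defer -- that the $r_{2,N}$-component can be discarded and that the $F_{1,N}$ piece survives with the stated sandwiching -- are not a soft winding-number observation but the substantive content of the proof. In the paper they follow from a residue argument on the small contour $\gamma_a$, which in turn requires knowing that $r_{a,1,N}$ has a pole of order $2N$ at $w=a^2$ and is analytic and non-zero at $a^{-2}$, that $r_{a,2,N}$ has the complementary behavior, and that the projections $F_{a,1,N}$, $F_{a,2,N}$ are analytic at $w=a^{\pm2}$ (Lemmas \ref{lem:1} and \ref{lem:FNanalyticGEN}, proved via expansions of $\operatorname{tr}\phi_{a,N}$ and $\det\phi_{a,N}$). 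Without these verifications, and without the regularization-and-limit device above, your outline reproduces the skeleton of the argument but omits the steps that make it a proof.
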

The proof of the above lemma is deferred to Section \ref{section:BDstuff}. While we now have a correlation kernel for the process, this kernel is not set up well for asymptotic analysis. The main part of the proof of Theorem \ref{theorem:kernel} manipulates the above kernel in a way that simplifies the analysis. 
%%%%%%%%%%%%%%%%%%%%%%%%%%%%%%%%%%%%%%%%%%%%%%%%%%%%%%%%%%%%%%%%%%%%%%%%%%%%%%%%%%
\subsection{Proof of Theorem \ref{theorem:kernel} when $0 < m' \leq \nicefrac{N}{2}$} \label{section:mlessN/2proof}
First recall that we can decompose the matrix $\Phi_{\alpha}(w)$ in the following manner,
$$\Phi_{\alpha}(w) = r_{\alpha,1}(w)F_{\alpha,1}(w)+r_{\alpha,2}(w)F_{\alpha,2}(w)$$
where $r_{\alpha,k}(w)$ and $F_{\alpha,k}(w)$ are given by equations \eqref{eq:r1}-\eqref{eq:f2}. Plugging this formula into the double contour integral from Lemma \ref{lemma:BDstuff} gives,
\begin{multline} \label{eq:intermediatekernel}
    \Big[\mathbb{K}_N(4m',2\xi'+j;4m,2\xi+i)\Big]_{i,j=0}^1 = -\frac{\mathbb{I}_{m>m'}}{2 \pi \text{\emph{i}}} \oint_{\gamma_{0,1}} \frac{\dz}{z} z^{\xi'-\xi} \Phi_{\alpha}(z)^{\frac{N}{2}-m'}\Phi_{\eps}(z)^{m-\frac{N}{2}} \\
    + \frac{1}{(2\pi\im)^2} \sum_{k_1 = 1,2} \sum_{k_2 = 1,2} \oint_{\gamma_1} \dw \oint_{\gamma_{0,1}} \frac{\dz}{z(z-w)} \frac{w^{\xi'+N}(z-1)^{N}}{z^{\xi+N}(w-1)^N} r_{\alpha,k_1}(w)^{-m'}r_{\alpha,k_2}(w)^{\frac{N}{2}}\\
    \times F_{\alpha,k_1}(w)F_{1,N}(w)F_{\alpha,k_2}(w)\Phi_\eps(z)^{m-\frac{N}{2}}
\end{multline}
Our goal is to replace the matrix product $F_{\alpha,k_1}(w)F_{1,N}(w)$, as the matrix $F_{1,N}(w)$ is hard to state precisely and thus is not amenable to asymptotic analysis. We start by rearranging equation \eqref{eq:phiNdecomp},
\begin{align}
    F_{1,N}(w) &= r_{1,N}(w)^{-1}\phi_N(w) - r_{1,N}(w)^{-1}r_{2,N}(w)F_{2,N}(w) \label{eq:F1nexp} \\ 
    &= r_{2,N}(w)\phi_N(w) - r_{2,N}(w)^2F_{2,N}(w) \nonumber
\end{align}
Above we have utilized the fact that $r_{1,N}(w) = r_{2,N}(w)^{-1}$. Now we plug the eigen-decompositions of $\Phi_\alpha(w)$ and $\Phi_\beta(w)$ into equation \eqref{eq:phiNprod} to obtain,
\begin{equation*} 
    \phi_N(w) = \Big(r_{\alpha,1}(w)^\nhalf F_{\alpha,1}(w) +  r_{\alpha,2}(w)^\nhalf F_{\alpha,2}(w) \Big) \Big(r_{\beta,1}(w)^\nhalf F_{\beta,1}(w) +  r_{\beta,2}(w)^\nhalf F_{\beta,2}(w) \Big).
\end{equation*}
This means that 
\begin{equation}
    F_{\alpha,k_1}(w)\phi_N(w) = r_{\alpha,k_1}(w)^\nhalf F_{\alpha,k_1}(w)\Big(r_{\beta,1}(w)^\nhalf F_{\beta,1}(w) +  r_{\beta,2}(w)^\nhalf F_{\beta,2}(w) \Big).
\end{equation}
Above we used the fact that $F_{\alpha,1}(z)^2 = F_{\alpha,1}(z)$ and $F_{\alpha,1}(z)F_{\alpha,2}(z) = 0$. Combining the above with equation \eqref{eq:F1nexp} gives, 
\begin{multline} \label{eq:f1nreplacement1}
    F_{\alpha,1}(w)F_{1,N}(w) = r_{2,N}(w)r_{\alpha,1}(w)^\nhalf r_{\beta,1}(w)^\nhalf F_{\alpha,1}(w)F_{\beta,1}(w) \\
    + r_{2,N}(w)r_{\alpha,1}(w)^\nhalf r_{\beta,2}(w)^\nhalf F_{\alpha,1}(w)F_{\beta,2}(w) - r_{2,N}(w)^2F_{\alpha,1}(w)F_{2,N}(w) 
\end{multline}
and
\begin{multline} \label{eq:f1nreplacement2}
    F_{\alpha,2}(w)F_{1,N}(w) = r_{2,N}(w)r_{\alpha,2}(w)^\nhalf r_{\beta,1}(w)^\nhalf F_{\alpha,2}(w)F_{\beta,1}(w) \\
    + r_{2,N}(w)r_{\alpha,2}(w)^\nhalf r_{\beta,2}(w)^\nhalf F_{\alpha,2}(w)F_{\beta,2}(w) - r_{2,N}(w)^2F_{\alpha,2}(w)F_{2,N}(w) 
\end{multline}
We can now insert these replacements into the double contour integrals of the correlation kernel. There are a total of four double contour integrals in equation \eqref{eq:intermediatekernel}, each having the form, 
\begin{equation*}
    \oint_{\gamma_1} \oint_{\gamma_{0,1}} \frac{\dz\dw}{z(z-w)} \frac{w^{\xi'+N}(z-1)^{N}}{z^{\xi+N}(w-1)^N} r_{\alpha,k_1}(w)^{-m'}r_{\alpha,k_2}(w)^{\frac{N}{2}}F_{\alpha,k_1}(w)F_{1,N}(w)F_{\alpha,k_2}(w)F_{\eps,k}(z)\Phi_\eps(z)^{m-\frac{N}{2}}
\end{equation*}
To prevent our text from getting long, let's focus on only the following product
\begin{equation} \label{eq:1singularityterms}
    \frac{1}{(w-1)^N}r_{\alpha,k_1}(w)^{-m'}r_{\alpha,k_2}(w)^{\nhalf}F_{\alpha,k_1}(w)F_{1,N}(w)
\end{equation}
This product contains all the terms relevant to the singularity (or zero) at $w=1$. We would like to keep track of the order of this singularity/zero, as it is the only potentially non-analytic point inside $\gamma_1$. We will consider the expanded terms of this product when we substitute in equations \eqref{eq:f1nreplacement1} and \eqref{eq:f1nreplacement2}. First, consider the product in \eqref{eq:1singularityterms} when $k_1=k_2=1$,
$$\frac{1}{(w-1)^N}r_{\alpha,k_1}(w)^{-m'}r_{\alpha,k_2}(w)^{\frac{N}{2}}F_{\alpha,k_1}(w)F_{1,N}(w) = \frac{r_{\alpha,1}(w)^{\nhalf-m'}}{(w-1)^N} F_{\alpha,1}(w)F_{1,N}(w)$$
Now using the replacement in \eqref{eq:f1nreplacement1} we get
\begin{multline}
    \frac{r_{\alpha,1}(w)^{\nhalf-m'}}{(w-1)^N} F_{\alpha,1}(w)F_{1,N}(w) = \frac{r_{2,N}(w)r_{\alpha,1}(w)^{N-m'} r_{\beta,1}(w)^\nhalf}{(w-1)^N} F_{\alpha,1}(w)F_{\beta,1}(w) \\
    + \frac{r_{2,N}(w)r_{\alpha,1}(w)^{N-m'} r_{\beta,2}(w)^\nhalf}{(w-1)^N} F_{\alpha,1}(w)F_{\beta,2}(w) - \frac{r_{2,N}(w)^2r_{\alpha,1}(w)^{\nhalf-m'}}{(w-1)^N}F_{\alpha,1}(w)F_{2,N}(w)
\end{multline}
So we can consider each of these three terms, and their behavior at $w=1$, separately. The scalar terms,
$$\frac{r_{2,N}(w)r_{\alpha,1}(w)^{N-m'} r_{\beta,2}(w)^\nhalf}{(w-1)^N}$$
and 
$$\frac{r_{2,N}(w)^2r_{\alpha,1}(w)^{\nhalf-m'}}{(w-1)^N}$$
have no singularity at $w = 1$. To justify this statement we need the following two lemmas,
\begin{lemma} \label{lem:rpole}
    The following statements are true regarding the eigenvalues $r_{\eps,1}(z)$ and $r_{\eps,2}(z)$:
    \begin{enumerate}
        \item For any $\eps \in (0,1)$, $r_{\eps,1}(z)$ has a pole of order 2 at $z = 1$ and $r_{\eps,2}(z)$ has a zero of order two at $z = 1$.
        \item For any $\eps \in (0,1)$ and any $z \in \mathbb{C}\backslash \{1\}$, $r_{\eps,1}(z)r_{\eps,2}(z) = 1$.
    \end{enumerate}
\end{lemma}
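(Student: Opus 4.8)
The plan is to read both assertions off the explicit formulas \eqref{eq:r1}--\eqref{eq:r2}, proving the product identity (2) first and then bootstrapping the pole/zero statement (1) from it, which avoids having to analyze the numerator of $r_{\eps,2}$ directly.

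For part (2), I would give the determinant argument. One computes $\det\phi_{\eps,1}(z) = 1-z^{-1}$, $\det\phi_{\eps,2}(z) = (1-z^{-1})^{-1}$, $\det\phi_3(z) = 1-z^{-1}$ and $\det\phi_4(z) = (1-z^{-1})^{-1}$, so that $\det\Phi_\eps(z)$ is the product of these four factors, which telescopes to $1$. Since $r_{\eps,1}(z)$ and $r_{\eps,2}(z)$ are the two eigenvalues of the $2\times2$ matrix $\Phi_\eps(z)$, their product equals $\det\Phi_\eps(z) = 1$ at every $z$ where $\Phi_\eps(z)$ is defined, i.e.\ for $z\in\mathbb{C}\setminus\{1\}$ (the only singularity of the $\phi$-factors). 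Equivalently — and this is the version I would actually write, since it sidesteps any concern about coincident eigenvalues or branch choices — one multiplies \eqref{eq:r1} by \eqref{eq:r2}; the square roots cancel by difference of squares, giving
\[ r_{\eps,1}(z)\,r_{\eps,2}(z) = \frac{\big((z+1)^2 + 2z(\eps^2+\eps^{-2})\big)^2 - 4(\eps+\eps^{-1})^2\, z(z+\eps^2)(z+\eps^{-2})}{(z-1)^4}, \]
where I have used $z^3 + (\eps^2+\eps^{-2})z^2 + z = z(z+\eps^2)(z+\eps^{-2})$, and then one checks the polynomial identity that the numerator equals $(z-1)^4$ (a routine expansion, or comparison of values at a few points). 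This settles (2) for all $z\neq 1$.

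For part (1), write $r_{\eps,1}(z) = N_1(z)/(z-1)^2$ with $N_1(z) = (z+1)^2 + 2z(\eps^2+\eps^{-2}) + 2(\eps+\eps^{-1})\sqrt{z(z+\eps^2)(z+\eps^{-2})}$. The radicand vanishes only at $z=0,-\eps^2,-\eps^{-2}$, all of which lie at positive distance from $z=1$ when $\eps\in(0,1)$; hence the branch of the square root used in \eqref{eq:r1} (the one with $\sqrt{z(z+\eps^2)(z+\eps^{-2})}\to \eps+\eps^{-1}$ as $z\to1$, consistent with the branch appearing in \eqref{eq:f1}--\eqref{eq:f2}) is analytic near $z=1$, so $N_1$ is analytic there. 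Using $2+\eps^2+\eps^{-2} = (\eps+\eps^{-1})^2$ one evaluates $N_1(1) = 4 + 2(\eps^2+\eps^{-2}) + 2(\eps+\eps^{-1})^2 = 4(\eps+\eps^{-1})^2 \neq 0$. Therefore $r_{\eps,1}$ has a pole of order exactly $2$ at $z=1$, and by part (2), $r_{\eps,2}(z) = r_{\eps,1}(z)^{-1}$, so $r_{\eps,2}$ has a zero of order exactly $2$ there.

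The only genuine subtlety — the "hard part", such as it is — is the bookkeeping of the square root branch: one must check that the $+$ sign in \eqref{eq:r1} is the one attached to the numerator that remains nonzero at $z=1$ (the $-$ sign gives a numerator that does vanish at $z=1$), and that this agrees with the branch convention fixed in Section \ref{section:analysisofeigen}. Once that convention is pinned down, the argument is just the elementary evaluations above.
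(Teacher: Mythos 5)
Your proposal is correct, and it takes the same route the paper does: the paper simply remarks that Lemma \ref{lem:rpole} "follows from equations \eqref{eq:r1} and \eqref{eq:r2}," i.e.\ it is read off from the explicit formulas, which is exactly what you do (your numerator identity $\bigl((z+1)^2+2z(\eps^2+\eps^{-2})\bigr)^2-4(\eps+\eps^{-1})^2 z(z+\eps^2)(z+\eps^{-2})=(z-1)^4$ and the evaluation $N_1(1)=4(\eps+\eps^{-1})^2$ both check out). Your added determinant computation $\det\Phi_\eps(z)=1$ is a clean alternative justification of part (2) that the paper does not spell out, but it is in the same spirit and the whole argument matches the paper's intent.
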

\begin{lemma} \label{lem:rnpole}
    The eigenvalue $r_{1,N}(z)$ has a pole of order $2N$ at $z=1$ for all $N$ and the eigenvalue $r_{2,N}(z)$ has a zero of order $2N$ at $z=1$ for all $N$.
\end{lemma}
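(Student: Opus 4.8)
The plan is to deduce everything from the behaviour of the trace $t_N(z)=\operatorname{tr}\phi_N(z)$ near $z=1$. Since $\det\phi_N(z)\equiv1$ we have $r_{1,N}(z)r_{2,N}(z)\equiv1$, and $r_{1,N},r_{2,N}$ are the two roots of $\lambda^2-t_N(z)\lambda+1=0$; hence, once we know the order of the pole of $t_N$ at $z=1$, the two eigenvalues must be, in some order, a function with a pole of that order and a function with a zero of that order, and it only remains to check which root is $r_{1,N}$.

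The first step is to record the local behaviour at $z=1$ of the pieces entering \eqref{eq:trace}. By Lemma \ref{lem:rpole} we may write, near $z=1$, $r_{\eps,1}(z)=c_\eps(z-1)^{-2}(1+O(z-1))$ with $c_\eps\neq0$, and (by part (2)) $r_{\eps,2}(z)=c_\eps^{-1}(z-1)^{2}(1+O(z-1))$. Hence $r_{\eps,1}(z)^{N/2}$ is meromorphic near $z=1$ with a pole of order exactly $N$, $r_{\eps,2}(z)^{N/2}$ has a zero of order exactly $N$, and the mixed products $r_{\alpha,1}^{N/2}r_{\beta,2}^{N/2}$ and $r_{\alpha,2}^{N/2}r_{\beta,1}^{N/2}$ are holomorphic and nonzero at $z=1$. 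Next I would evaluate $g_{\alpha,\beta}$ at $z=1$ directly from \eqref{eq:g}: the numerator equals $2(1+\alpha^2)(1+\beta^2)$ and the denominator equals $4(1+\alpha^2)(1+\beta^2)$, so $g_{\alpha,\beta}(1)=\tfrac12$; moreover the radicand in \eqref{eq:g} is nonzero at $z=1$, so $g_{\alpha,\beta}$ is holomorphic there. Consequently $\tfrac12+g_{\alpha,\beta}$ is holomorphic and nonvanishing at $z=1$, while $\tfrac12-g_{\alpha,\beta}$ vanishes there.

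Plugging these facts into \eqref{eq:trace}, the summand $\bigl(\tfrac12+g_{\alpha,\beta}(z)\bigr)r_{\alpha,1}(z)^{N/2}r_{\beta,1}(z)^{N/2}$ has a pole of order exactly $2N$ with nonzero leading Laurent coefficient, the summand containing $r_{\alpha,2}^{N/2}r_{\beta,2}^{N/2}$ has a zero of order $2N$, and the remaining two summands are holomorphic at $z=1$; so no cancellation can lower the pole order and $t_N$ has a pole of order exactly $2N$. Writing $t_N(z)=(z-1)^{-2N}h(z)$ with $h$ holomorphic near $z=1$ and $h(1)\neq0$, we get $t_N(z)^2-4=(z-1)^{-4N}\bigl(h(z)^2-4(z-1)^{4N}\bigr)$, whose bracketed factor is holomorphic and equals $h(1)^2\neq0$ at $z=1$; thus $\sqrt{t_N(z)^2-4}=\pm(z-1)^{-2N}k(z)$ with $k$ holomorphic and nonvanishing near $z=1$ and $k(1)^2=h(1)^2$. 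For the branch on which $r_{1,N}=\tfrac12(t_N+\sqrt{t_N^2-4})$ is the eigenvalue that blows up at $z=1$ (equivalently $k(1)=h(1)$) we obtain $r_{1,N}(z)=\tfrac12(z-1)^{-2N}\bigl(h(z)+k(z)\bigr)$ with $h(1)+k(1)=2h(1)\neq0$, i.e.\ a pole of order exactly $2N$; the complementary branch is $r_{2,N}$, which then has a zero of order $2N$ because $r_{1,N}r_{2,N}\equiv1$.

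The points requiring care are (a) verifying $\bigl(\tfrac12+g_{\alpha,\beta}\bigr)(1)\neq0$, which is exactly what prevents the dominant summand of $t_N$ from losing pole order, and (b) the bookkeeping when $N$ is odd and $r_{\eps,k}(z)^{N/2}$ is a fractional power: here one uses that $r_{\eps,k}(z)$ equals $(z-1)^{\pm2}$ times a function holomorphic and nonzero at $z=1$, so its $(N/2)$-th power is again meromorphic near $z=1$ with the stated integer pole/zero order. No explicit formula for $r_{1,N}$ or $F_{1,N}$ is needed, only Lemma \ref{lem:rpole}, equation \eqref{eq:trace}, and the one-line evaluation $g_{\alpha,\beta}(1)=\tfrac12$.
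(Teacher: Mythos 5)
Your proposal is correct, and no step in it fails; in particular the evaluation $g_{\alpha,\beta}(1)=\tfrac12$ from \eqref{eq:g}, the observation that only the summand $\bigl(\tfrac12+g_{\alpha,\beta}\bigr)r_{\alpha,1}^{N/2}r_{\beta,1}^{N/2}$ of \eqref{eq:trace} carries a pole, and the use of $r_{1,N}r_{2,N}\equiv 1$ to convert the exact pole order of $t_N$ into the exact orders of the two eigenvalues are all sound, and your remark fixing the branch of $\sqrt{t_N^2-4}$ so that $r_{1,N}$ is the eigenvalue that blows up is the right way to resolve the labelling ambiguity in the definition of $r_{1,N}$. The route differs from the paper's stated derivation of this lemma: the paper declares Lemma \ref{lem:rnpole} a direct consequence of the series expansion in Lemma \ref{lem:r2Nexpansion}, whose leading term $c_{0,0}(z)\,r_{\alpha,2}(z)^{N/2}r_{\beta,2}(z)^{N/2}$ supplies the zero of order $2N$ of $r_{2,N}$ (and then the pole of $r_{1,N}$ via the unit determinant), whereas you work directly with the quadratic $\lambda^2-t_N\lambda+1=0$ and the local behaviour of $t_N$ at $z=1$. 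However, your method is essentially the one the paper itself uses in Section \ref{section:analysisofeigen} to prove the $a$-deformed generalization, Lemma \ref{lem:1}: there the orders at $w=a^{\pm 2}$ are read off from Laurent expansions of $\operatorname{tr}\phi_{a,N}$ and $\det\phi_{a,N}$, using precisely the fact that $g_{a,\alpha,\beta}=\tfrac12+O(w-a^{\pm2})$, of which your $g_{\alpha,\beta}(1)=\tfrac12$ is the $a=1$ instance. What your argument buys: it needs only Lemma \ref{lem:rpole}, equation \eqref{eq:trace}, and a one-line evaluation, so the lemma is available before (and independently of) the convergent double-series expansion, and your factorization $t_N^2-4=(z-1)^{-4N}\bigl(h^2-4(z-1)^{4N}\bigr)$ makes explicit why there is no branch point at $z=1$ and why the orders are exactly $2N$. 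What the paper's routes buy: Lemma \ref{lem:r2Nexpansion} is needed anyway for the kernel simplification, so the statement comes for free from it, and the Section \ref{section:analysisofeigen} version also covers $a\neq 1$, where $\det\phi_{a,N}\not\equiv 1$ and the two singular points $a^2$, $a^{-2}$ are distinct, which is what the intermediate kernel of Lemma \ref{lemma:BDstuff} actually requires.
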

Lemma \ref{lem:rpole} follows from equations \eqref{eq:r1} and \eqref{eq:r2}. Lemma \ref{lem:rnpole} is a direct consequence of Lemma \ref{lem:r2Nexpansion}, which is stated later in this section and proven in Section \ref{section:analysisofeigen}. We also make note of two more lemmas, 
\begin{lemma} \label{lem:Fanalytic}
    For any $\eps \in (0,1)$, the matrix valued functions $F_{\eps,1}(z)$ and $F_{\eps,2}(z)$ are analytic for all $z \in \mathbb{C} \backslash \{ (-\infty, -\eps^{-2}] \cup [-\eps^2,0]\}$.
\end{lemma}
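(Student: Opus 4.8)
The plan is to read off analyticity directly from the explicit formulas \eqref{eq:f1} and \eqref{eq:f2}. Both $F_{\eps,1}(z)$ and $F_{\eps,2}(z)$ are displayed as $2\times 2$ matrices whose entries are rational functions of $z$ divided by the single radical $\sqrt{z(z+\eps^2)(1+\eps^2 z)}$. Hence the only possible obstructions to analyticity are (a) the zeros of the denominator polynomial $z(z+\eps^2)(1+\eps^2 z)$, i.e.\ the points $z = 0,\ -\eps^2,\ -\eps^{-2}$, and (b) the branch locus of the square root. So the task reduces to choosing the branch of $\sqrt{z(z+\eps^2)(1+\eps^2 z)}$ and checking that, with that choice, the matrix entries extend analytically across $(-\infty,-\eps^{-2}]\cup[-\eps^2,0]$ is \emph{impossible} (these stay branch points/cuts) while everywhere else they are fine, including at $z=0,-\eps^2,-\eps^{-2}$ themselves if those are not endpoints — but here all three are endpoints of the two cuts, so no extra care is needed there.

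First I would fix the branch: the cubic $p(z) = z(z+\eps^2)(1+\eps^2 z) = \eps^2 z^3 + (1+\eps^4)z^2 + \eps^2 z$ has three real roots $0,\ -\eps^2,\ -\eps^{-2}$ (note $-\eps^2 > -\eps^{-2}$ since $0<\eps<1$), so $\sqrt{p(z)}$ has a single-valued analytic branch on $\mathbb{C}\setminus\big((-\infty,-\eps^{-2}]\cup[-\eps^2,0]\big)$ — this is the standard fact that a square root of a cubic with all-real roots is single-valued off the union of two intervals pairing the roots with $\infty$ and with each other. On that domain $\sqrt{p(z)}$ is nonvanishing, so dividing by it introduces no new singularities. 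Second, I would observe that each matrix entry in \eqref{eq:f1} and \eqref{eq:f2} is of the form (polynomial)$/\sqrt{p(z)}$ plus a constant, and polynomials are entire; therefore every entry is analytic on $\mathbb{C}\setminus\big((-\infty,-\eps^{-2}]\cup[-\eps^2,0]\big)$, which is exactly the claim.

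There is essentially no hard step here; the only thing requiring a sentence of justification is the branch-cut structure of $\sqrt{p(z)}$, and that is routine once one notes the roots are $0,-\eps^2,-\eps^{-2}$ and orders them on the real line. One could alternatively avoid writing down the cut at all by arguing structurally: $F_{\eps,1}(z)$ and $F_{\eps,2}(z)$ are the spectral projections of $\Phi_\eps(z)$, which is a rational matrix function, and spectral projections are analytic wherever the two eigenvalues $r_{\eps,1}(z), r_{\eps,2}(z)$ are distinct and finite; the eigenvalues coincide exactly when the discriminant $z^3 + (\eps^2+\eps^{-2})z^2+z = \eps^{-2}p(z)$ vanishes, i.e.\ at $0,-\eps^2,-\eps^{-2}$, and are branched precisely along the two intervals joining these points as above. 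Either route gives the stated domain, so I would present the direct computation from \eqref{eq:f1}–\eqref{eq:f2} as the cleanest and remark on the spectral-projection viewpoint.
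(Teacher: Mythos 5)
Your proposal is correct and matches the paper's approach: the paper disposes of Lemma \ref{lem:Fanalytic} with the single remark that it ``follows from equations \eqref{eq:f1} and \eqref{eq:f2}'', i.e.\ precisely the direct reading-off of analyticity from the explicit entries, with the only content being the branch structure of $\sqrt{z(z+\eps^2)(1+\eps^2z)}$, which you identify correctly.
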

\begin{lemma} \label{lem:FNanalytic}
    The matrix-valued function $F_{2,N}(z)$ has no pole at $z = 1$. 
\end{lemma}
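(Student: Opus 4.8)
The plan is to write the spectral projection $F_{2,N}(z)$ in a form that is manifestly analytic at $z=1$, using the three preceding lemmas. Since $r_{1,N}(z)\to\infty$ and $r_{2,N}(z)\to 0$ as $z\to 1$ (Lemma \ref{lem:rnpole}), the two eigenvalues of $\phi_N(z)$ are distinct in a punctured neighbourhood of $z=1$, and there the rank-one projection onto the $r_{2,N}$-eigenspace along the $r_{1,N}$-eigenspace is
$$F_{2,N}(z)=\frac{\phi_N(z)-r_{1,N}(z)\,I}{r_{2,N}(z)-r_{1,N}(z)}.$$
A priori this is only meromorphic near $z=1$, since $r_{1,N}$ has a pole of order $2N$. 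The key algebraic step is to clear that pole: using $r_{1,N}(z)\,r_{2,N}(z)=\det\phi_N(z)=1$, multiply numerator and denominator by $r_{2,N}(z)$ to obtain
$$F_{2,N}(z)=\frac{r_{2,N}(z)\,\phi_N(z)-I}{r_{2,N}(z)^2-1}.$$

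It then remains to check that the right-hand side is analytic at $z=1$ with non-vanishing denominator there. For the denominator, Lemma \ref{lem:rnpole} gives $r_{2,N}(z)\to 0$, so $r_{2,N}(z)^2-1\to-1$; in particular it is analytic and non-zero at $z=1$ (and $r_{2,N}(z)^2\ne 1$ nearby, so the projection formula is valid on a full punctured neighbourhood of $z=1$, away from eigenvalue collisions). For the numerator it suffices that $r_{2,N}(z)\,\phi_N(z)$ be analytic at $z=1$; since $r_{2,N}$ has a zero of order $2N$ there, I only need to bound the pole order of $\phi_N(z)=\Phi_\alpha(z)^{\nhalf}\Phi_\beta(z)^{\nhalf}$ by $2N$. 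For each $\eps\in\{\alpha,\beta\}$, the eigen-decomposition $\Phi_\eps(z)^{\nhalf}=r_{\eps,1}(z)^{\nhalf}F_{\eps,1}(z)+r_{\eps,2}(z)^{\nhalf}F_{\eps,2}(z)$, together with Lemma \ref{lem:rpole} (so $r_{\eps,1}(z)^{\nhalf}$ has a pole of order $N$ and $r_{\eps,2}(z)^{\nhalf}$ a zero of order $N$ at $z=1$) and Lemma \ref{lem:Fanalytic} (so $F_{\eps,1},F_{\eps,2}$ are analytic at $z=1$, which lies off the branch cuts $(-\infty,-\eps^{-2}]\cup[-\eps^2,0]$), shows $\Phi_\eps(z)^{\nhalf}$ has a pole of order at most $N$ at $z=1$; hence the product $\phi_N$ has a pole of order at most $2N$, and $r_{2,N}\phi_N$ is analytic at $z=1$.

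Putting these together, $F_{2,N}(z)$ is a ratio of a function analytic at $z=1$ by one analytic and non-vanishing at $z=1$, hence analytic at $z=1$; in particular it has no pole there, which is the claim. The argument is essentially algebraic, so I do not anticipate a genuine obstacle; the only points needing care are fixing consistent branches of the half-integer powers $r_{\eps,k}(z)^{\nhalf}$ near $z=1$ (harmless, since the zero/pole of $r_{\eps,k}$ at $z=1$ has even order $2$, so $r_{\eps,k}(z)^{\nhalf}$ is single-valued there) and the observation that multiplying through by $r_{2,N}$ converts the order-$2N$ pole of $r_{1,N}$, which otherwise dominates $r_{2,N}-r_{1,N}$, into the constant $1$. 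An alternative, more computational route would expand $r_{1,N}$, $r_{2,N}$ and $\phi_N$ to order $2N$ at $z=1$ (via Lemma \ref{lem:r2Nexpansion}) and verify that the singular parts of $\phi_N-r_{1,N}I$ and of $r_{2,N}-r_{1,N}$ cancel in the quotient; the cleared form above sidesteps that bookkeeping.
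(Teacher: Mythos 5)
Your proposal is correct. It rests on the same two pillars as the paper's argument — the eigen-decomposition $\phi_N=r_{1,N}F_{1,N}+r_{2,N}F_{2,N}$ with $F_{1,N}+F_{2,N}=I$, and the pole/zero bookkeeping of Lemmas \ref{lem:rpole}, \ref{lem:rnpole}, \ref{lem:Fanalytic} — but the mechanics differ. The paper does not solve for the projection explicitly; it proves the more general Lemma \ref{lem:FNanalyticGEN} (arbitrary $a$, both points $a^{\pm2}$) by a consistency-of-pole-orders argument: since $F_{a,1,N}+F_{a,2,N}=I$ forces the two projections to have poles of equal order, the pair of identities $\phi_{a,N}=r_{a,1,N}F_{a,1,N}+r_{a,2,N}F_{a,2,N}$ and $\phi_{a,N}^{-1}=r_{a,1,N}^{-1}F_{a,1,N}+r_{a,2,N}^{-1}F_{a,2,N}$, compared with the known pole orders of $\phi_{a,N}^{\pm1}$ and $r_{a,k,N}$, rules out any pole of the projections at $a^2$ and at $a^{-2}$; Lemma \ref{lem:FNanalytic} is then the specialization $a=1$. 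You instead write the resolvent-type formula $F_{2,N}=(\phi_N-r_{1,N}I)/(r_{2,N}-r_{1,N})$ and clear the order-$2N$ pole of $r_{1,N}$ using $r_{1,N}r_{2,N}=\det\phi_N=1$, arriving at $F_{2,N}=(r_{2,N}\phi_N-I)/(r_{2,N}^2-1)$, after which Lemma \ref{lem:rnpole} (zero of order $2N$ for $r_{2,N}$) and the bound $\mathrm{ord}_{z=1}$ of the pole of $\phi_N=\Phi_\alpha^{\nhalf}\Phi_\beta^{\nhalf}$ by $2N$ (via Lemmas \ref{lem:rpole} and \ref{lem:Fanalytic}) finish the job. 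What your route buys is a short, explicit, self-contained verification at exactly the point $z=1$ needed for this lemma, with the pole cancellation visible in a single formula; what it does not give is the general-$a$ statement (analyticity at both $a^2$ and $a^{-2}$), which the paper needs separately in the derivation of the intermediate kernel, and which your determinant identity would have to be adapted to, since $\det\phi_{a,N}\neq 1$ for $a\neq 1$ (there $r_{a,1,N}r_{a,2,N}=(1-a^{-2}w^{-1})^{2N}/(1-a^{2}w^{-1})^{2N}$, so the same trick still works but with an extra explicit factor). Your parenthetical care about the eigenvalues being distinct near $z=1$ and about the single-valuedness of the half-integer powers is appropriate and does not hide any gap.
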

Lemma \ref{lem:Fanalytic} follows from equations \eqref{eq:f1} and \eqref{eq:f2}. The proof of Lemma \ref{lem:FNanalytic} is deferred to Section \ref{section:analysisofeigen}. By the residue theorem we can make the following simplification,
\begin{multline}
    \oint_{\gamma_1} \oint_{\gamma_{0,1}} \frac{\dz\dw}{z(z-w)} \frac{w^{\xi'+N}(z-1)^{N}}{z^{\xi+N}(w-1)^N} r_{\alpha,1}(w)^{\nhalf-m'} F_{\alpha,1}(w)F_{1,N}(w)F_{\alpha,1}(w)\Phi_\eps(z)^{m-\frac{N}{2}}\\
    = \oint_{\gamma_1} \oint_{\gamma_{0,1}} \frac{\dz\dw}{z(z-w)} \frac{w^{\xi'+N}(z-1)^{N}}{z^{\xi+N}(w-1)^N}r_{2,N}(w)r_{\alpha,1}(w)^{N-m'} r_{\beta,1}(w)^\nhalf \\ 
    \times F_{\alpha,1}(w)F_{\beta,1}(w)F_{\alpha,1}(w)\Phi_\eps(z)^{m-\frac{N}{2}}
\end{multline}
We now inspect equation \eqref{eq:1singularityterms} for all the other possible combinations of $k_1$ and $k_2$. For each case, we replace the matrix product $F_{\alpha,k_1}(w)F_{1,N}(w)$ with either equation \eqref{eq:f1nreplacement1} or \eqref{eq:f1nreplacement2}. When $k_1=1$ and $k_2 = 2$ we obtain,
\begin{multline}
    \frac{r_{\alpha,1}(w)^{-\nhalf-m'} }{(w-1)^N}F_{\alpha,1}(w)F_{1,N}(w) = \frac{r_{2,N}(w)r_{\alpha,1}(w)^{-m'} r_{\beta,1}(w)^\nhalf}{(w-1)^N} F_{\alpha,1}(w)F_{\beta,1}(w) \\
    + \frac{r_{2,N}(w)r_{\alpha,1}(w)^{-m'} r_{\beta,2}(w)^\nhalf}{(w-1)^N} F_{\alpha,1}(w)F_{\beta,2}(w) - \frac{r_{2,N}(w)^2r_{\alpha,1}(w)^{-\nhalf-m'}}{(w-1)^N}F_{\alpha,1}(w)F_{2,N}(w)
\end{multline}
All of the above terms have no pole at $w=1$. Thus,
\begin{equation*}
    \oint_{\gamma_1} \dw \oint_{\gamma_{0,1}} \frac{\dz}{z(z-w)} \frac{w^{\xi'+N}(z-1)^{N}}{z^{\xi+N}(w-1)^N} r_{\alpha,1}(w)^{-\nhalf-m'} F_{\alpha,1}(w)F_{1,N}(w)F_{\alpha,2}(w)\Phi_\eps(z)^{m-\frac{N}{2}} = 0
\end{equation*}
For $k_1=2$ and $k_2=1$ the product in \eqref{eq:1singularityterms} is, 
\begin{multline}
    \frac{r_{\alpha,1}(w)^{\nhalf+m'} }{(w-1)^N}F_{\alpha,2}(w)F_{1,N}(w) = \frac{r_{2,N}(w)r_{\alpha,1}(w)^{m'} r_{\beta,1}(w)^\nhalf}{(w-1)^N} F_{\alpha,2}(w)F_{\beta,1}(w) \\
    + \frac{r_{2,N}(w)r_{\alpha,1}(w)^{m'} r_{\beta,2}(w)^\nhalf}{(w-1)^N} F_{\alpha,2}(w)F_{\beta,1}(w) - \frac{r_{2,N}(w)^2r_{\alpha,1}(w)^{\nhalf+m'}}{(w-1)^N}F_{\alpha,2}(w)F_{2,N}(w)
\end{multline}
In this case, the second and third terms have no pole at $w=1$, so only the first term contribution non-trivially to the contour integral. Lastly we check equation \eqref{eq:1singularityterms} for the case $k_1=k_2=2$,
\begin{multline}
    \frac{r_{\alpha,2}(w)^{\nhalf-m'} }{(w-1)^N}F_{\alpha,2}(w)F_{1,N}(w) = \frac{r_{2,N}(w)r_{\alpha,2}(w)^{N-m'} r_{\beta,2}(w)^\nhalf}{(w-1)^N} F_{\alpha,2}(w)F_{\beta,1}(w) \\
    + \frac{r_{2,N}(w)r_{\alpha,2}(w)^{N-m'} r_{\beta,2}(w)^\nhalf}{(w-1)^N} F_{\alpha,2}(w)F_{\beta,1}(w) - \frac{r_{2,N}(w)^2r_{\alpha,2}(w)^{\nhalf-m'}}{(w-1)^N}F_{\alpha,2}(w)F_{2,N}(w)
\end{multline}
All three terms have no singularity at $w=1$, so the resulting double contour integral is trivial. This leaves us with the simplified kernel,
\begin{multline}
    \Big[\mathbb{K}_N(4m',2\xi'+j;4m,2\xi+i)\Big]_{i,j=0}^1 = -\frac{\mathbb{I}_{m>m'}}{2 \pi \im} \oint_{\gamma_{0,1}} \frac{\dz}{z} z^{\xi'-\xi} \Phi_{\alpha}(z)^{\frac{N}{2}-m'}\Phi_{\eps}(z)^{m-\frac{N}{2}} \\
    + \frac{1}{(2\pi\im)^2} \oint_{\gamma_1} \dw \oint_{\gamma_{0,1}} \frac{\dz}{z(z-w)} \frac{w^{\xi'+N}(z-1)^{N}}{z^{\xi+N}(w-1)^N} r_{2,N}(w)r_{\alpha,1}(w)^{N-m'}r_{\beta,1}(w)^\nhalf \\
    \times F_{\alpha,1}(w)F_{\beta,1}(w)F_{\alpha,1}(w)\Phi_\eps(z)^{m-\frac{N}{2}} \\
    + \frac{1}{(2\pi\im)^2} \oint_{\gamma_1} \dw \oint_{\gamma_{0,1}} \frac{\dz}{z(z-w)} \frac{w^{\xi'+N}(z-1)^{N}}{z^{\xi+N}(w-1)^N} r_{2,N}(w)r_{\alpha,1}(w)^{m'} r_{\beta,1}(w)^\nhalf  \\
    \times F_{\alpha,2}(w)F_{\beta,1}(w)F_{\alpha,1}(w)\Phi_\eps(z)^{m-\frac{N}{2}}
\end{multline}
Next, we use a similar procedure to replace the term $r_{2,N}(w)$ in the double contour integrals. We need an expansion of $r_{2,N}(w)$ in terms of the eigenvalues $r_{\alpha,2}(w)$ and $r_{\beta,2}(w)$, which we state below.  
\begin{lemma} \label{lem:r2Nexpansion}
    Let $\delta = \max\{\alpha, \beta\}$ and let $\mathcal{B} = (-\infty, -\delta^{-2}] \cup [-\delta^2,0]$.\footnote{We let $\mathcal{B}$ denote the larger of the possible branch cuts.} For $z \in \mathbb{C}\backslash \mathcal{B}$ and $N$ sufficiently large the following expansion converges,
    \begin{equation} \label{eq:r2Nexp}
        r_{2,N}(z) = \sum_{p=0}^\infty \sum_{q=0}^\infty c_{p,q}(z)r_{\alpha,2}(z)^{\frac{(2p+1)N}{2}} r_{\beta,2}(z)^{\frac{(2q+1)N}{2}}
    \end{equation}
    Where the coefficients $c_{p,q}(z)$ are non-singular and non-zero at $z=1$.
\end{lemma}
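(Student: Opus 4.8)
The plan is to exploit the identities $r_{k,N}(z)r_{3-k,N}(z)=1$ and $r_{\eps,1}(z)r_{\eps,2}(z)=1$, together with the explicit trace formula \eqref{eq:trace}, to write $r_{2,N}(z)$ as an explicit algebraic function of $t_N(z)$ and then expand in a geometric series. Concretely, since $\det\phi_N(z)=1$ we have $r_{2,N}(z)=\tfrac12\big(t_N(z)-\sqrt{t_N(z)^2-4}\big)$, which I would rewrite as
\begin{equation*}
    r_{2,N}(z) = \frac{2}{t_N(z)+\sqrt{t_N(z)^2-4}} = \frac{1}{t_N(z)}\cdot\frac{1}{1+\sqrt{1-4t_N(z)^{-2}}}\cdot 2.
\end{equation*}
The first step is therefore to understand the size of $t_N(z)$: from \eqref{eq:trace}, $t_N(z)$ is a finite linear combination of the four products $r_{\alpha,i}(z)^{N/2}r_{\beta,j}(z)^{N/2}$, and by Lemma \ref{lem:rpole} exactly one eigenvalue on each side is ``large'' away from $z=1$. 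I would first show that on $\mathbb{C}\setminus\mathcal{B}$ the dominant term of $t_N(z)$ is $\big(\tfrac12+g_{\alpha,\beta}(z)\big)r_{\alpha,1}(z)^{N/2}r_{\beta,1}(z)^{N/2}$, so that $t_N(z)^{-1}$ is small for large $N$ and $t_N(z)^{-2}$ even smaller; this makes the binomial series for $(1-4t_N(z)^{-2})^{1/2}$ and the geometric series for $(1+\sqrt{\cdots})^{-1}$ converge.

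Next I would substitute the exact expression \eqref{eq:trace} for $t_N(z)$ everywhere and expand. Writing $t_N(z) = A(z)\,\rho_1(z)\big(1 + R(z)\big)$ where $\rho_1(z) = r_{\alpha,1}(z)^{N/2}r_{\beta,1}(z)^{N/2}$, $A(z)=\tfrac12+g_{\alpha,\beta}(z)$, and $R(z)$ collects the three remaining terms of \eqref{eq:trace} divided by $A(z)\rho_1(z)$ — so that $R(z)$ is a sum of monomials in $r_{\alpha,2}(z)^{N/2}$, $r_{\beta,2}(z)^{N/2}$ with coefficients rational in $g_{\alpha,\beta}(z)$, using $r_{\eps,1}r_{\eps,2}=1$ to convert ratios — the quantity $r_{2,N}(z)$ becomes
\begin{equation*}
    r_{2,N}(z) = \frac{1}{A(z)\rho_1(z)}\,\big(1+R(z)\big)^{-1}\,\frac{2}{1+\sqrt{1-4A(z)^{-2}\rho_1(z)^{-2}(1+R(z))^{-2}}}.
\end{equation*}
Because $\rho_1(z)^{-1}=r_{\alpha,2}(z)^{N/2}r_{\beta,2}(z)^{N/2}$, the leading factor is already of the claimed form (the $(p,q)=(0,0)$ term with $c_{0,0}=2/A(z)\cdot\tfrac12$-type constant, after the square-root factor tends to a nonzero limit). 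Expanding $(1+R(z))^{-1}$ and the square-root/geometric factor as absolutely convergent power series in $R(z)$ and in $A(z)^{-2}\rho_1(z)^{-2}$, and then multiplying out, every resulting monomial is a product $r_{\alpha,2}(z)^{aN/2}r_{\beta,2}(z)^{bN/2}$ with $a,b$ odd positive integers — one checks parity: $\rho_1^{-1}$ contributes $(1,1)$, each factor of $R$ contributes $(a,b)$ with $a+b$ even (indeed $(2,0),(0,2),(1,1)$ from the three terms, using the identities), and each factor from the square-root expansion contributes $\rho_1^{-2}$, i.e. $(2,2)$; so the total exponents stay odd. Re-indexing $a=2p+1$, $b=2q+1$ and collecting coefficients gives \eqref{eq:r2Nexp}, with $c_{p,q}(z)$ a finite sum (for each $(p,q)$) of products of powers of $A(z)^{-1}=\big(\tfrac12+g_{\alpha,\beta}(z)\big)^{-1}$, the coefficients $\tfrac12\pm g_{\alpha,\beta}(z)$ from \eqref{eq:trace}, and binomial constants.

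Finally I would verify the two claimed properties of $c_{p,q}(z)$. Non-singularity and non-vanishing at $z=1$: by Lemma \ref{lem:rpole}, $r_{\eps,2}(z)$ has a zero of order two at $z=1$, so each monomial $r_{\alpha,2}^{(2p+1)N/2}r_{\beta,2}^{(2q+1)N/2}$ already vanishes there to order $(2p+1)N+(2q+1)N\ge 2N$ — this also reproves Lemma \ref{lem:rnpole} since the $(0,0)$ term dominates — hence the $c_{p,q}$ need only be checked to be finite and nonzero at $z=1$; they are built from $g_{\alpha,\beta}(z)$, which by \eqref{eq:g} is analytic near $z=1$ with $g_{\alpha,\beta}(1)$ a finite value that one computes to satisfy $\tfrac12+g_{\alpha,\beta}(1)\neq0$ (the denominator $4\sqrt{(1+\alpha^2)^2(1+\beta^2)^2}=4(1+\alpha^2)(1+\beta^2)$ and numerator $2(1+\alpha^2\beta^2)+2\beta^2+2\alpha^2 = 2(1+\alpha^2)(1+\beta^2)$, so in fact $g_{\alpha,\beta}(1)=\tfrac12$), so all the reciprocals appearing are regular and nonzero at $z=1$. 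The convergence claim ``$N$ sufficiently large'': one must bound $|A(z)^{-2}\rho_1(z)^{-2}|$ and $|R(z)|$ uniformly on compact subsets of $\mathbb{C}\setminus\mathcal{B}$; since $|r_{\eps,2}(z)|<1$ strictly on that set (away from the branch points, where $|r_{\eps,1}|=|r_{\eps,2}|=1$) and $g_{\alpha,\beta}$ is bounded there, both quantities are $O(\delta_0^{N})$ for some $\delta_0<1$, giving absolute convergence for $N$ large.

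The main obstacle I anticipate is the bookkeeping in the parity argument — establishing rigorously that after fully multiplying out the two nested series every surviving exponent pair $(a,b)$ is odd, which requires carefully tracking that the three subleading terms of \eqref{eq:trace} contribute exactly the exponent shifts $(2,0)$, $(0,2)$, $(1,1)$ relative to $\rho_1$ (using $r_{\eps,1}r_{\eps,2}=1$ to rewrite, e.g., $r_{\alpha,2}^{N/2}r_{\beta,1}^{N/2}/(r_{\alpha,1}^{N/2}r_{\beta,1}^{N/2}) = r_{\alpha,2}^{N}$) and that the square-root expansion only ever contributes even shifts. A secondary technical point is making the convergence uniform near $z=1$, where $\rho_1(z)^{-1}\to 0$ but one must ensure the constants $c_{p,q}$ do not blow up; this is handled by the explicit computation $g_{\alpha,\beta}(1)=\tfrac12$ above, which keeps $\tfrac12+g_{\alpha,\beta}$ away from zero in a neighborhood of $1$.
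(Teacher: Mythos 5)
Your proposal follows essentially the same route as the paper: expand $r_{2,N}=\tfrac12\bigl(t_N-\sqrt{t_N^2-4}\bigr)$ as a nested convergent series around the dominant behaviour $\bigl(\tfrac12+g_{\alpha,\beta}\bigr)r_{\alpha,1}^{N/2}r_{\beta,1}^{N/2}$ of the trace, track the parity of the exponents of $r_{\alpha,2}^{N/2}$ and $r_{\beta,2}^{N/2}$ that survive, and control the $z$-dependence of the coefficients through $g_{\alpha,\beta}$ together with $g_{\alpha,\beta}(1)=\tfrac12$; the paper organizes the identical expansion through the auxiliary ratio $u=(xy+x^{-1}y^{-1})/(xy^{-1}+x^{-1}y)$ with $x=r_{\alpha,1}^{N/2}$, $y=r_{\beta,1}^{N/2}$, so the content is the same. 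One slip in your bookkeeping: the subleading trace term $\bigl(\tfrac12+g_{\alpha,\beta}\bigr)r_{\alpha,2}^{N/2}r_{\beta,2}^{N/2}$, after dividing by $\rho_1=r_{\alpha,1}^{N/2}r_{\beta,1}^{N/2}$, contributes the shift $(2,2)$ in units of $N/2$, not $(1,1)$, and the invariant you invoke (``$a+b$ even'') is not sufficient for your conclusion, since a genuine $(1,1)$ shift would turn an (odd,odd) exponent pair into (even,even) and destroy the claimed form $a=2p+1$, $b=2q+1$. With the corrected shifts $(2,0)$, $(0,2)$, $(2,2)$ every coordinate shift is even, the (odd,odd) parity is preserved, and your argument closes exactly as the paper's does.
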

The proof of Lemma \ref{lem:r2Nexpansion} is in Section \ref{section:analysisofeigen}. It is important to check that the coefficients of the above series do not affect the pole at $w=1$, but besides that, the exact statement of the coefficients is mostly unimportant. The only coefficient we need, and thus state explicitly, is
\begin{equation}
    c_{0,0}(w) = \frac{2}{1+2g_{\alpha,\beta}(w)}
\end{equation}
When we plug this expansion into the double contour integrals, most of the terms evaluate to zero due to the residue theorem. To see this, first consider the product
\begin{equation*}
    \frac{1}{(w-1)^N}f(w)r_{\alpha,1}(w)^{N-m'}r_{\beta,1}(w)^\nhalf 
\end{equation*}
This has no pole at $w = 1$ as long as $f(w)$ has a zero at $w=1$ that is at least of order $4N$. Similarly, the product
\begin{equation*}
    \frac{1}{(w-1)^N}f(w)r_{\alpha,1}(w)^{m'}r_{\beta,1}(w)^\nhalf
\end{equation*}
has no pole at $w=1$ as long as $f(w)$ has a zero at $w=1$ of at least order $3N$. This leaves us with the following simplification of the correlation kernel,
\begin{multline}
    \Big[\mathbb{K}_N(4m',2\xi'+j;4m,2\xi+i)\Big]_{i,j=0}^1 = -\frac{\mathbb{I}_{m>m'}}{2 \pi \im} \oint_{\gamma_{0,1}} \frac{\dz}{z} z^{\xi'-\xi} \Phi_{\alpha}(z)^{\frac{N}{2}-m'}\Phi_{\eps}(z)^{m-\frac{N}{2}} \\
    + \frac{1}{(2\pi\im)^2} \oint_{\gamma_1} \dw \oint_{\gamma_{0,1}} \frac{\dz}{z(z-w)} \frac{w^{\xi'+N}(z-1)^{N}}{z^{\xi+N}(w-1)^N} r_{\alpha,1}(w)^{\nhalf-m'} \frac{2F_{\alpha,1}(w)F_{\beta,1}(w)F_{\alpha,1}(w)}{1+2g(w)}\Phi_\eps(z)^{m-\frac{N}{2}} \\
    + \frac{1}{(2\pi\im)^2}\oint_{\gamma_1} \dw \oint_{\gamma_{0,1}} \frac{\dz}{z(z-w)} \frac{w^{\xi'+N}(z-1)^{N}}{z^{\xi+N}(w-1)^N} r_{\alpha,2}(w)^{\nhalf-m'} \frac{2 F_{\alpha,2}(w)F_{\beta,1}(w)F_{\alpha,1}(w)}{1+2g_{\alpha,\beta}(w)}\Phi_\eps(z)^{m-\frac{N}{2}}.
\end{multline}
To obtain the final form we first notice that 
\begin{equation}
    \frac{2F_{\alpha,1}(w)F_{\beta,1}(w)F_{\alpha,1}(w)}{1+2g(w)} = F_{\alpha,1}(w).
\end{equation}
One can check the above equality using the explicit formulas given in equations \eqref{eq:f1}, \eqref{eq:f2}, \eqref{eq:g}. 
\begin{remark}
    We are not surprised that an equality like this. Since the $F$-matrices are rank one, we expect the product $F_{\alpha,1}(w)F_{\beta,1}(w)F_{\alpha,1}(w)$, to be a scalar multiple of $F_{\alpha,1}(w)$. Why does our integral produce the ``right'' constant for simplification? I do not have a satisfactory answer to this question. 
\end{remark}
%%%%%%%%%%%%%%%%%%%%%%%%%%%%%%%%%%%%%%%%%%%%%
\subsection{Proof of Theorem \ref{theorem:kernel} when $\nicefrac{N}{2}< m' < N$} \label{section:m>N/2}
To prove Theorem \ref{theorem:kernel} for $m' > \nicefrac{N}{2}$, we follow a similar procedure to the case when $m'<\nicefrac{N}{2}$. We first expand the double contour integral using eigen-decompositions of $\Phi_\alpha(w)$ and $\Phi_\beta(w)$. Then we simplify the $w$-integral by replacing the matrix products $F_{\beta,1}(w)F_{1,N}(w)$ and $F_{\beta,2}(w)F_{1,N}(w)$ with the expressions given in equations \eqref{eq:f1nreplacement1} and \eqref{eq:f1nreplacement2}. Lastly, we use Lemma \ref{lem:r2Nexpansion} to replace the term $r_{2,N}(w)$. The differences are in the details, which we will highlight below. First, we state the expanded version of the kernel,
\begin{multline} \label{eq:kernelexpandedmess}
    \Big[\mathbb{K}_N(4m',2\xi'+j;4m,2\xi+i)\Big]_{i,j=0}^1 = -\frac{\mathbb{I}_{m>m'}}{2 \pi \im} \oint_{\gamma_{0,1}} \frac{\dz}{z} z^{\xi'-\xi} \Phi_{\alpha}(z)^{\frac{N}{2}-m'}\Phi_{\eps}(z)^{m-\frac{N}{2}} \\
    + \frac{1}{(2\pi\im)^2} \sum_{k_1=1,2} \sum_{k_2=1,2} \sum_{k_3=1,2}\oint_{\gamma_1} \oint_{\gamma_{0,1}} \frac{\dz\dw}{z(z-w)} \frac{w^{\xi'+N}(z-1)^{N}}{z^{\xi+N}(w-1)^N} r_{\beta,k_1}(w)^{\nhalf-m'}r_{\alpha,k_2}(w)^{-\nhalf}r_{\alpha,k_3}(w)^{\frac{N}{2}}\\
    \times F_{\beta,k_1}(w)F_{\alpha,k_2}(w)F_{1,N}(w)F_{\alpha,k_3}(w)\Phi_\eps(z)^{m-\frac{N}{2}}
\end{multline}
Again our goal will be to track count the order of the pole/zero at $w = 1$. The term 
$$\frac{1}{(w-1)^N}r_{\beta,k_1}(w)^{\nhalf-m'}r_{\alpha,k_2}(w)^{-\nhalf}r_{\alpha,k_3}(w)^{\frac{N}{2}}$$
considers all the parts of the integral, besides $F_{1,N}(w)$, that potentially contribute to the pole at $w=1$. Depending on the value of $k_1$, $k_2$, and $k_3$ the order of the pole (or zero) differs. We first track the highest possible pole order (or lowest possible zero order) when $k_2 = 1$,
\begingroup
    \begin{center}
    \renewcommand{\arraystretch}{1.5}
    \begin{tabular}{c|c|c}
        & $k_1=1$ & $k_1=2$ \\
        \hline
        $k_3=1$ & pole of order $< N$ & pole of order $<2N$ \\
        \hline
        $k_3=2$ & zero of order $>N$  &  zero of order $>0$ \\
    \end{tabular}
    \end{center}
\endgroup
\noindent Next we do the same procedure, assuming $k_2 = 2$,
\begingroup
    \begin{center}
    \renewcommand{\arraystretch}{1.5}
    \begin{tabular}{c|c|c}
        & $k_1=1$ & $k_1=2$ \\
        \hline
        $k_3=1$ & pole of order $< 3N$ & pole of order $<4N$ \\
        \hline
        $k_3=2$ & pole of order $<N$  &  pole of order $< 2N$ \\
    \end{tabular}
    \end{center}
\endgroup
We are now ready to replace the terms $F_{\alpha,k_2}(w)F_{1,N}(w)$ by using either equation \eqref{eq:f1nreplacement1} or equation \eqref{eq:f1nreplacement2}. We are careful to observe which terms actually have a pole at $w=1$. As was the case in Section \ref{section:mlessN/2proof}, many of the terms drop out of the contour integral as a result of the residue theorem. We simplify the original double contour integrals down to four integrals,
\begin{multline} 
    \sum_{k_1=1,2} \sum_{k_2=1,2} \sum_{k_3=1,2} \oint_{\gamma_1} \oint_{\gamma_{0,1}} \frac{\dz\dw}{z(z-w)} \frac{w^{\xi'+N}(z-1)^{N}}{z^{\xi+N}(w-1)^N} r_{\beta,k_1}(w)^{\nhalf-m'}r_{\alpha,k_2}(w)^{-\nhalf}r_{\alpha,k_3}(w)^{\frac{N}{2}}\\
    \times F_{\beta,k_1}(w)F_{\alpha,k_2}(w)F_{1,N}(w)F_{\alpha,k_3}(z)\Phi_\eps(z)^{m-\frac{N}{2}} \\
    = \sum_{k=1,2} \oint_{\gamma_1} \oint_{\gamma_{0,1}} \frac{\dz\dw}{z(z-w)} \frac{w^{\xi'+N}(z-1)^{N}}{z^{\xi+N}(w-1)^N} r_{\beta,k}(w)^{\nhalf-m'}r_{2,N}(w)r_{\alpha,1}(w)^\nhalf r_{\beta,1}(w)^\nhalf\\
    \times F_{\beta,k}(w)F_{\alpha,1}(w)F_{\beta,1}(w)F_{\alpha,1}(w)\Phi_\eps(z)^{m-\frac{N}{2}} \\
    + \sum_{k=1,2}\oint_{\gamma_1} \oint_{\gamma_{0,1}} \frac{\dz\dw}{z(z-w)} \frac{w^{\xi'+N}(z-1)^{N}}{z^{\xi+N}(w-1)^N} r_{\beta,k}(w)^{\nhalf-m'}r_{\alpha,1}(w)^\nhalf r_{2,N}(w)r_{\beta,1}(w)^\nhalf\\
    \times F_{\beta,k}(w)F_{\alpha,2}(w)F_{\beta,1}(w)F_{\alpha,1}(w)\Phi_\eps(z)^{m-\frac{N}{2}} \\
\end{multline}
The first sum of integrals and second sum of integrals can be combined, by matching up the integrals with the same value of $k$. In particular, we get 
\begin{align*}
    F_{\beta,k}(w)F_{\alpha,1}(w)F_{\beta,1}(w)F_{\alpha,1}(w)&\Phi_\eps(z)^{m-\frac{N}{2}} + F_{\beta,k}(w)F_{\alpha,2}(w)F_{\beta,1}(w)F_{\alpha,1}(w)\Phi_\eps(z)^{m-\frac{N}{2}} \\
    &= F_{\beta,k}(w)\Big(F_{\alpha,1}(w) + F_{\alpha,2}(w)\Big)F_{\beta,1}(w)F_{\alpha,1}(w)\Phi_\eps(z)^{m-\frac{N}{2}} \\
    &= F_{\beta,k}(w)F_{\beta,1}(w)F_{\alpha,1}(w)\Phi_\eps(z)^{m-\frac{N}{2}}
\end{align*}
This simplification show that when $k = 2$, the matrix product is zero due to the orthogonality of the $F_\beta$-matrices. Thus we are left with only one double contour integral, 
\begin{equation} \label{eq:intkernelbside}
    \oint_{\gamma_1} \oint_{\gamma_{0,1}} \frac{\dz\dw}{z(z-w)} \frac{w^{\xi'+N}(z-1)^{N}}{z^{\xi+N}(w-1)^N} r_{\beta,1}(w)^{N-m'}r_{2,N}(w)r_{\alpha,1}(w)^\nhalf F_{\beta,1}(w)F_{\alpha,1}(w)\Phi_\eps(z)^{m-\frac{N}{2}} \\
\end{equation}
We are now ready to apply Lemma \ref{lem:r2Nexpansion}. Notice that the product $(w-1)^{-N}r_{\beta,1}(w)^{N-m'}r_{\alpha,1}(w)^\nhalf$ has a pole at $w=1$ of order less than $3N$. This means that only the first term of the expansion in Lemma \ref{lem:r2Nexpansion} contributes non-trivially to the integral. The double contour integral in \eqref{eq:intkernelbside} then becomes, 
\begin{equation}
    \oint_{\gamma_1} \oint_{\gamma_{0,1}} \frac{\dz\dw}{z(z-w)} \frac{w^{\xi'+N}(z-1)^{N}}{z^{\xi+N}(w-1)^N} r_{\beta,1}(w)^{\nhalf-m'}\frac{2}{1+2g_{\alpha,\beta}(w)} F_{\beta,1}(w)F_{\alpha,1}(w)\Phi_\eps(z)^{m-\frac{N}{2}} 
\end{equation}
We now have the following form of the correlation kernel,
\begin{multline} \label{eq:bsidekernelinitial}
    \Big[\mathbb{K}_N(4m',2\xi'+j;4m,2\xi+i)\Big]_{i,j=0}^1 = -\frac{\mathbb{I}_{m<m'}}{2 \pi \imt} \oint_{\gamma_{0,1}} \frac{\dz}{z} z^{\xi-\xi'} \Phi_{\beta}(z)^{\nhalf-m'}\Phi_{\eps}(z)^{m-\nhalf} \\
    + \frac{1}{(2\pi\imt)^2} \oint_{\gamma_1} \dw \oint_{\gamma_{0,1}} \frac{\dz}{z(z-w)} \frac{w^{\xi'+N}(z-1)^{N}}{z^{\xi+N}(w-1)^N} r_{\beta,1}(w)^{\nhalf-m'}\frac{2F_{\beta,1}(w)F_{\alpha,1}(w)}{1+2g_{\alpha,\beta}(w)}  \Phi_\eps(z)^{m-\nhalf}
\end{multline}
for $m' > \nicefrac{N}{2}$. This is a valid kernel that is amenable to asymptotic analysis, however we wish to make one additional change to the kernel so that it can better be compared to the case of $m' \leq \nicefrac{N}{2}$. To do so, we use the following identity, 
\begin{equation*}
    \frac{2}{1+2g_{\alpha,\beta}(w)} F_{\beta,1}(w)F_{\alpha,1}(w) = F_{\beta,1}(w) + \frac{2}{1+2g_{\alpha,\beta}(w)} F_{\beta,1}(w)F_{\alpha,1}(w)F_{\beta,2}(w).
\end{equation*}
To prove this identity one can multiply the left hand side on the right by
$$I = F_{\beta,1}(w) + F_{\beta,2}(w).$$
The result of this substitution is the kernel presented in \eqref{eq:bsidekernelfinal}.
%%%%%%%%%%%%%%%%%%%%%%%%%%%%%%%%%%%%%%%%%%%%%%%%%%%%
%%%%%%%%%%%%%%%%%%%%%%%%%%%%%%%%%%%%%%%%%%%%%%%%%%%%%%%%%%%%%%%%%%%%%%%%%%%%%%%%%%%%%%%%%%%%%%%%%%%%%%%%%%%%%%%%%%%%%%%%%%%%%%%%%%%%%%%%%%%%%%%%%%%%
\section{Local Asymptotics of the Model} \label{section:asymptoticresults}
The goal of this section is to prove the local asymptotic results stated in Propositions \ref{prop:I22decay} and \ref{prop:I21decay}. We begin in Section \ref{section:saddles} by stating some necessary results about the behavior of the saddle functions. In particular, we need precise statements about the location of saddle points and the contours of steepest ascent and descent in the smooth and rough regions of the model. Once we have the necessary preliminary information, we prove Propositions \ref{prop:I22decay} and \ref{prop:I21decay} in Sections \ref{section:I22proofs} and \ref{section:I21proofs}, respectively.

%%%%%%%%%%%%%%%%%%%%%%%%%%%%%%%%%%%%%%%%%%%%%%%%%%%%%%%%%%%%%%%%%%%%%%
\subsection{Preliminaries of Saddle Functions} \label{section:saddles}
Recall the piece-wise definition of the saddle function $\psi_k(z)$,
\begin{equation}
    \psi_k(z;x,y)=\begin{cases}
        \psi_{\alpha,k}(z; x,y) & \text{for } x < \nicefrac{1}{2} \\
        \psi_{\beta,k}(z; x,y) & \text{for } x > \nicefrac{1}{2}
    \end{cases}
\end{equation}
where $\psi_{\eps,k}(z)$ is restated below. 
$$\psi_{\eps,k}(z) = (1+y)\log z - \log(z-1) + \left(\frac{1}{2}-x\right)\log r_{\eps,k}(z)$$
We start by stating a basic lemma regarding the asymptotic behavior of the function $\oRe \psi_k(z)$. 
\begin{lemma} \label{lem:realpsiasymptotics}
    As $z \to \infty$, the saddle functions exhibit the following asymptotic behavior, 
    $$\oRe \psi_k(z) = y \log |z| + O(1)$$
\end{lemma}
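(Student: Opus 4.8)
The plan is to split the definition \eqref{eq:saddlefunc} of $\psi_{\eps,k}(z;x,y)$ into the elementary part $(y+1)\log z - \log(z-1)$ and the part $\left(\tfrac12 - x\right)\log r_{\eps,k}(z)$, show the former contributes $y\log|z| + O(1)$ and the latter is $O(1)$. Since the piecewise saddle function $\psi_k$ of \eqref{eq:saddlefunctionpw} equals $\psi_{\alpha,k}$ or $\psi_{\beta,k}$ with $\alpha,\beta\in(0,1)$ fixed, it suffices to prove the estimate for $\psi_{\eps,k}$ with a generic constant $\eps\in(0,1)$.

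For the elementary part, writing $\log(z-1) = \log z + \log(1-z^{-1})$ gives $(y+1)\log z - \log(z-1) = y\log z - \log(1-z^{-1}) = y\log z + O(z^{-1})$ as $z\to\infty$; taking real parts and using that $y$ is a real parameter yields $\oRe\big[(y+1)\log z - \log(z-1)\big] = y\log|z| + O(|z|^{-1})$.

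The main step is to bound $\oRe\big[\left(\tfrac12 - x\right)\log r_{\eps,k}(z)\big] = \left(\tfrac12 - x\right)\log|r_{\eps,k}(z)|$. Rather than use the radical formulas \eqref{eq:r1}--\eqref{eq:r2}, I would argue from the fact that $r_{\eps,1}(z)$ and $r_{\eps,2}(z)$ are the eigenvalues of $\Phi_\eps(z)$, hence the roots of $\lambda^2 - \operatorname{tr}\Phi_\eps(z)\,\lambda + \det\Phi_\eps(z)$. A short computation shows $\det\Phi_\eps(z)\equiv 1$ (the factors $\det\phi_{\eps,1} = \det\phi_3 = 1-z^{-1}$ cancel against $\det\phi_{\eps,2} = \det\phi_4 = (1-z^{-1})^{-1}$), consistent with Lemma \ref{lem:rpole}. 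Moreover each of the four matrices $\phi_{\eps,1},\phi_{\eps,2},\phi_3,\phi_4$ converges to a constant matrix as $z\to\infty$, so $t(z) := \operatorname{tr}\Phi_\eps(z)$ is bounded, say $|t(z)|\le M$, on $\{|z|\ge R\}$ for some $R$. Since $r_{\eps,1}(z)r_{\eps,2}(z) = 1$ and $r_{\eps,1}(z) + r_{\eps,2}(z) = t(z)$, the relation $r_{\eps,k} + r_{\eps,k}^{-1} = t$ forces $\tfrac{1}{M+1} \le |r_{\eps,k}(z)| \le M+1$ for $|z|\ge R$ (if $|r_{\eps,k}|\ge 1$ then $|r_{\eps,k}| \le |t| + |r_{\eps,k}|^{-1} \le M+1$, and otherwise apply this to the other eigenvalue). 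Hence $\big|\log|r_{\eps,k}(z)|\big| \le \log(M+1)$ for all large $z$.

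Combining the two estimates in $\oRe\psi_k(z) = \oRe\big[(y+1)\log z\big] - \oRe\log(z-1) + \left(\tfrac12 - x\right)\log|r_{\eps,k}(z)|$ gives $\oRe\psi_k(z) = y\log|z| + O(1)$, as claimed. The only point needing care is that $\log r_{\eps,k}$ has a branch cut running to $-\infty$ along $(-\infty, -\eps^{-2}]$, so the bound on $\log|r_{\eps,k}|$ must hold uniformly up to and across that cut; this is immediate from the argument above, since $t(z)$ and $\det\Phi_\eps(z)$ are single-valued and bounded near $\infty$ while $|r_{\eps,k}(z)|$ is continuous up to the cut from either side. I expect this uniformity to be the only mild subtlety, the rest being a one-line expansion.
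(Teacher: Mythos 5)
Your proof is correct, and it differs from the paper's only in how the eigenvalue term is controlled. The paper's proof is a two-line observation: from the explicit formulas \eqref{eq:r1}--\eqref{eq:r2} one reads off $\lim_{z\to\infty} r_{\eps,k}(z)=1$, so $\left(\tfrac12-x\right)\log|r_{\eps,k}(z)|$ is in fact $o(1)$, and the elementary part $(y+1)\log|z|-\log|z-1|=y\log|z|+O(|z|^{-1})$ does the rest — exactly your first step. You instead bypass the radicals and use the structural facts $\det\Phi_\eps(z)\equiv 1$ (your cancellation of the $(1-z^{-1})$ factors is right) together with boundedness of $\operatorname{tr}\Phi_\eps(z)$ near $\infty$ to squeeze $|r_{\eps,k}(z)|$ between $(M+1)^{-1}$ and $M+1$; this gives only $O(1)$ rather than $o(1)$, which is all the lemma asks, and it has the mild advantages of being insensitive to the choice of branch (you only ever bound the modulus, and the pair of eigenvalue moduli is single-valued across the cut) and of generalizing to any finite product of matrices with unimodular determinant and bounded trace, such as $\phi_N(z)=\Phi_\alpha^{N/2}\Phi_\beta^{N/2}$. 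The paper's route is shorter and sharper; yours is marginally longer but needs no explicit eigenvalue formulas. Your worry about uniformity near the branch cut is handled correctly and is not really an issue, since $\log|r_{\eps,k}|$ (a real part) never sees the branch of the logarithm.
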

\begin{proof}
    One can observe from the definition of $r_{\eps,1}(z)$ and $r_{\eps,2}(z)$ that 
    $$\lim_{z\to\infty} r_{\eps,k}(z) = 1$$
    for $k = 1,2$. Also, as $z \to \infty$ we have,
    $$(y+1)\log|z| - \log|z-1| = y \log|z| + O(z^{-1})$$
\end{proof}
A direct consequence of this lemma is that for any $(x,y) \in (0,1) \times (-1,0)$, $\oRe \psi_k(z) \to -\infty$ as $z \to \infty$. \\

We know that $\psi_k(z)$ always has at least one saddle point between the branch cuts on the negative real axis. We denote this saddle point $z_k^*$.\footnote{This is the same notation used in \cite{DK21}.} We start by stating, 
\begin{lemma} \label{lem:psimax}
    For any $(x,y) \in (0,\nicefrac{1}{2}) \times (-1,0)$, the following is true regarding the saddle point $z_k^*$
    \begin{enumerate}[label=(\roman*)]
        \item $z_1^*$ is a local maximum of $\psi_1(z)$,
        \item if $(x,y) \in \mathfrak{S}$, then $z_2^*$ is a local maximum of $\psi_2(z)$, otherwise it is a local minimum. 
    \end{enumerate}
\end{lemma}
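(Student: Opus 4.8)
The plan is to reduce the whole statement to the behaviour of the real‑valued function $\oRe\psi_{\alpha,k}$ restricted to the open real interval lying strictly between the two branch cuts. Since $x<\nicefrac{1}{2}$ we have $\psi_k=\psi_{\alpha,k}$ and the branch cuts are $(-\infty,-\alpha^{-2}]\cup[-\alpha^2,0]$, so set $J=(-\alpha^{-2},-\alpha^{2})$; by hypothesis $z_1^*,z_2^*\in J$. First I would record that $z(z+\alpha^2)(1+\alpha^2 z)>0$ on $J$, so that $z^3+(\alpha^2+\alpha^{-2})z^2+z>0$ and hence $r_{\alpha,1}(z)$ and $r_{\alpha,2}(z)$ are real there. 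They never vanish on $J$ because $r_{\alpha,1}(z)r_{\alpha,2}(z)=1$ (a part of Lemma \ref{lem:rpole}), they both equal $-1$ at the two endpoints of $J$ (a direct substitution), and a short identity, $(\alpha+\alpha^{-1})^2z^2-\bigl(z^3+(\alpha^2+\alpha^{-2})z^2+z\bigr)=-z(z-1)^2>0$ on $J$, shows $r_{\alpha,1}(z)+1>0$; combining these gives $r_{\alpha,1}(z)\in(-1,0)$ and $r_{\alpha,2}(z)=r_{\alpha,1}(z)^{-1}\in(-\infty,-1)$ on $J$. In particular $\psi_{\alpha,k}'$ is real on $J$, so $\oRe\psi_{\alpha,k}\big|_J$ is a genuine smooth real function whose critical points are exactly the $z_k^*$ (and, possibly, more in the smooth region), and whether $z_k^*$ is a local maximum or a local minimum along the real axis is decided by the sign of $\psi_{\alpha,k}''(z_k^*)$, equivalently by how $\psi_{\alpha,k}'$ changes sign as $z$ increases through $z_k^*$.

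Next I would pin down the behaviour at the ends of $J$. Near $z=-\alpha^2$ one has $r_{\alpha,1}(z)=-1+c\sqrt{-\alpha^2-z}+O(-\alpha^2-z)$ with $c>0$ (the sign being exactly the content of $r_{\alpha,1}>-1$ on $J$), so $\tfrac{d}{dz}\log r_{\alpha,1}(z)\to+\infty$; since $\tfrac{y+1}{z}-\tfrac1{z-1}$ stays bounded and $\tfrac12-x>0$, this gives $\psi_{\alpha,1}'(z)\to+\infty$ as $z\to-\alpha^2{}^-$. The analogous expansion at $z=-\alpha^{-2}$ gives $\psi_{\alpha,1}'(z)\to-\infty$ as $z\to-\alpha^{-2}{}^+$; and because $r_{\alpha,2}=r_{\alpha,1}^{-1}$, so $\log r_{\alpha,2}=-\log r_{\alpha,1}$, the corresponding limits of $\psi_{\alpha,2}'$ are reversed, namely $-\infty$ at $-\alpha^2$ and $+\infty$ at $-\alpha^{-2}$. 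Consequently, along $J$, $\oRe\psi_{\alpha,1}$ strictly decreases near the left end and strictly increases near the right end, so its leftmost interior critical point is a local minimum; while $\oRe\psi_{\alpha,2}$ increases near the left end and decreases near the right, so its leftmost interior critical point is a local maximum.

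Then I would count critical points using the classification of regions. Off the region boundaries $\Psi$ has exactly four simple saddle points; in $\mathfrak{F}$ two of them lie on the positive real axis and in $\mathfrak{R}$ two form a non‑real conjugate pair, so in both cases $\psi_{\alpha,1}$ and $\psi_{\alpha,2}$ each have a single critical point on $J$, namely $z_1^*$ and $z_2^*$. By the previous paragraph $z_2^*$ is then a local maximum and $z_1^*$ a local minimum, which is (i) and the ``otherwise'' half of (ii) in the frozen and rough cases. In $\mathfrak{S}$ all four saddles lie on the two copies of $J$; on crossing the rough–smooth boundary the conjugate pair collides with $z_1^*$ on the first sheet (a triple saddle on that boundary) and splits into three simple real critical points of $\oRe\psi_{\alpha,1}$ on $J$, while the critical point $z_2^*$ on the second sheet is unaffected, so it remains the unique critical point of $\psi_{\alpha,2}$ on $J$ and hence still a local maximum, giving (i). Since $\psi_{\alpha,1}'$ runs from $-\infty$ at the left end of $J$ to $+\infty$ at the right end, those three critical points are, from left to right, a local minimum, a local maximum, a local minimum; with the convention of \cite{DK21} designating $z_1^*$ to be the middle one, $z_1^*$ is a local maximum, completing (ii).

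The main obstacle is the input borrowed from \cite{DK21} in the last step: that on entering the smooth region it is precisely $z_1^*$ on the first sheet (and not $z_2^*$) that the conjugate pair of saddles collides with, together with the bookkeeping of which of the three resulting real saddles keeps the name $z_1^*$. This is exactly the local picture analyzed there, and it transfers via the identification $\Phi_k(z)=-2\psi_{\alpha,k}(z;x,y)$ recorded after \eqref{eq:saddlefunc} (with $\alpha_{\mathrm{DK}}=\eps$, $\beta_{\mathrm{DK}}=\eps^{-1}$, $\xi_1=2x-1$, $\xi_2=2y+1$), the factor $-2$ interchanging local maxima of $\oRe\psi_{\alpha,k}$ with local minima of $\oRe\Phi_k$. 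Everything else — reality of $r_{\alpha,k}$ on $J$, the two boundary sign computations, and the critical‑point count — is elementary and self‑contained.
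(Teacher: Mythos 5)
Your proposal is correct and follows essentially the same route as the paper: the frozen/rough case is handled exactly as in the paper, via the divergence of $\psi'_{\alpha,k}$ (equivalently of $r'_{\alpha,k}/r_{\alpha,k}$) at the branch points $-\alpha^2$ and $-\alpha^{-2}$ together with the uniqueness of the saddle on $(-\alpha^{-2},-\alpha^2)$, while the smooth-region statement is deferred to Lemma 6.8 of \cite{DK21} through the identification $\Phi_k(z)=-2\psi_{\eps,k}(z;x,y)$ with the attendant reversal of maxima and minima. Your additional verifications (reality and signs of $r_{\alpha,k}$ on the interval, the endpoint values $-1$, the saddle count from the region classification) are consistent supplementary detail rather than a different method.
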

This is largely a subset of Lemma 6.8 in \cite{DK21}. There are a few key differences we take the time to acknowledge and justify. Firstly, since there is an overall sign change between their saddle function and the saddle function presented here, there is a reversal of maxima and minima. Additionally, while the asymptotic coordinate is restricted to the smooth region in the proof presented by Duits and Kuijlaars, we address cases where this is not necessarily true. We present the proof of these additional part below. 

\begin{proof}
Assume that $(x,y) \in \mathfrak{F} \cup \mathfrak{R}$ in addition to the fact that $x < \nicefrac{1}{2}$. In this case, there is exactly one saddle point of $\psi_k(z)$ on the interval $(-\alpha^{-2},-\alpha^2)$. We consider the behavior of $\oRe \psi'_k(z)$ as we approach the branch points. In particular, we have
\begin{equation}
    \lim_{z \to (-\alpha^2)^-} \frac{r'_{\alpha,1}(z)}{r_{\alpha,1}(z)} = -\infty \qquad \lim_{z \to (-\alpha^{-2})^+} \frac{r'_{\alpha,1}(z)}{r_{\alpha,1}(z)} = +\infty
\end{equation}
\begin{equation}
    \lim_{z \to (-\alpha^2)^-} \frac{r'_{\alpha,2}(z)}{r_{\alpha,2}(z)} = +\infty \qquad \lim_{z \to (-\alpha^{-2})^+} \frac{r'_{\alpha,2}(z)}{r_{\alpha,2}(z)} = -\infty
\end{equation}
From this we deduce that $z_2^*$ is a local minimum of $\oRe \psi_2(z)$ and $z_1^*$ is a local maximum of $\oRe \psi_1(z)$.
\end{proof}

Since we are restricting our analysis to $x < \nicefrac{1}{2}$ we can make some additional remarks regarding the general location of the saddle points. Under this restriction, $\psi_1(z)$ always has a single saddle point, while $\psi_2(z)$ has three saddle points (counting multiplicity). This is mostly a result of Lemma \ref{lem:psimax} along with that $\Psi'(z) \, dz$ is a single-valued meromorphic function (for fixed $(x,y)$) on the Riemann surface $\mathcal{R}$ described before Definition \ref{def:macroscopicboundary}. We will refer to the other two saddle points of $\psi_2(z)$ as $z_3^*$ and $z_4^*$. When $(x,y) \in \mathfrak{S}$, these additional saddle points also lie on the interval $(-\alpha^{-2},-\alpha^2)$. We adopt the convention, $z_3^* < z_2^* < z_4^*$. The points $z_3^*$ and $z_4^*$ are both local minimums of $\oRe \psi_2(z)$.

The \textit{contour of steepest ascent} from $z_k^*$ is the contour through $z_k^*$ where $\oIm \psi_k(z)$ is held constant and $\oRe \psi_k(z)$ increases the fastest as we move away from the saddle point. Since $z_1^*$ is a maximum of $\oRe \psi_1(z)$, the path of steepest ascent is perpendicular to the real axis at $z_1^*$ (the same is true for $z_2^*$ when the asymptotic coordinate is in the smooth region). We define the following regions of the complex plane,\footnote{While this notation is the same as in \cite{DK21} the regions will be reversed in our notation due to the change in sign of the saddle functions presented here.} 
\begin{equation}
    \Omega^+_k = \{ z \in \mathbb{C} : \oRe\psi_k(z) > \oRe\psi_k(z_k^*) \} 
\end{equation}
\begin{equation}
    \Omega^-_k = \{ z \in \mathbb{C} : \oRe\psi_k(z) < \oRe\psi_k(z_k^*) \} 
\end{equation}
Now we state the following lemma, 
\begin{lemma} \label{lem:psi2contours}
    For any $(x,y) \in (0,\nicefrac{1}{2}) \times (-1,0)$ the following holds, 
    \begin{enumerate}[label=(\roman*)]
        \item The steepest ascent contour of $\psi_1(z;x,y)$ through $z_1^*$, denoted $\gamma_{s,1}$, is a simple closed curve that contains the interval $[-\alpha^2,0]$ and intersects the positive real line at $z = 1$.
        \item $\Omega^+_1$ is a bounded set containing a single component which contains $\gamma_{s,1} \backslash \{z_1^*\}$.
        \item $\Omega^-_1$ is an open set with an unbounded component that contains a contour going around $(-\infty,-\alpha^{-2}]$ and with a bounded component that contains a contour around $[-\alpha^2,0]$.
    \end{enumerate}
    Under the additional condition that $(x,y) \in \mathfrak{S}$, the following also holds
    \begin{enumerate}[label=(\roman*)]
        \item The steepest ascent contour of $\psi_2(z;x,y)$ through $z_2^*$, denoted $\gamma_{s,2}$, is a simple closed curve that contains the interval $[-\alpha^2,0]$ and intersects the positive real line at $z = 1$.
        \item $\Omega^+_2$ is a bounded set containing at most three components that are a positive distance apart. One of which contains $\gamma_{s,2} \backslash \{z_2^*\}$.
        \item $\Omega^-_2$ is an open set with an unbounded component that contains a contour going around $(-\infty,-\alpha^{-2}]$ and with a bounded component that contains a contour around $[-\alpha^2,0]$.
    \end{enumerate}
\end{lemma}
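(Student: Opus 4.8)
The plan is to adapt the argument for Lemma~6.8 of \cite{DK21}: under the extra hypothesis $(x,y)\in\mathfrak S$ the statements for $\psi_1$ are exactly theirs (after the overall sign change relating our $\psi_{\eps,k}$ to their $\Phi_k$), and those for $\psi_2$ in $\mathfrak S$ are too, so the genuinely new content is to push the $\psi_2$ part out to all of $(0,\nicefrac{1}{2})\times(-1,0)$. Fix $x<\nicefrac{1}{2}$, so that $\psi_k=\psi_{\alpha,k}$.

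First I would record the global features of the harmonic function $\oRe\psi_2$ that pin down the topology. By Lemma~\ref{lem:realpsiasymptotics} and $y<0$, $\oRe\psi_2(z)\to-\infty$ as $z\to\infty$, so $\Omega_2^+$ is bounded; since $r_{\alpha,2}(0)=1$ one also has $\oRe\psi_2(z)\to-\infty$ as $z\to0$; and from the double zero of $r_{\alpha,2}$ at $z=1$ (Lemma~\ref{lem:rpole}) one computes $\psi_2(z)=-2x\log(z-1)+O(1)$ near $z=1$, so $\oRe\psi_2(z)\to+\infty$ there because $x>0$. Since $r_{\alpha,2}(-\alpha^{\pm2})=-1$, $\oRe\psi_2$ is finite at the branch points, and combining this with Lemma~\ref{lem:psimax}(i) (which makes $z_2^*$ a strict local maximum along the real axis in every region) shows that $\oRe\psi_2$ is unimodal on $(-\alpha^{-2},-\alpha^2)$; hence both branch points and the points $0,\infty$ lie in $\Omega_2^-$, while $z=1$ is the unique point of $\overline{\Omega_2^+}$ at which $\oRe\psi_2=+\infty$.

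Next I would trace the steepest-ascent contour. Because $z_2^*$ is a real local maximum, $\gamma_{s,2}$ leaves $z_2^*$ perpendicular to the real axis, and its two branches are exchanged by the Schwarz reflection $\psi_2(\bar z)=\overline{\psi_2(z)}$ (valid off the branch cuts), hence are complex conjugates. Along each branch $\oIm\psi_2$ is constant and $\oRe\psi_2$ strictly increases; the branch therefore stays inside the bounded set $\Omega_2^+$, cannot return to $z_2^*$, and cannot limit onto the branch cuts or onto $0$ or $\infty$ (where $\oRe\psi_2<\oRe\psi_2(z_2^*)$), so it must limit onto $z=1$, provided $\psi_2$ has no further critical point in $\overline{\Omega_2^+}$. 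Granting that, the two conjugate branches join at the real point $z=1$, and since two distinct arcs of the level set $\{\oIm\psi_2=\oIm\psi_2(z_2^*)\}$ can meet only at a critical point, $\gamma_{s,2}$ is a simple closed curve through $z_2^*$ and $1$; as $-\alpha^{-2}<z_2^*<-\alpha^2<0<1$, its interior contains $(z_2^*,1)\supset[-\alpha^2,0]$. The claims on $\Omega_2^{\pm}$ follow from the same picture: $\Omega_2^+$ has a single component because $z_2^*$ is the only relevant maximum of $\oRe\psi_2$, and $\Omega_2^-$ splits into an unbounded component surrounding $(-\infty,-\alpha^{-2}]$ and a bounded component surrounding $[-\alpha^2,0]$ because $\oRe\psi_2\to-\infty$ at both $0$ and $\infty$ and the branch points are interior to $\Omega_2^-$. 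For $\psi_1$ under $(x,y)\in\mathfrak S$ the identical scheme applies with $z_1^*$ a maximum (Lemma~\ref{lem:psimax}(ii)), the only difference being that $\psi_1$ may carry up to three relevant maxima on its sheet, yielding the bound of three components for $\Omega_1^+$; this part is literally Lemma~6.8 of \cite{DK21}.

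The main obstacle is the clause flagged above: that $z_2^*$ is the only critical point of $\psi_2$ throughout $\mathfrak F\cup\mathfrak R$, and correspondingly that $\gamma_{s,2}$ is not pinched against a branch cut or pulled onto the second sheet as $(x,y)$ leaves $\mathfrak S$. I would handle this by tracking the four saddle points of $\Psi$ on the genus-one Riemann surface as functions of $(x,y)$: they vary continuously, in $\mathfrak S$ the unique saddle on the $\psi_2$-copy of the negative real axis is $z_2^*$ (the remaining ones, at most three, lying on the $\psi_1$-copy), and the transition from $\mathfrak S$ into $\mathfrak R$ (respectively $\mathfrak F$) is a collision of two saddles on the $\psi_1$-sheet producing a complex-conjugate pair (respectively a pair on a copy of the positive real axis), neither of which meets the $\psi_2$-sheet. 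Hence $z_2^*$ remains the unique critical point of $\psi_2$ and, by Lemma~\ref{lem:psimax}(i), a nondegenerate maximum throughout, and a continuity/degree argument then shows that the topology of $\gamma_{s,2}$ and of $\Omega_2^{\pm}$ established in $\mathfrak S$ persists over all of $(0,\nicefrac{1}{2})\times(-1,0)$.
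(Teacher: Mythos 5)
Your proposal follows essentially the same route as the paper: the paper simply defers to Lemma 6.9 of \cite{DK21} (note it is 6.9, not 6.8, that concerns the contours; 6.8 is the maxima statement used in Lemma \ref{lem:psimax}) and observes that the argument extends to $\psi_2$ in every macroscopic region precisely because $z_2^*$ is the unique saddle point on $(-\alpha^{-2},-\alpha^2)$, which is exactly the extension you carry out, in more detail, via the asymptotics at $0$, $1$, $\infty$ and the saddle-counting on the Riemann surface. Your reconstruction is consistent with that argument, so no correction is needed.
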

Figure \ref{fig:smoothcontours} shows an example of the regions $\Omega_1^+$ and $\Omega_1^-$ along with the steepest ascent contour. This lemma is a slightly modified version of Lemma 6.9 of \cite{DK21}. The reader should refer to their text for a the proof. One can check that the proof is valid for $\psi_1(z)$ for any $x < \nicefrac{1}{2}$. This is because $z_1^*$ is always a local maximum of $\oRe \psi_1(z)$. In the first part, we conclude that $\Omega^+_1$ is a single component, because there is exactly one saddle point on the interval $(-\alpha^{-2},-\alpha^2)$. The case of three components is only possible if there are three saddle points on the interval $(-\alpha^{-2},-\alpha^2)$.

In our asymptotic proofs we will use a statement a bit stronger than (both) statement (iii) above. In particular, we state the following corollary, 
\begin{corollary} \label{cor:contoursstronger}
    Let $r_1$ and $r_2$ be negative real numbers such that $r_1 < z_1^* < r_2$. Then there exists contours surrounding the intervals $(-\infty, r_1]$ and $[r_2,0]$ that are contained entirely inside of $\Omega_j^-$ for $j = 1, \, 2$.
\end{corollary}
\begin{proof}
    In the case where $\Omega_j^+$ is only one component, the statement immediately follows from the fact that $z_1^*$ is the only point on the negative not a positive distance away from $\Omega^+_j$. In the case where $\Omega_j^+$ has three components, we must also invoke the fact that the components are all a positive distance away from each other. 
\end{proof}

\begin{figure}[ht]
    \centering
    \includegraphics[scale=0.7]{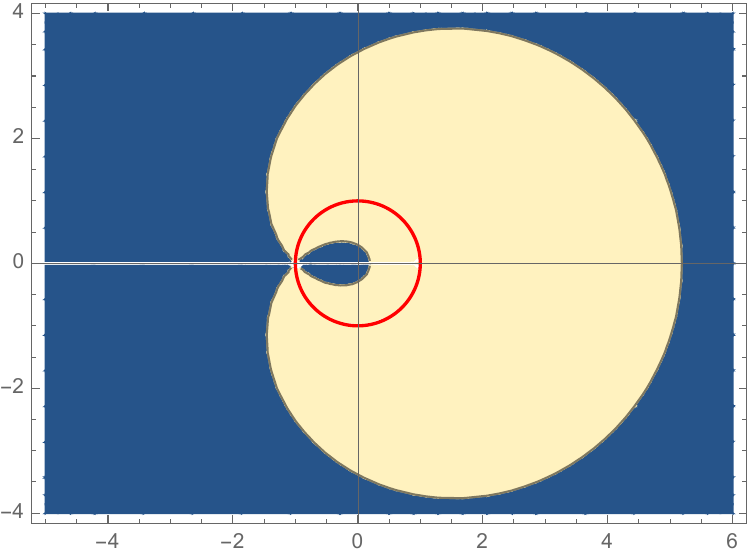}
    \caption{An example of the steepest ascent contour of $\psi_1(z;x,y)$ through $z_1^*$ and the regions $\Omega^+_1$ and $\Omega^-_1$. The steepest ascent contour is shown in red, $\Omega^+_1$ is the region in yellow, and $\Omega^-_1$ is the region in blue. Contour plot was generated with Mathematica.}
    \label{fig:smoothcontours}
\end{figure}

The above information is enough for the analysis of $I_{2,1}^\alpha$ and for the analysis of $I^\alpha_{2,2}$ in the smooth region. In order to analyze $I^\alpha_{2,2}$ in the rough region, we will need some information on the contours of steepest descent and ascent in the rough region. For $x \in (0,\nicefrac{1}{2})$ and $(x,y) \in \mathfrak{R}$, the additional saddle points of $\psi_2(z)$, $z_3^*$ and $z_4^*$, are complex. In particular, $z_3^* = \overline{z_4^*}$. We will let $z_3^*$ denote the complex saddle point with positive imaginary part. We define the following regions of the complex plane, 
\begin{equation}
    \Pi^+ = \{ z \in \mathbb{C} : \oRe\psi_2(z) > \oRe\psi_2(z_3^*) \} 
\end{equation}
\begin{equation}
    \Pi^- = \{ z \in \mathbb{C} : \oRe\psi_2(z) < \oRe\psi_2(z_3^*) \} 
\end{equation}
It suffices to define these regions just using $z_3^*$, since $\oRe \psi_2(z_3^*) = \oRe \psi_2(z_4^*)$. We now describe some important aspects about these regions, 
\begin{lemma} \label{lem:roughsaddles}
    Let $(x,y) \in \mathfrak{R}$ and $x \in (0,\nicefrac{1}{2})$, then
    \begin{enumerate}[label=(\roman*)]
        \item $\Pi^+$ contains at least two and at most three bounded components. One component will always contain $z = 1$, while the other guaranteed component contains at least one of the branch points, $-\alpha^2$ or $-\alpha^{-2}$. The third component, if it exists, must contain the other branch point, intersect the branch cut, and be positive distance away from the other components. 
        \item $\Pi^-$ contains a bounded component, which contains $z=0$, and an unbounded component.
    \end{enumerate}
\end{lemma}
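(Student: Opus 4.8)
The plan is to analyse the level set $\{\,z:\oRe\psi_1(z;x,y)=\oRe\psi_1(z_3^*;x,y)\,\}$, following the argument for the analogous statement in the two-periodic model in \cite{DK21} and keeping track of the overall sign flip $\Phi_k=-2\psi_{\eps,k}$ between their saddle function and ours (this interchanges maxima with minima and their $\Pi^{\pm}$ with ours). Since $x<\nicefrac{1}{2}$, the piecewise function \eqref{eq:saddlefunctionpw} equals $\psi_1=\psi_{\alpha,1}(\,\cdot\,;x,y)$ throughout, which is holomorphic on $\mathbb{C}\setminus\big(\{0,1\}\cup(-\infty,-\alpha^{-2}]\cup[-\alpha^2,0]\big)$, so $u:=\oRe\psi_1$ is harmonic off that set (equivalently, $\psi_1$ is one sheet of the function $\Psi$ on the Riemann surface of Section \ref{section:classofphases}). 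Note also that $u(z)=y\log|z|+O(1)\to-\infty$ as $z\to\infty$ by Lemma \ref{lem:realpsiasymptotics}, so $\Pi^{+}$ is bounded and ``at least two and at most three bounded components'' just means ``at least two and at most three components'' for $\Pi^{+}$.

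First I would record the behaviour of $u$ at its logarithmic singularities, which fixes the global topology. Near $z=1$, Lemma \ref{lem:rpole} gives $r_{\alpha,1}(z)\sim A\,(z-1)^{-2}$, so $\psi_1(z)=-2(1-x)\log(z-1)+O(1)$ and, because $x<\nicefrac{1}{2}$, $u(z)\to+\infty$ as $z\to1$; since $r_{\alpha,1}(0)=1$ one has $u(z)=(y+1)\log|z|+O(1)\to-\infty$ as $z\to0$; and as noted $u(z)\to-\infty$ as $z\to\infty$. Hence $\Pi^{+}$ has a bounded component containing $z=1$, and $\Pi^{-}$ has an unbounded component; this already gives the component of part (i) that contains $z=1$ and the unbounded component of part (ii). Moreover, because $u$ is finite and harmonic away from $\{0,1,\infty\}$, every level curve $\{u=c\}$, with $c:=\oRe\psi_1(z_3^*)=\oRe\psi_1(z_4^*)$ (equal since $z_4^*=\overline{z_3^*}$ and $\psi_1(\overline z)=\overline{\psi_1(z)}$), lies in a fixed compact set avoiding $0,1,\infty$.

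Next I would use the structure of level curves of harmonic functions: the arcs of $\{u=c\}$ are analytic and can meet only at critical points of $\psi_1$. In the rough region $\psi_1$ has exactly three critical points $z_1^*,z_3^*,z_4^*$, and only $z_3^*,z_4^*$ lie on level $c$ (unless $\oRe\psi_1(z_1^*)=c$, which cuts out a proper real-analytic subset of $\mathfrak{R}$ treated by continuity); so, away from that exceptional set, $\{u=c\}$ is a finite graph whose only vertices are the two simple saddles $z_3^*,z_4^*$, each a transverse self-crossing of valence $4$. Near a branch point $b\in\{-\alpha^2,-\alpha^{-2}\}$ the local form $\psi_1(z)=\psi_1(b)+C\sqrt{z-b}+\cdots$ shows that $\{u=c\}$ stays at positive distance from $b$ unless $u(b)=c$; hence if $u(b)>c$ then $\Pi^{+}$ contains an entire punctured neighbourhood of $b$ on the slit sheet, i.e.\ it straddles the branch cut there. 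An Euler-characteristic count on the Riemann sphere for the graph $\{u=c\}$, together with the alternating $\Pi^{+}/\Pi^{-}$ colouring of its complementary faces and the fact that the unbounded face lies in $\Pi^{-}$, bounds the number of components of $\Pi^{+}$ by three; combined with the previous paragraph it gives at least two, shows (since it is enclosed by $\overline{\Pi^{+}}$, whose two pieces meet at $z_3^*,z_4^*$) that $\Pi^{-}$ has a bounded component containing $z=0$, and shows that a third component of $\Pi^{+}$, when present, has its boundary passing through $b$ rather than through $z_3^*$ or $z_4^*$, hence is at positive distance from the other two.

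It remains to place the branch points relative to $c$, which is the crux. Here I would trace $u$ along the real axis, using Lemma \ref{lem:psimax} (in $\mathfrak{R}$, $z_1^*$ is a local minimum of $u|_{\mathbb{R}}$) and the limits $r_{\alpha,1}'/r_{\alpha,1}\to+\infty$ as $z\to(-\alpha^2)^-$, $\to-\infty$ as $z\to(-\alpha^{-2})^+$ from the proof of Lemma \ref{lem:psimax}, together with the degenerations at the two boundaries of $\mathfrak{R}$: as $(x,y)$ tends to $\partial\mathfrak{R}\cap\partial\mathfrak{F}$ the pair $z_3^*,z_4^*$ coalesces with $z_1^*$, so $c\downarrow u(z_1^*)$, while as $(x,y)$ tends to $\partial\mathfrak{R}\cap\partial\mathfrak{S}$ the pair coalesces onto a branch cut, so $c$ tends to $u$ at one of $-\alpha^2,-\alpha^{-2}$. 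Continuity of $c$ and of the relevant point values over (the connected pieces of) the rough region then forces $u(z_1^*)<c<\max\{u(-\alpha^2),u(-\alpha^{-2})\}$ throughout $\mathfrak{R}\cap\{x<\nicefrac{1}{2}\}$, so at least one branch point lies in $\Pi^{+}$; it lies in a component distinct from the one containing $z=1$ because $z_1^*$ (with $u(z_1^*)<c$) and the bounded component of $\Pi^{-}$ around $z=0$ separate them. Deciding exactly when $u(-\alpha^{-2})>c$ as well — equivalently when the third component is present — is a sign condition in the spirit of the strong-coupling inequalities of Section \ref{section:asymptoticsstatement}, and tracking which lip of the cut near that branch point lies in $\Pi^{+}$ gives the ``intersects the branch cut'' claim. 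The main obstacle is precisely this last step, namely obtaining the inequalities relating $c=\oRe\psi_1(z_3^*)$ to the branch-point and $z_1^*$-values uniformly over $\mathfrak{R}$; this is the content of the corresponding lemma in \cite{DK21}, to be imported with the sign change above, the remainder being bookkeeping.
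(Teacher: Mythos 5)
Your framework (level sets of the harmonic function $\oRe\psi_1$, the logarithmic singularities at $0$, $1$, $\infty$, reflection symmetry, and the maximum principle) is the same as the paper's, and your treatment of the component containing $z=1$, the unbounded component of $\Pi^-$, and the local $\sqrt{z-b}$ behaviour at the branch points is fine. The genuine gap is exactly the step you flag as the crux: you reduce the branch-point claim to the inequalities $\oRe\psi_1(z_1^*)<c<\max\{\oRe\psi_1(-\alpha^2),\oRe\psi_1(-\alpha^{-2})\}$ and then defer them to ``the corresponding lemma in \cite{DK21}.'' No such ready-made lemma exists for the rough region: the results this paper imports from \cite{DK21} (their Lemmas 6.8 and 6.9) are proved for the smooth region, which is precisely why Lemma \ref{lem:roughsaddles} comes with its own proof here. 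Your proposed substitute, a continuity/degeneration argument over $\mathfrak{R}$, is also flawed: at $\partial\mathfrak{R}\cap\partial\mathfrak{F}$ the complex pair $z_3^*,z_4^*$ coalesces on the \emph{positive} real axis (that is what the definition of $\mathfrak{F}$ in Section \ref{section:classofphases} says), not with $z_1^*$, so the claim $c\downarrow\oRe\psi_1(z_1^*)$ there is wrong; and even granting the correct boundary degenerations, continuity alone cannot exclude equality of $c$ with a branch-point value at interior points of $\mathfrak{R}$ --- which is essentially the statement to be proved. The paper's route avoids all of this and needs no comparison with $\oRe\psi_1(z_1^*)$: since $\oRe\psi_1$ is strictly monotone on $(0,1)$ and on $(1,\infty)$, at most one of the four level-$c$ half-paths emanating from $z_3^*$ can land in each of those intervals, so at least two land on the negative real axis; there $\oRe\psi_1$ attains its maximum at one of the branch points (its only interior local maxima on $(-\infty,0)$ are the branch points), so that branch point lies in $\Pi^+$ (or in the optional third component), and the Jordan curves formed by the arcs and their reflections give the separation from the $z=1$ component and the boundedness of the $\Pi^-$ component containing $0$.

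A secondary gap: the bound of at most three components of $\Pi^+$ is asserted via an Euler-characteristic count that you never carry out, and the count is not routine --- the level set can contain closed components not passing through $z_3^*,z_4^*$ (e.g.\ around $z=1$ or around a piece of the cut), and $\oRe\psi_1$ is only continuous, not harmonic, across the branch cuts, so the ``graph with two valence-four vertices'' picture is incomplete. The paper's argument supplies the missing ingredient: any component of $\Pi^+$ not containing $z=1$ must intersect a branch cut (otherwise the maximum principle on its closure forces $\oRe\psi_1\le c$ there), and $\oRe\psi_1$ is monotone along each cut piece, so each of the two cut pieces can meet at most one component; this yields both the upper bound of three and the description of the optional third component (it must contain the other branch point, meet the cut, and stay a positive distance from the other components).
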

\begin{proof}
    We will focus only on the upper half plane, since the regions $\Pi^\pm$ are symmetric about the real axis. The boundary between these regions are contours where $\oRe \psi_2(z) = \oRe \psi_2(z_3^*)$. These contours remain bounded, since $\oRe \psi_2(z) \to -\infty$ as $z \to \infty$. Additionally, these contours do not intersect anywhere in the upper half plane or on the real axis, because this would violate the maximum principle of harmonic functions. This means that the four half-paths emanating from $z_3^*$ intersect the real axis at four different points. To see the possible intersection points, we note the following limits,
    \begin{itemize}
        \item $\lim_{z \to \infty} \oRe \psi_2(z)  = -\infty$
        \item $\lim_{z \to 0} \oRe \psi_2(z)  = -\infty$
        \item $\lim_{z \to 1} \oRe \psi_2(z)  = +\infty$
    \end{itemize}
    We also have the following behavior on intervals, 
    \begin{itemize}
        \item $\oRe\psi_2(z)$ is strictly decreasing on $(1,\infty)$
        \item $\oRe\psi_2(z)$ is strictly increasing on $(0,1)$
        \item $\oRe\psi_2(z)$ is strictly increasing on $(-\infty,-\alpha^{-2})$
        \item $\oRe\psi_2(z)$ is strictly decreasing on $(-\alpha^2,z_1^*)$
        \item $\oRe\psi_2(z)$ is strictly increasing on $(z_1^*,-\alpha^2)$
        \item $\oRe\psi_2(z)$ is strictly decreasing on $(-\alpha^2,0)$
    \end{itemize}
    Note that $\oRe \psi_1(z) = \oRe \psi_2$ on the branch cuts, so the choice of analytic continuation does not matter. This means that one half-path emanating from $z_3^*$ intersects the real axis at some value greater than one and one half-path intersects the real axis on the interval $(0,1)$. The other two half paths should intersect the negative real axis. It is clear from the limits that $\Pi^+$ should contain the point $z=1$, $\Pi^-$ should contain the point $z=0$, and that $\Pi^-$ should contain an unbounded component. 

    Since $\oRe\psi_2(z)$ is not strictly increasing on the interval $(-\infty,0)$, where the two half-paths intersect this interval can vary. Either $-\alpha^2$ or $-\alpha^{-2}$ is a global maximum of $\oRe \psi_2(z)$ on $(-\infty,0)$, so at least one of them (possibly both) must be contained in $\Pi^+$. Alternatively, one of the branch points may be contained in a separate third component that intersect the branch cut.\footnote{It is unclear whether this actually occurs, but it does not affect our analysis either way.} There can only be one additional component since any additional components must intersect the branch cuts and $\oRe \psi_2(x)$ is monotonic along each cut. Similarly, there is no additional component of $\Pi^-$ because it would have to be away from the branch cuts and thus it would break the maximum/minimum principle of harmonic functions.
\end{proof}
\begin{figure}[ht]
    \centering
    \includegraphics[scale=0.5]{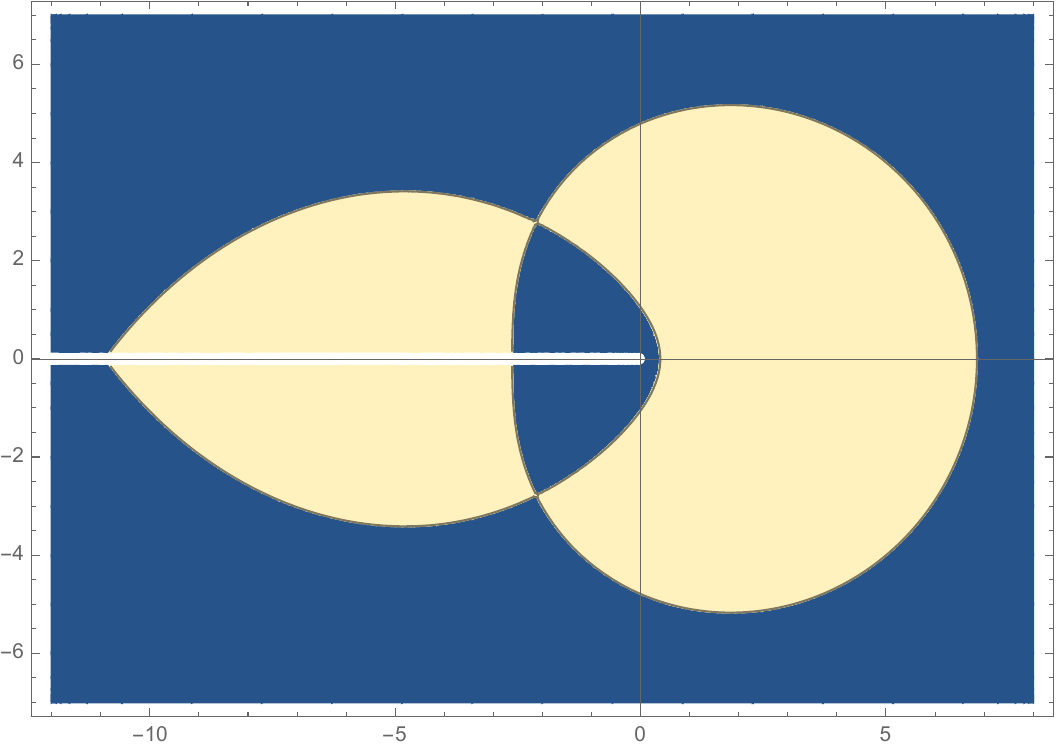}
    \caption{An example of the regions $\Pi^\pm$ for $(x,y) \in \mathfrak{R}$, $\Pi^-$ is the region in blue while $\Pi^+$ is the region in yellow. Contour plot was generated in Mathematica.}
    \label{fig:roughregions}
\end{figure}
An example of the regions $\Pi^\pm$ are shown in Figure \ref{fig:roughregions}. Since we know the contours of steepest ascent and descent through $z_3^*$ and $z_4^*$ should lie entirely in $\Pi^+ \backslash \{z_3^*,z_4^*\}$ and $\Pi^- \backslash \{z_3^*,z_4^*\}$, respectively, understanding these regions gives us enough information to understand the contours. We let $\gamma_{R,A}$ denote the contour of steepest ascent and we let $\gamma_{R,D}$ denote the contour of steepest descent. An example of these contours is given in Figure \ref{fig:roughcontours}. 
\begin{figure}[ht]
    \centering
    \begin{tikzpicture}
    \draw (0,-2) -- (0,2);
    \draw (-2,0) -- (2,0);
    \clip (-3,-2) rectangle (3,2);
    \draw[ultra thick] (-3,0) -- (-2,0);
    \draw[ultra thick] (-0.5,0) -- (0,0);
    \draw[ultra thick, blue] (0,0) ellipse (1.25 and 1);
    \draw[- stealth, ultra thick, red] (-1.6,1.72) -- (-1.4,1.45);
    \draw[ultra thick, red] plot [smooth] coordinates {(-2,2.3) (-1,0.9) (-0.5,0.4) (0.1,0) (-0.5,-0.4) (-1,-0.9) (-2,-2.3)};
    \draw[- stealth, ultra thick, blue] (0.1,1) -- (-0.1,1);
    \end{tikzpicture}
    \caption{Steepest descent and ascent contours for $\psi_2(z)$ in the rough region. The point where the contours cross are the two complex saddle points. The red contour is the path of steepest descent, while the blue contour is the path of steepest ascent.}
    \label{fig:roughcontours}
\end{figure}

%%%%%%%%%%%%%%%%%%%%%%%%%%%%%%%%%%%%%%%%%%%%%%%%%%%%%%%%%%%%%%%%%%%%%%%%%%%%%%%%%%%%%%%%%%%%%%%%%%%%%%%%%%%%%%%%%
\subsection{Initial Deformation of the Contours of $I_{2,k}^\alpha$} \label{section:contours}
Before we detail the main asymptotic analysis of the contour integrals $I^\alpha_{2,k}$, we will first deform the $w$-contours of the integral from whats presented in equation \eqref{eq:I2kdef}. We first push the $w$-contour, $\gamma_1$, out to infinity. This maneuver is allowed since there is no singularity at $w=\infty$. Then we wrap the contour around the branch cuts. The exact branch cuts depend on the relative magnitudes of $\alpha$ and $\beta$.\footnote{If $0 < \alpha < \beta < 1$ the branch cuts are $(-\infty, -\beta^{-2}] \cup [-\beta^2,0]$, otherwise if $0 < \beta < \alpha < 1$ they are $(-\infty, -\alpha^{-2}] \cup [-\alpha^2,0]$.} We will denote this new contour $\gamma_{\text{Br}}$. The new contours are shown in Figure \ref{fig:contourdeformation1}.
\begin{figure}[h]
    \centering
    \begin{tikzpicture}
    \draw (0,-2) -- (0,2);
    \draw (-4,0) -- (2,0);
    \clip (-4,-2) rectangle (3,2);
    \draw[ultra thick, blue] (0.25,0) ellipse (1.5 and 1.25);
    \draw[ultra thick, red] (-0.2,0) ellipse (0.3 and 0.1); 
    \draw[ultra thick, red] (-3.5,0) ellipse (1.2 and 0.1);
    \node at (0.3,0.3) {\tcr{$\gamma_{\text{Br},1}$}};
    \node at (-2,0.3) {\tcr{$\gamma_{\text{Br},2}$}};
    \node at (2.1,0.3) {\textcolor{blue}{$\gamma_{0,1}$}};
    \end{tikzpicture}
    \caption{Depiction of the new contours described in Section \ref{section:contours}, where $\gamma_{\text{Br}} = \gamma_{\text{Br},1} \cup \gamma_{\text{Br},2}$. The contours $\gamma_{\text{Br},1}$ and $\gamma_{\text{Br},2}$ are oriented clockwise.}
    \label{fig:contourdeformation1}
\end{figure}

Since the $w$-contour passes over $\gamma_{0,1}$ when performing this procedure, it introduces an extra contour integral because of the residue at $w=z$. So we have, 
\begin{multline} \label{eq:I2kdeformation}
    \Big[I^\alpha_{2,k}(4m',2\xi'+j;4m,2\xi+i)\Big]_{i,j=0}^1 \\
    = \frac{1}{2\pi\im}\oint_{\gamma_{0,1}} \frac{\dz}{z} z^{\xi' - \xi} r_{\alpha,2}(z)^{\nhalf-m'} r_{\alpha,k}(z)^{m-\frac{N}{2}}\frac{2}{1+2g_{\alpha,\beta}(z)} F_{\alpha,2}(z)F_{\beta,1}(z)F_{\alpha,1}(z)F_{\alpha,k}(z) \\
    + \frac{1}{(2\pi\im)^2} \oint_{\gamma_{\text{Br}}} \dw \oint_{\gamma_{0,1}} \frac{\dz}{z(z-w)} \frac{w^{\xi'+N}(z-1)^{N}}{z^{\xi+N}(w-1)^N} r_{\alpha,2}(w)^{\nhalf-m'} r_{\alpha,k}(z)^{m-\frac{N}{2}} \\
    \times \frac{2}{1+2g_{\alpha,\beta}(w)} F_{\alpha,2}(w)F_{\beta,1}(w)F_{\alpha,1}(w)F_{\alpha,k}(z).
\end{multline}
When $k = 2$, the single contour integral above is exactly zero. This is due to the orthogonality of the $F_\alpha$-matrices. So we can write, 
\begin{multline}
    \Big[I^\alpha_{2,2}(4m',2\xi'+j;4m,2\xi+i)\Big]_{i,j=0}^1 \\
    = \frac{1}{(2\pi\im)^2} \oint_{\gamma_1} \dw \oint_{\gamma_{0,1}} \frac{\dz}{z(z-w)} \frac{w^{\xi'+N}(z-1)^{N}}{z^{\xi+N}(w-1)^N} r_{\alpha,2}(w)^{\nhalf-m'} r_{\alpha,2}(z)^{m-\frac{N}{2}} \\
    \times \frac{2}{1+2g_{\alpha,\beta}(w)} F_{\alpha,2}(w)F_{\beta,1}(w)F_{\alpha,1}(w)F_{\alpha,2}(z).
\end{multline}
We may also write it in terms of the saddle functions, 
\begin{multline} \label{eq:I22saddle1}
    \Big[I^\eps_{2,2}(4m',2\xi'+j;4m,2\xi+i)\Big]_{i,j=0}^1 \\
    = \frac{1}{(2\pi\im)^2} \oint_{\gamma_{\text{Br}}} \dw \oint_{\gamma_{0,1}} \frac{\dz}{z(z-w)}\frac{2}{1+2g_{\alpha,\beta}(w)} F_{\alpha,2}(w)F_{\beta,1}(w)F_{\alpha,1}(w)F_{\eps,2}(z) \\
    \times \exp \left[ N\left(\psi_2(w;x,y) - \psi_2(z;x,y)\right) + \varphi_{\alpha,2}(w;x_2,y_2) - \varphi_{\alpha,2}(z;x_1,y_1) \right]
\end{multline}
However when $k = 1$, the single contour integral in equation \eqref{eq:I2kdeformation} is not zero. We provide the following lemma which proves that it decays exponentially in $N$,
\begin{lemma} \label{lem:singleintdecay}
    For $m, \, m', \, \xi,$ and $\xi'$ defined in equations \eqref{eq:localm} and \eqref{eq:localxi} such that $x < \nicefrac{1}{2}$, $m, \, m' < \nicefrac{N}{2}$ and $x_1, \, x_2 = o(N)$ ,
    $$\left|\frac{1}{2\pi\im}\oint_{\gamma_{0,1}} \frac{\dz}{z} z^{\xi' - \xi} r_{\alpha,2}(z)^{N-(m'+m)} \frac{2}{1+2g_{\alpha,\beta}(z)} F_{\alpha,2}(z)F_{\beta,1}(z)F_{\alpha,1}(z)\right| < Ce^{-cN}$$
\end{lemma}
\begin{proof}
Since there is no singularity at $z=1$, we deform the contour $\gamma_{0,1}$ to a circle of radius $1$ centered at the origin. We also write, 
$$M_\alpha(z) = \frac{2}{1+2g_{\alpha,\beta}(z)} F_{\alpha,2}(z)F_{\beta,1}(z)F_{\alpha,1}(z)$$
for brevity. So the contour integral can be written as, 
$$\frac{1}{2\pi\im}\oint_{|z|=1} \frac{\dz}{z} z^{\xi'-\xi} r_{\alpha,2}(z)^{N-(m'+m)} M_\alpha(z)$$
On the unit circle, $M_\alpha(z)$ is analytic and so it is bounded. Additionally, $|r_{\alpha,2}(z)| < 1$ on the unit circle. We may bound the integral accordingly, 
\begin{equation}
    \frac{1}{2\pi}\Bigg| \oint _{|z|=1} \frac{\dz}{z} z^{\xi'-\xi} r_{\eps,2}(z)^{N-(m+m')}M_\alpha(z)\Bigg| \leq C |r_{\alpha,2}^\text{max}|^{N-(m+m')} 
\end{equation}
where $r_{\alpha,2}^\text{max}$ denotes the maximum value of $r_{\alpha,2}(z)$ on the unit circle and $C$ is some constant matrix. Since $x < \nicefrac{1}{2}$ and $x_1, \, x_2 = o(N)$, the exponent $N-(m+m') = O(N)$ and is positive. The lemma follows. 
\end{proof}

%%%%%%%%%%%%%%%%%%%%%%%%%%%%%%%%%%%%%%%%%%%%%%%%%%%%%%%%%%%%%%%%%%%%%%%%%%%%%%%%%%%%%%%%%%%%%%%%%%%%%%%%%%%%%%%%%
\subsection{Local Asymptotics of $I^\alpha_{2,2}$} \label{section:I22proofs}
The goal of this section is to prove Proposition \ref{prop:I22decay}, which outlines the local asymptotics of the contour integral $I^\alpha_{2,2}$. The proposition is split into three cases and we will consider each of these cases separately. Before we prove the individual cases, we will first state and prove a lemma that will be used, 
\begin{lemma} \label{lem:F1F2exp}
    Let $z^*$ be any complex number besides $-\eps^2$ and $-\eps^{-2}$. Under the change of variables 
    \begin{equation*}
        z - z^* = sN^{-1/2} \qquad w-z^* = rN^{-1/2}
    \end{equation*}
    we have the expansion, 
    \begin{equation}
        F_{\eps,j}(w)F_{\eps,k}(z) = N^{-1/2}F_{\eps,j}(z^*)F'_{\eps,k}(z^*)(s-r) +O(N^{-1})
    \end{equation}
    for $j \neq k$.
\end{lemma}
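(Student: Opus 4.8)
The plan is to Taylor-expand both matrix factors about $z^*$ and to exploit that $F_{\eps,1}(z)$ and $F_{\eps,2}(z)$ are complementary spectral projections of $\Phi_\eps(z)$. First I would record the relevant algebraic identities. Since $\Phi_\eps(z)$ has distinct eigenvalues $r_{\eps,1}(z)\neq r_{\eps,2}(z)$ everywhere off the branch points $-\eps^2,-\eps^{-2}$ (the only points where the square root in \eqref{eq:r1}--\eqref{eq:r2} degenerates and $r_{\eps,1}=r_{\eps,2}=\pm1$), the matrices defined in \eqref{eq:f1}--\eqref{eq:f2} are the Riesz projections onto the two eigenlines and hence satisfy, on their common domain of analyticity (Lemma \ref{lem:Fanalytic}),
\begin{equation*}
    F_{\eps,1}(z)+F_{\eps,2}(z)=I,\qquad F_{\eps,j}(z)F_{\eps,k}(z)=\delta_{jk}F_{\eps,j}(z).
\end{equation*}
In particular $F_{\eps,j}(z)F_{\eps,k}(z)\equiv 0$ for $j\neq k$ as an identity of analytic functions, so differentiating it gives
\begin{equation*}
    F'_{\eps,j}(z)F_{\eps,k}(z)=-F_{\eps,j}(z)F'_{\eps,k}(z),\qquad j\neq k.
\end{equation*}

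Next I would carry out the expansion. Because $z^*\neq -\eps^2,-\eps^{-2}$, Lemma \ref{lem:Fanalytic} guarantees that $F_{\eps,j}$ and $F_{\eps,k}$ are analytic on a fixed neighborhood of $z^*$, so for $z,w$ in that neighborhood
\begin{align*}
    F_{\eps,j}(w)&=F_{\eps,j}(z^*)+(w-z^*)F'_{\eps,j}(z^*)+O\big((w-z^*)^2\big),\\
    F_{\eps,k}(z)&=F_{\eps,k}(z^*)+(z-z^*)F'_{\eps,k}(z^*)+O\big((z-z^*)^2\big),
\end{align*}
with error bounds uniform on a slightly smaller neighborhood. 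Substituting $w-z^*=rN^{-1/2}$ and $z-z^*=sN^{-1/2}$ and multiplying the two expansions, the constant term is $F_{\eps,j}(z^*)F_{\eps,k}(z^*)=0$ by orthogonality, while the $F'_{\eps,j}(z^*)F'_{\eps,k}(z^*)$ cross term and the two quadratic remainders are all $O(N^{-1})$; what survives at order $N^{-1/2}$ is $N^{-1/2}\big(r\,F'_{\eps,j}(z^*)F_{\eps,k}(z^*)+s\,F_{\eps,j}(z^*)F'_{\eps,k}(z^*)\big)$. Applying the derivative identity from the first step to the term carrying the factor $r$ turns it into $-r\,F_{\eps,j}(z^*)F'_{\eps,k}(z^*)$, so the two surviving terms combine to $N^{-1/2}(s-r)F_{\eps,j}(z^*)F'_{\eps,k}(z^*)$, which is exactly the asserted expansion.

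I do not anticipate a genuine obstacle here; the only points that need a little care are (i) confirming that $F_{\eps,j}F_{\eps,k}=0$ genuinely holds as an analytic identity on the whole domain — not merely pointwise at isolated $z$ — so that it may legitimately be differentiated, which is immediate from the Riesz-projection description of $F_{\eps,1},F_{\eps,2}$ on the eigenvalue branches; and (ii) the uniformity of the $O(N^{-1})$ error, which follows at once by confining $z,w$ to a fixed compact neighborhood of $z^*$ on which $F_{\eps,j},F_{\eps,k}$ and their derivatives are bounded.
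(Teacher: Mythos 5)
Your proof is correct and follows essentially the same route as the paper: Taylor-expand both factors at $z^*$, note that the constant term vanishes by orthogonality of the projections, and use the identity $F_{\eps,j}(z^*)F'_{\eps,k}(z^*) = -F'_{\eps,j}(z^*)F_{\eps,k}(z^*)$ to combine the order-$N^{-1/2}$ terms. The only difference is cosmetic: where the paper says "one can check" this identity, you derive it by differentiating the analytic identity $F_{\eps,j}F_{\eps,k}\equiv 0$, which is a clean way to do that check.
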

\begin{proof}
To deal with this, we should expand these two matrix valued functions around the point $z^*$,
\begin{align*}
    F_{\eps,1}(w) &= F_{\eps,1}(z^*) + F'_{\eps,1}(z^*)(w-z^*) + O((w-z^*)^2) \\
    F_{\eps,2}(z) &= F_{\eps,2}(z^*) + F'_{\eps,2}(z^*)(z-z^*) + O((z-z^*)^2) 
\end{align*}
The expansions are valid as long as we are away from the branch points. Applying the change of variable gives, 
\begin{equation*}
    F_{\eps,j}(w)F_{\eps,k}(z) = N^{-1/2} F_{\eps,j}(z^*)F'_{\eps,k}(z^*)s+ N^{-1/2} F_{\eps,j}'(z^*)F_{\eps,k}(z^*)r +O(N^{-1})
\end{equation*}
Lastly, one can check that 
\begin{equation*}
    F_{\eps,j}(z^*)F'_{\eps,k}(z^*) = -F_{\eps,j}'(z^*)F_{\eps,k}(z^*)
\end{equation*}
yielding the results stated.
\end{proof}

\subsubsection{Case 1: The asymptotic coordinate is in the rough region}
We start with the expression of $I^\alpha_{2,2}$ given in equation \eqref{eq:I22saddle1}. From this form of the integral we will deform the contours to the contours of steepest descent/ascent detailed in Section \ref{section:saddles} and Lemma \ref{lem:roughsaddles}. Importantly, Lemma \ref{lem:roughsaddles} guarantees we can successfully deform the $z$ contour. Since $\Pi^+$ is open and contains $-\alpha^2$ or $-\alpha^{-2}$ it must also contain a point on the real line not on the branch cuts $(-\infty,-\alpha^{-2}] \cup [-\alpha^2,0]$, thus guaranteeing that the $z$ contour stays inside $\Pi^+$. Recall that these contours are depicted in Figure \ref{fig:roughcontours}. We express the contour integral in the following manner, 
\begin{multline}
    \Big[I^\alpha_{2,2}(4m',2\xi'+j;4m,2\xi+i)\Big]_{i,j=0}^1 = \oint_{\gamma_{R,D}} \oint_{\gamma_{R,A}} \frac{\dz\dw}{z(z-w)}\frac{2F_{\alpha,2}(w)F_{\beta,1}(w)F_{\alpha,1}(w)}{1+2g_{\alpha,\beta}(w)} F_{\alpha,2}(z) \\
    \times \exp \left[ N\left(\psi_2(w;x,y) - \psi_2(z;x,y)\right) + \varphi_{\alpha,2}(w;x_2,y_2) - \varphi_{\alpha,2}(z;x_1,y_1) \right] 
\end{multline}
Now we can use saddle point methods to compute the leading order decay of this integral. Since there are two different saddle points, we must consider the contributions from the four combinations.\footnote{We can let $w = z_3^*$ or $w= z_4^*$. The same applies for $z$, thus resulting in four combinations.} The contributions where different saddle points are used for $w$ and $z$ are immediate. We get the leading order term, 
\begin{multline} \label{eq:I22rough_est1}
    \frac{F_{\alpha,2}(z_l^*)F_{\beta,1}(z_l^*)F_{\alpha,1}(z_l^*)F_{\alpha,2}(z_k^*)\exp\Big[N(\psi_2(z_l^*)-\psi_2(z_k^*)) + \varphi_2(z_l^*)-\varphi_2(z_k^*) + 2\im\,\theta_{lk}(x,y)\Big]}{\pi z_k^*(z_k^*-z_l^*)(1+2g_{\alpha,\beta}(z_l^*))N \sqrt{|\psi_2''(z_k^*)\psi_2''(z_l^*)|}}
\end{multline}
where either $l=3$ and $k=4$ or $l=4$ and $k=3$. The function $\theta_{lk}(x,y)$ depends on the angle of the steepest ascent and descent contours around the saddle points. The term $\psi_2(z_l^*)-\psi_2(z_k^*)$ is purely imaginary, and thus does not contribute to the decay of the term. The absolute value of the above can be bounded by $CN^{-1}$, where $C$ is some constant matrix. \\

Now we want to consider the leading order contributions from the terms where we consider $z$ and $w$ at the same saddle point. We start by expanding the function $\psi_2(z)$ around $z = z_k^*$,
$$\psi_2(z) = \psi_2(z_k^*) + \psi_2''(z_k^*)(z-z_k^*)^2 + O((z-z_k^*)^3)$$
In addition, we make the following change of variables,
\begin{equation}
    z-z_k^* = sN^{-1/2} \qquad\qquad w-z_k^* = rN^{-1/2}
\end{equation}
which leads to,
\begin{equation*}
    \frac{\dz\dw}{z-w} = N^{-1/2}\frac{\ds\dr}{s-r} + o(N^{-1/2})
\end{equation*}
Combining this gives, 
\begin{equation}
    N \Big[\psi_2(w) - \psi_2(z)\Big] = \psi_2''(z_k^*)(s^2 - r^2) + O(N^{-1/2})
\end{equation}
We also apply Lemma \ref{lem:F1F2exp} to get,
\begin{equation}
    \frac{2F_{\alpha,2}(w)F_{\beta,1}(w)F_{\alpha,1}(w)F_{\alpha,2}(z)}{1+2g_{\alpha,\beta}(w)} = \frac{2F_{\alpha,2}(z_k^*)F_{\beta,1}(z_k^*)F_{\alpha,1}(z_k^*)F'_{\alpha,2}(z_k^*)}{1+2g_{\alpha,\beta}(z_k^*)} N^{-1/2}(s-r) + O(N^{-1})
\end{equation}
Combining all this we get the leading order asymptotics,
\begin{equation} \label{eq:I22rough_est2}
    \frac{1}{(2\pi)^2}\frac{2F_{\alpha,2}(z_k^*)F_{\beta,1}(z_k^*)F_{\alpha,1}(z_k^*)F'_{\alpha,2}(z_k^*)}{z_k^*(1+2g_{\alpha,\beta}(z_k^*))} N^{-1} \int_{\Gamma_r} \int_{\Gamma_s} \exp \big[ \psi_2''(z_k^*)(s^2 - r^2)\big] \ds \dr
\end{equation}
Where $\Gamma_r$ and $\Gamma_s$ are lines through the origin whose angles depend on the angle of the contours of steepest descent and ascent, respectively. Moreover, as a result of choosing the steepest ascent/descent contours of $\psi_2$, we know the integral above converges. Combining the results from equations \eqref{eq:I22rough_est1} and \eqref{eq:I22rough_est2} we obtain the results presented in part (1) of Proposition \ref{prop:I22decay}.

\subsubsection{Case 2: The asymptotic coordinate is in the smooth region, but not in the strong-coupling region.} \label{section:case2}
We start with the definition of $I^\alpha_{2,2}$ presented in equation \eqref{eq:I22saddle1}. We then deform $\gamma_{0,1}$ to the steepest ascent contour through the saddle point $z_2^*$. We refer to this contour as $\gamma_{s,2}$ and it is described in Lemma \ref{lem:psi2contours}. The resulting contours are depicted in Figure \ref{fig:contour22alpha}.
\begin{figure}[h]
    \centering
    \begin{tikzpicture}
    \draw (0,-2) -- (0,2);
    \draw (-4,0) -- (2,0);
    \clip (-4,-2) rectangle (3,2);
    \draw[ultra thick, blue] (0.25,0) ellipse (1.5 and 1.25);
    \draw[ultra thick, red] (-0.2,0) ellipse (0.3 and 0.1); 
    \draw[ultra thick, red] (-3.5,0) ellipse (1.2 and 0.1);
    \node at (0.3,0.3) {\tcr{$\gamma_{\text{Br},1}$}};
    \node at (-2,0.3) {\tcr{$\gamma_{\text{Br},2}$}};
    \node at (2.1,0.3) {\textcolor{blue}{$\gamma_{s,2}$}};
    \end{tikzpicture}
    \caption{Depiction of the contours $\gamma_{s,2}$, $\gamma_{\text{Br},1}$, and $\gamma_{\text{Br},2}$ in the case of $(x,y) \in \mathfrak{S}$ and $(x,y) \not\in \mathfrak{C}$.}
    \label{fig:contour22alpha}
\end{figure}
We write,
\begin{multline}
    \Big[I^\alpha_{2,2}(4m',2\xi'+j;4m,2\xi+i)\Big]_{i,j=0}^1 = \frac{1}{(2\pi\im)^2}\oint_{\gamma_{\text{Br}}} \oint_{\gamma_{s,2}} \frac{\dz\dw}{z(z-w)}\frac{2F_{\alpha,2}(w)F_{\beta,1}(w)F_{\alpha,1}(w)}{1+2g_{\alpha,\beta}(w)}  \\
    \times F_{\alpha,2}(z) \exp \left[ N\left(\psi_2(w;x,y) - \psi_2(z;x,y)\right) + \varphi_{\alpha,2}(w;x_2,y_2) - \varphi_{\alpha,2}(z;x_1,y_1) \right]
\end{multline}
By Lemma \ref{lem:psimax} and since we are considering $(x,y) \not\in \mathfrak{C}$, we know the saddle point, $z_2^*$, lies away from the branch cuts $(-\infty, -\beta^{-2}) \cup (-\beta^2,0]$ and somewhere on the interval $(-\beta^{-2},-\beta^2)$. Corollary \ref{cor:contoursstronger} guarantees the contours $\gamma_{\text{Br},1}$ and $\gamma_{\text{Br},2}$ can be deformed so that they lie entirely inside of $\Omega^-_2$. Thus
\begin{equation}
    \oRe \psi_2(w;x,y) < \oRe \psi_2 (z_2^*;x,y) \leq \oRe \psi_2(z;x,y)
\end{equation}
for all $w \in \gamma_{\text{Br},1}\cup\gamma_{\text{Br},2}$ and for all $z \in \gamma_{s,2}$. Moreover, Lemma \ref{lem:realpsiasymptotics} implies that as $w \to \infty$, $\oRe \psi_2(w) \to -\infty$. Combining these to statement mean that the term
$$\exp\Big[N\big(\psi_2(w;x,y) - \psi_2(z;x,y)\big)\Big]$$
is $O(e^{-cN})$ for some $c > 0$, uniformly for $(z,w) \in \gamma_z \times \gamma_{\text{Br},1}\cup\gamma_{\text{Br},2}$.

\subsubsection{Case 3: The asymptotic coordinate is in the smooth region and strong-coupling region. } 
The asymptotics become more complicated in the intersection of the smooth region and the strong-coupling region. This is because the saddle point $z_2^*$ now lies in the branch cuts (of the $w$ integral). While the deformation of the $z$-contour remains the same, it is no longer true that both the contours $\gamma_{\text{Br},1}$ and $\gamma_{\text{Br},2}$ will lie entirely inside of $\Omega_2^-$.

Before we detail the proof of case (3) in Proposition \ref{prop:I22decay}, we should first explain when our analysis fall into this category. Suppose $z_2^* \in (-\alpha^{-2},-\beta^{-2}]$, then $z_4^* > -1$. This implies that $\psi'_{\alpha,2}(-\beta^{-2};x,y) < 0$. This inequality gives the following relationship between $x$ and $y$, 
\begin{equation}
    y < \frac{\beta^{-2}}{\beta^{-2}+1} + \beta^{-2}\left(\frac{1}{2}-x\right)\log r_{\alpha,2}(-\beta^{-2})
\end{equation}
Alternatively, supposing that $z_2^* \in [-\beta^{2},-\alpha^{2})$ implies $\psi'_{\alpha,2}(-\beta^{2};x,y) > 0$. This translates to the following inequality, 
\begin{equation}
    y > \frac{\beta^{2}}{\beta^{2}+1} + \beta^{2}\left(\frac{1}{2}-x\right)\log r_{\alpha,2}(-\beta^{2})
\end{equation}
These inequalities describe the location of the strong-coupling region of the model. Examples of this region are given back in Figure \ref{fig:I22regions}. We are now prepared to consider the analysis of $I^\alpha_{2,2}$ for $(x,y) \in \mathfrak{S} \cap \mathfrak{C}$.

\begin{figure}[h]
    \centering
    \begin{tikzpicture}
    \draw (0,-2) -- (0,2);
    \draw (-4,0) -- (2,0);
    \clip (-4,-2) rectangle (3,2);
    \draw[ultra thick, blue] (0.5,0) ellipse (0.8 and 0.75);
    \draw[ultra thick, red] (-0.2,0) ellipse (0.3 and 0.1); 
    \draw[ultra thick, red] (-3.5,0) ellipse (1.2 and 0.1);
    \node at (0.3,0.3) {\tcr{$\gamma_{\text{Br},1}$}};
    \node at (-2,0.3) {\tcr{$\gamma_{\text{Br},2}$}};
    \node at (1.5,0.35) {\textcolor{blue}{$\gamma_{s,2}$}};
    \end{tikzpicture}
    \caption{Example of the contours $\gamma_{s,2}$, $\gamma_{\text{Br},1}$, and $\gamma_{\text{Br},2}$ when $(x,y) \in \mathfrak{S} \cap \mathfrak{C}$.}
    \label{fig:contour22special}
\end{figure}

\begin{proof}[Proof of Proposition \ref{prop:I22decay} (3)]
We will assume $z_2^* \in [-\beta^{2},-\alpha^{2})$. The procedure is nearly identical in the case of $z_2^* \in (-\alpha^{-2},-\beta^{-2}]$, the only difference is that the $\gamma_{s,2}$ intersects $\gamma_{\text{Br},2}$ instead of $\gamma_{\text{Br},1}$.

We start by deforming the contours in the same way we did for the case (2) in Section \ref{section:case2}. Figure \ref{fig:contour22special} gives an example of the contour deformation in this case. We split up the integral according to the $w$-contours, 
\begin{multline} 
    \Big[I^\alpha_{2,2}(4m',2\xi'+j;4m,2\xi+i)\Big]_{i,j=0}^1 \\
    = \frac{1}{(2\pi\im)^2}\oint_{\gamma_{\text{Br},1}} \oint_{\gamma_{s,2}} \frac{\dz\dw}{z(z-w)}\frac{2F_{\alpha,2}(w)F_{\beta,1}(w)F_{\alpha,1}(w)}{1+2g_{\alpha,\beta}(w)}F_{\alpha,2}(z)  \\
    \times \exp \left[ N\left(\psi_2(w;x,y) - \psi_2(z;x,y)\right) + \varphi_{\alpha,2}(w;x_2,y_2) - \varphi_{\alpha,2}(z;x_1,y_1) \right] \\
    + \frac{1}{(2\pi\im)^2}\oint_{\gamma_{\text{Br},2}}  \oint_{\gamma_{s,2}} \frac{\dz\dw}{z(z-w)}\frac{2F_{\alpha,2}(w)F_{\beta,1}(w)F_{\alpha,1}(w)}{1+2g_{\alpha,\beta}(w)} F_{\alpha,2}(z) \\
    \times \exp \left[ N\left(\psi_2(w;x,y) - \psi_2(z;x,y)\right) + \varphi_{\alpha,2}(w;x_2,y_2) - \varphi_{\alpha,2}(z;x_1,y_1) \right]. 
\end{multline}
Despite the fact that the contours $\gamma_{\text{Br},1}$ and $\gamma_{s,2}$ intersect, there is no single integral created due to the orthogonality of the $F_\alpha$-matrices. Since $\gamma_{\text{Br},2}$ can still be deformed to lie entirely inside of $\Omega_2^-$, we know that 
\begin{multline*}
    \frac{1}{(2\pi\im)^2}\oint_{\gamma_{\text{Br},2}} \oint_{\gamma_{s,2}} \frac{\dz\dw}{z(z-w)}\frac{2 F_{\alpha,2}(w)F_{\beta,1}(w)F_{\alpha,1}(w)}{1+2g_{\alpha,\beta}(w)} F_{\alpha,2}(z)\\
    \times \exp \left[ N\left(\psi_2(w;x,y) - \psi_2(z;x,y)\right) + \varphi_{\alpha,2}(w;x_2,y_2) - \varphi_{\alpha,2}(z;x_1,y_1) \right] 
\end{multline*}
is $O(e^{-cN})$ for the same reasons as detailed in Section \ref{section:case2}. 

Now we separate the contour $\gamma_{\text{Br},1}$ into two pieces: $\gamma_{\text{Br},\uparrow}$, which is the half lying above the real axis, and $\gamma_{\text{Br},\downarrow}$, which is the half lying below the real axis. We deform both of these contours to the negative real axis, which is the contour of steepest descent through $z_2^*$. We will start by just considering the contour integral 
\begin{multline} 
    \frac{1}{(2\pi\im)^2}\oint_{\gamma_{\text{Br},\uparrow}} \oint_{\gamma_{s,2}} \frac{\dz\dw}{z(z-w)}\frac{2 F_{\alpha,2}(w)F_{\beta,1}(w)F_{\alpha,1}(w)}{1+2g_{\alpha,\beta}(w)}F_{\alpha,2}(z) \\
    \times \exp \left[ N\left(\psi_2(w;x,y) - \psi_2(z;x,y)\right) + \varphi_{\alpha,2}(w;x_2,y_2) - \varphi_{\alpha,2}(z;x_1,y_1) \right] 
\end{multline}
Expanding the saddle function $\psi_2(z)$ around the saddle point $z_2^*$ yields 
\begin{equation}
    \psi_2(z;x,y) = \psi_2(z_2^*) + \psi_2''(z_2^*)(z-z_2^*)^2 + O((z-z_2^*)^3).
\end{equation}
Assuming we are away from the macroscopic boundaries, it follows that $\psi_2''(z_2^*) \neq 0$. Using the change of variables,
\begin{equation} \label{eq:changeofvariable2}
    z - z_2^* = sN^{-1/2} \qquad w-z_2^* = rN^{-1/2}
\end{equation}
and the expansion in Lemma \ref{lem:F1F2exp}, we have
\begin{equation}
    \frac{2F_{\alpha,2}(w)F_{\beta,1}(w)F_{\alpha,1}(w)F_{\alpha,2}(z)}{z(1+2g_{\alpha,\beta}(w))} = C_1 N^{-1/2}(s-r) + O(N^{-1}) 
\end{equation}
where $C_1$ is the matrix-valued constant,
\begin{equation}
    C_1 = \lim_{a \to 0^+} \frac{2F_{\alpha,2}(z_2^*+ \im a)F_{\beta,1}(z_2^*+ \im a)F_{\alpha,1}(z_2^*+ \im a)F'_{\alpha,2}(z_2^*+ \im a)}{(z_2^*+ \im a)(1+2g_{\alpha,\beta}(z_2^*+ \im a))}.
\end{equation}
One should note that while $F_{\beta,1}(w)$ has a pole at $w = -\beta^2$, so does $g_{\alpha,\beta}(w)$ and these poles cancel out in the limit. Applying the change of variable given in equation \eqref{eq:changeofvariable2} to the whole double contour integral yields, 
\begin{multline} \label{eq:92}
    \frac{1}{(2\pi\im)^2}\oint_{\gamma_{\text{Br},\uparrow}} \oint_{\gamma_{0,1}} \frac{\dz\dw}{z(z-w)}\frac{2F_{\alpha,2}(w)F_{\beta,1}(w)F_{\alpha,1}(w)}{1+2g_{\alpha,\beta}(w)}F_{\alpha,2}(z) \\
    \times \exp \left[ N\left(\psi_2(w;x,y) - \psi_2(z;x,y)\right) + \varphi_{\alpha,2}(w;x_2,y_2) - \varphi_{\alpha,2}(z;x_1,y_1) \right] \\
    = - C_1 \exp\big[ \varphi_{\alpha,2}(z_2^*; x_2-x_1, y_2-y_1) \big] N^{-1} \int_{\mathbb{R}} \dr \int_{\im\mathbb{R}} \ds \exp \Big[\psi_2''(z_2^*)(r^2-s^2)\Big] + o(N^{-1}).
\end{multline}
By Lemma \ref{lem:psimax}, we know that $\oRe \psi''_2(z_2^*) < 0$, since $z_2^*$ is a maximum of $\oRe \psi''_2(z)$, and so the integral above is convergent. We repeat the procedure with the contour $\gamma_{\text{Br},\downarrow}$, noting that 
\begin{equation}
    \frac{2F_{\alpha,2}(w)F_{\beta,1}(w)F_{\alpha,1}(w)F_{\alpha,2}(z)}{z(1+2g_{\alpha,\beta}(w))} = C_2 N^{-1/2}(s-r) + O(N^{-1}) 
\end{equation}
where, 
\begin{equation}
    C_2 = \lim_{a \to 0^-} \frac{2F_{\alpha,2}(z_2^*+ \im a)F_{\beta,1}(z_2^*+ \im a)F_{\alpha,1}(z_2^*+ \im a)F'_{\alpha,2}(z_2^*+ \im a)}{(z_2^*+ \im a)(1+2g_{\alpha,\beta}(z_2^*+ \im a))}.
\end{equation}
This constant is different from $C_1$ above, so the two terms do not cancel each other out. Note that the matrices $F_{\alpha,k}$, $F_{\beta,1}$, and $g_{\alpha,\beta}$ have different branch cuts, so it is not true that $C_1 = \bar{C}_2$. Expanding the double contour integral below the branch cut gives, 
\begin{multline} \label{eq:94}
    \frac{1}{(2\pi\im)^2}\oint_{\gamma_{\text{Br},\downarrow}} \oint_{\gamma_{0,1}} \frac{\dz\dw}{z(z-w)}\frac{2F_{\alpha,2}(w)F_{\beta,1}(w)F_{\alpha,1}(w)}{1+2g_{\alpha,\beta}(w)} F_{\alpha,2}(z)\\
    \times \exp \left[ N\left(\psi_2(w;x,y) - \psi_2(z;x,y)\right) + \varphi_{\alpha,2}(w;x_2,y_2) - \varphi_{\alpha,2}(z;x_1,y_1) \right] \\
    = C_2 \exp\big[ \varphi_{\alpha,2}(z_2^*; x_2-x_1, y_2-y_1) \big] N^{-1} \int_{\mathbb{R}} \dr \int_{\im\mathbb{R}} \ds \exp \Big[\psi_2''(z_2^*)(r^2-s^2)\Big] + o(N^{-1})
\end{multline}
Combining equations \eqref{eq:92} and \eqref{eq:94}, we conclude that for $(x,y) \in \mathfrak{S} \cap \mathfrak{C}$,
\begin{equation*}
    I_{2,2}(4m',2\xi'+j;4m,2\xi+i) = O(N^{-1}) 
\end{equation*}
where $i,j \in \{0,1\}$. 
\end{proof}

%%%%%%%%%%%%%%%%%%%%%%%%%%%%%%%%%%%%%%%%%%%%%%%%%%%%%%%%%%%%%%%%%%%%%%%%%%%%%%%%%%%%%%%%%%%%%%%%%%%%%%%%%%%%%%%%%%%%%%%%%%%%%%%%%%%%%%%%%%%%%%%%%%%%%%%%%%%%%%%%%%%%%%%%%%%%%%%%%%%%%%%%%%%%%%%%%%%%%%%%%%%%%%%%%%
\subsection{Local Asymptotics of $I^\alpha_{2,1}$} \label{section:I21proofs}
The asymptotics of the contour integral $I^\alpha_{2,1}$ are much simpler, in the sense that they only depend on $(x,y)$ being away from any macroscopic boundary and $x < \nicefrac{1}{2}$. We use the work done in Section \ref{section:contours} as a launching off point. In Lemma \ref{lem:singleintdecay} we showed that the single integral that arises from manipulating the contours decays exponentially in $N$, so to prove Proposition \ref{prop:I21decay} we only need to show that the following integral decays exponentially as well, 
\begin{multline}
    \frac{1}{(2\pi\im)^2} \oint_{\gamma_{\text{Br}}} \dw \oint_{\gamma_{0,1}} \frac{\dz}{z(z-w)}\frac{2}{1+2g_{\alpha,\beta}(w)} F_{\alpha,2}(w)F_{\beta,1}(w)F_{\alpha,1}(w)F_{\eps,1}(z) \\
    \times \exp \left[ N\left(\psi_2(w;x,y) - \psi_1(z;x,y)\right) + \varphi_{\alpha,2}(w;x_2,y_2) - \varphi_{\alpha,1}(z;x_1,y_1) \right]
\end{multline}
From Lemma \ref{lem:psimax}, we know that $\psi_1(z)$ has a maximum at $z_1^*$, so $\gamma_{0,1}$ should be deformed to the steepest ascent contour through this point, $\gamma_{s,1}$. After this deformation, we only need to show that 
\begin{equation} \label{eq:claimI21}
     \oRe \psi_2(w) < \oRe \psi_1(z_1^*)
\end{equation}
for all $w \in \gamma_{\text{Br}}$. Once this point is justified, the proof will follow the proof of case (2) in Proposition \ref{prop:I22decay}. In particular, adding the statement in Lemma \ref{lem:realpsiasymptotics} to the statement above is enough to conclude
$$\exp\Big[N\big(\psi_2(w;x,y) - \psi_1(z;x,y)\big)\Big]$$
is $O(e^{-cN})$ for some $c > 0$, uniformly for $(z,w) \in \gamma_z \times \gamma_{\text{Br},1} \cup \gamma_{\text{Br},2}$.

We will now prove the claim in equation \eqref{eq:claimI21}. We cannot immediately apply Corollary \ref{cor:contoursstronger} as we did in Section \ref{section:case2}, since now we are trying to compare $\psi_1(z_1^*)$ to $\psi_2(w)$. We use the fact that $r_{\alpha,2}(w) = r_{\alpha,1}(w)^{-1}$ and the fact that $x < \nicefrac{1}{2}$ to write,
\begin{align}
    \oRe \psi_2(w) &= (1+y)\log|w| - \log|w-1| + \left(\frac{1}{2}-x\right)\log|r_{\alpha,2}(w)| \nonumber\\
    &= (1+y)\log|w| - \log|w-1| - \left(\frac{1}{2}-x\right)\log|r_{\alpha,1}(w)| \nonumber\\
    &\leq (1+y)\log|w| - \log|w-1| + \left(\frac{1}{2}-x\right)\log|r_{\alpha,1}(w)| = \oRe\psi_1(w) \label{eq:test}
\end{align}
where equality only occurs when $w \in (-\infty, -\alpha^{-2}] \cup [-\alpha^2,0]$. We define the following sets, 
$$\Omega_{2,1}^+ = \left\{ w \in \mathbb{C} : \oRe \psi_2(w) > \oRe \psi_1(z_1^*) \right\}$$
$$\Omega_{2,1}^- = \left\{ w \in \mathbb{C} : \oRe \psi_2(w) < \oRe \psi_1(z_1^*) \right\}$$
We want to show that $\gamma_{\text{Br}}$ can be deformed such that $\gamma_{\text{Br}} \subset \Omega_{2,1}^-$. For $z_1^* \in (-\beta^{-2},-\beta^2)$, this is immediate from combining Corollary \ref{cor:contoursstronger} with the inequality in equation \eqref{eq:test}. Care should be taken in the case that $z_1^* \in (-\alpha^{-2},-\beta^{-2}] \cup [-\beta^2,-\alpha^2)$. In this case, we need the fact that the inequality in equation \eqref{eq:test} is strict at $z_1^*$, so  
$$\oRe \psi_2(z_1^*) < \oRe \psi_1(z_1^*)$$
When we combine this inequality with the continuity of $\oRe \psi_2(w)$, we can conclude that $z_1^*$ is a positive distance away from $\Omega_{2,1}^+$, and so we can deform the contours in the desired manner. 

%%%%%%%%%%%%%%%%%%%%%%%%%%%%%%%%%%%%%%%%%%%%%%%%%%%%%%%%%%%%%%%%%%%%%%%%%%%%%%%%%%%%%%%%%%%%%%%%%%%%%%%%%%%%%%%%%%%%%%%%%%%%%%%%%%%%%%%%%%%%%%%%%%%%%%%%%%%%%
\section{An Intermediate Correlation Kernel via a Non-Intersecting Paths Process} \label{section:BDstuff}
In Section \ref{section:pathsprocess}, we introduced a non-intersection paths model that we claimed was equivalent to the split two-periodic Aztec diamond. In this section we begin by justifying the equivalence of the models. This explanation can be found in Section \ref{section:processes}. These results are not new, the bijection was first detailed in \cite[Section 5]{BD19} and then further described in \cite{CD23}. Once we introduce this process, the formulation of the kernel follows from a slight modification of \cite[Theorem 5.2]{BD19}. This work is done in Section \ref{section:intermediatekernelproof}.

\subsection{Non-Intersecting Paths Process} \label{section:processes}
Recall the definition of the Aztec diamond graph given by equations \eqref{eq:coordstart}-\eqref{eq:edgecoord}. From this definition, we would like to label the four types of edges found in the Aztec diamond graph. We call an edge of the form $((2j+1, 2k), (2j+2, 2k+1))$ an \textit{east} edge, an edge of the form $((2j, 2k+1), (2j, 2k+2))$ a \textit{west} edge, an edge of the form $((2j, 2k+1), (2j+1, 2k))$ a \textit{south} edge, and an edge of the form $((2j+1, 2k), (2j+2, 2k-1))$ a \textit{north} edge. The different types of edges are depicted in Figure \ref{fig:DiamondtoDR}.

\subsubsection{The Aztec Diamond to DR Paths}
We will start by relating the Aztec diamond graph to the DR lattice paths model, which was first introduced in \cite{Jo03}. Under an appropriate choice of weighting this model is equivalent to the Aztec diamond. We will explain this bijection here. To start, we will describe the DR graph of size $n$. The DR graph of size $n$ has vertices given by, 
\begin{equation}
    \tV^{\text{DR}}_n = \{ (2j, 2k-1) : 0 \leq j \leq n, 0 \leq k \leq n\}
\end{equation}
and edges given by, 
\begin{multline}
    \tE^{\text{DR}}_n = \{ ((2j, 2k-1),(2j+2, 2k-1)) : 0 \leq j \leq n-1, 0 \leq k \leq n \} \\
    \cup \{ ((2j, 2k-1),(2j, 2k-3)) : 0 \leq j \leq n, 1 \leq k \leq n \}\\
    \cup \{ ((2j, 2k-1),(2j+2, 2k-3)) : 0 \leq j \leq n-1, 1 \leq k \leq n\}
\end{multline}
\begin{figure}[h]
    \centering
    \begin{tikzpicture}[scale=0.75]
    \foreach \t in {0,2,4,6,8}
        {
        \draw (0,\t) -- (8,\t);
        \draw (\t,0) -- (\t,8);
        \draw (0,\t) -- (\t,0);
        \draw (8,\t) -- (\t,8);
        \foreach \k in {0,2,4,6,8}
            {
            \filldraw[black] (\t,\k) circle (2pt);
            }
        }
        \draw[line width=0.1cm] (0,2) -- (2,2);
        \draw[line width=0.1cm] (2,2) -- (2,0);

        \draw[line width=0.1cm] (0,4) -- (2,4);
        \draw[line width=0.1cm] (2,4) -- (4,2);
        \draw[line width=0.1cm] (4,2) -- (4,0);

        \draw[line width=0.1cm] (0,6) -- (2,6);
        \draw[line width=0.1cm] (2,6) -- (4,6);
        \draw[line width=0.1cm] (4,6) -- (4,4);
        \draw[line width=0.1cm] (4,4) -- (6,2);
        \draw[line width=0.1cm] (6,2) -- (6,0);

        \draw[line width=0.1cm] (0,8) -- (2,8);
        \draw[line width=0.1cm] (2,8) -- (4,8);
        \draw[line width=0.1cm] (4,8) -- (6,6);
        \draw[line width=0.1cm] (6,6) -- (6,4);
        \draw[line width=0.1cm] (6,4) -- (8,4);
        \draw[line width=0.1cm] (8,4) -- (8,2);
        \draw[line width=0.1cm] (8,2) -- (8,0);
        \foreach \k in {0,2,4,6,8}
            {
            \node at (-1,\k) {\k};
            \node at (\k, -1) {\k};
            }
    \end{tikzpicture}
    \caption{The DR graph of size $n = 4$ with a non-intersecting paths configuration.}
    \label{fig:DRgraph}
\end{figure}
An example DR graph is shown in Figure \ref{fig:DRgraph}. We can draw non-intersecting path on the DR graph that start at the vertices $\{(0,2k-1): 1 \leq k \leq n \}$ and end at the vertices $\{(2j,0): 1 \leq j \leq n\}$. Thus the DR graph of size $n$ has $n$ non-intersecting paths on it. There is a bijection between coverings of the Aztec diamond and DR paths. Generally speaking, west edges become horizontal edges, east edges become vertical edges, and north edges become diagonal edges. Specifically we write, 
\begin{itemize}
    \item If a dimer covers the edge $((2j,2k+1),(2j+1,2k+2) \in \tE^{\text{Az}}_n$ then the edge $((2j,2k+1),(2j+2,2k+1)) \in \tE^{\text{DR}}_n$ is contained in a DR path.
    \item If a dimer covers the edge $((2j,2k+1),(2j-1,2k) \in \tE^{\text{Az}}_n$ then the edge $((2j,2k+1),(2j,2k-1)) \in \tE^{\text{DR}}_n$ is contained in a DR path.
    \item If a dimer covers the edge $((2j,2k+1),(2j+1,2k) \in \tE^{\text{Az}}_n$ then the edge $((2j,2k+1),(2j+2,2k-1)) \in \tE^{\text{DR}}_n$ is contained in a DR path.
\end{itemize}
If we translate our edge weights in equation \eqref{eq:edgewts} to the language of north, south, east and west edges we see that south and west edges always have a weight of 1, while north and east edges can have a weight of 1, $\eps^2$ or $\eps^{-2}$, where $\eps \in  \{ \alpha, \beta\}$. Since south edges are the only ones that do not translate to the DR paths and they always have a weight of 1 in our convention, the edge weights from the Aztec diamond directly translate to edge weights on the DR graph without changing the statistics of the model.
\begin{figure}[h]
    \centering
    \begin{tikzpicture}[scale=0.5]
    \foreach \t in {1,3,5,7}
        {
        \draw (0,\t) -- (\t,0);
        \draw (\t,8) -- (8,\t);
        \draw (\t,0) -- (8,8-\t);
        \draw (0,\t) -- (8-\t,8);
        }
    \foreach \j in {0,1,2,3,4}
        \foreach \k in {0,1,2,3}
            {
            \filldraw[black] (2*\j,2*\k+1) circle (2pt);
            }
    \foreach \j in {0,1,2,3}
        \foreach \k in {0,1,2,3,4}
            {
            \filldraw[color=black,fill=white] (2*\j+1,2*\k) circle (2pt);
            }
    \draw[line width=0.1cm,red] (0,7) -- (1,8);
    \draw[line width=0.1cm,red] (0,5) -- (1,6);
    \draw[line width=0.1cm,red] (0,3) -- (1,4);
    \draw[line width=0.1cm,red] (2,7) -- (3,8);
    \draw[line width=0.1cm,red] (2,5) -- (3,6);
    \draw[line width=0.1cm,blue] (7,0) -- (8,1);
    \draw[line width=0.1cm,blue] (5,0) -- (6,1);
    \draw[line width=0.1cm,blue] (1,2) -- (2,3);
    \draw[line width=0.1cm,blue] (5,4) -- (6,5);
    \draw[line width=0.1cm,blue] (3,4) -- (4,5);
    \draw[line width=0.1cm,green] (0,1) -- (1,0);
    \draw[line width=0.1cm,green] (2,1) -- (3,0);
    \draw[line width=0.1cm,green] (4,7) -- (5,6);
    \draw[line width=0.1cm,green] (6,3) -- (7,2);
    \draw[line width=0.1cm,green] (4,3) -- (5,2);
    \draw[line width=0.1cm,yellow] (7,8) -- (8,7);
    \draw[line width=0.1cm,yellow] (5,8) -- (6,7);
    \draw[line width=0.1cm,yellow] (7,6) -- (8,5);
    \draw[line width=0.1cm,yellow] (7,4) -- (8,3);
    \draw[line width=0.1cm,yellow] (3,2) -- (4,1);
    \end{tikzpicture}
    \hspace{1cm}
    \begin{tikzpicture}[scale=0.5]
    \foreach \t in {0,2,4,6,8}
        {
        \draw (0,\t) -- (8,\t);
        \draw (\t,0) -- (\t,8);
        \draw (0,\t) -- (\t,0);
        \draw (8,\t) -- (\t,8);
        \foreach \k in {0,2,4,6,8}
            {
            \filldraw[black] (\t,\k) circle (2pt);
            }
        }
        \draw[line width=0.1cm,red] (0,8) -- (2,8);
        \draw[line width=0.1cm,red] (2,8) -- (4,8);
        \draw[line width=0.1cm,green] (4,8) -- (6,6);
        \draw[line width=0.1cm,blue] (6,6) -- (6,4);
        \draw[line width=0.1cm,green] (6,4) -- (8,2);
        \draw[line width=0.1cm,blue] (8,2) -- (8,0);
        
        \draw[line width=0.1cm,red] (0,6) -- (2,6);
        \draw[line width=0.1cm,red] (2,6) -- (4,6);
        \draw[line width=0.1cm,blue] (4,6) -- (4,4);
        \draw[line width=0.1cm,green] (4,4) -- (6,2);
        \draw[line width=0.1cm,blue] (6,2) -- (6,0);
        
        \draw[line width=0.1cm,red] (0,4) -- (2,4);
        \draw[line width=0.1cm,blue] (2,4) -- (2,2);
        \draw[line width=0.1cm,green] (2,2) -- (4,0);
        
        \draw[line width=0.1cm,green] (0,2) -- (2,0);
    \end{tikzpicture}
    \caption{On the left is a covering of the Aztec diamond of size $n = 4$. East, west, south, and north edges are depicted in blue, red, green, and yellow respectively. The corresponding paths on the DR graph are shown in the figure on the right.}
    \label{fig:DiamondtoDR}
\end{figure}

\subsubsection{The Tower Aztec Diamond Graph}
The tower Aztec diamond graph was first described in \cite{BD19} and later named in \cite{CD23}. Informally, the $(n,p)$ tower Aztec diamond graph  consists of two Aztec diamonds, one of size $n$ and the other of size $n-1$, stitched together by a strip of the rotated square grid of length $p$. An example of the tower Aztec diamond graph is show in Figure \ref{fig:TowerGraphs}. The $(n,p)$ tower Aztec diamond graph has the vertex sets, 
\begin{equation}
    \tW_{n,p}^{\text{Tow}}= \left\{(2j+1,2k):0 \leq n-1, \, -p-n+1 \leq k \leq n \right\}
\end{equation}
\begin{multline}
    \tB_{n,p}^{\text{Tow}} = \big\{ (2j, 2k+1) : 0 \leq j \leq n, \, 0 \leq k \leq n-1 \text{ or }\\
    0 \leq j \leq n-1, \, -p \leq k \leq -1 \text{ or } 1 \leq j \leq n-1, \, -p-n \leq k \leq -1-p \big\}
\end{multline}
and edge set give by, 
\begin{multline}
    E^{\text{Tow}}_{n,p} = \big\{((2j+1, 2k),(2j+2, 2k+1)) : 0 \leq j \leq n-1, \, 0 \leq k \leq n-1
\text{ or }\\
0 \leq j \leq n-2, \, -p-n+1 \leq k \leq -1 \big\} \\
\cup \big\{((2j+1, 2k),(2j, 2k+1)) : 0 \leq j \leq n-1, \, -p \leq k \leq n-1
\text{ or }\\
0 \leq j \leq n-2, \, -p-n+1 \leq k \leq -1-p \big\} \\ 
\cup \big\{((2j+1, 2k),(2j + 2, 2k-1)) : 0 \leq j \leq n-1, \, 1 \leq k \leq n
\text{ or }\\ 
0 \leq j \leq n-2, \, 1-p-n \leq k \leq n \big\} \\
\cup \big\{((2j + 1, 2k),(2j, 2k-1)) : 0 \leq j \leq n-1,\, 1-p \leq k \leq n
\text{ or }\\
1 \leq j \leq n-1, 1-p-n \leq k \leq n \big\}
\end{multline}
One should realize that it is not possible to have a dimer configuration on the tower Aztec diamond where one dimer contains both a vertex from the corridor and a vertex from either of the appending Aztec diamonds. In fact, there is only one possible dimer configuration on the corridor. Thus the statistics of the tower Aztec diamond can be simplified to the statistics on the two appending Aztec diamond. See \cite{CD23} for more detail.
\begin{figure}[h]
    \centering
    \includegraphics[scale=0.5]{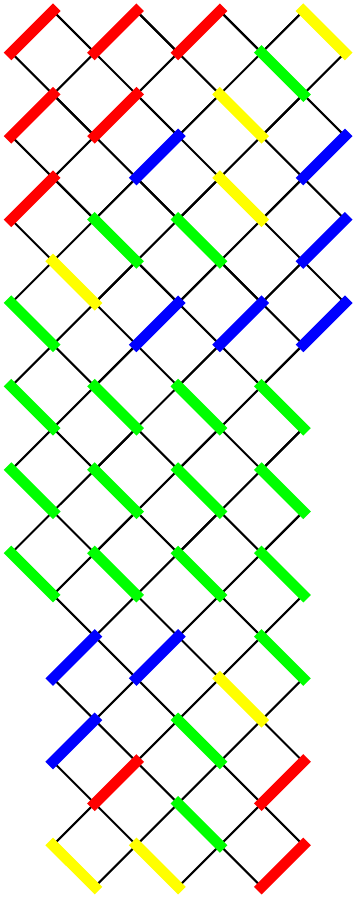}
    \hspace{1cm}
    \includegraphics[scale=0.5]{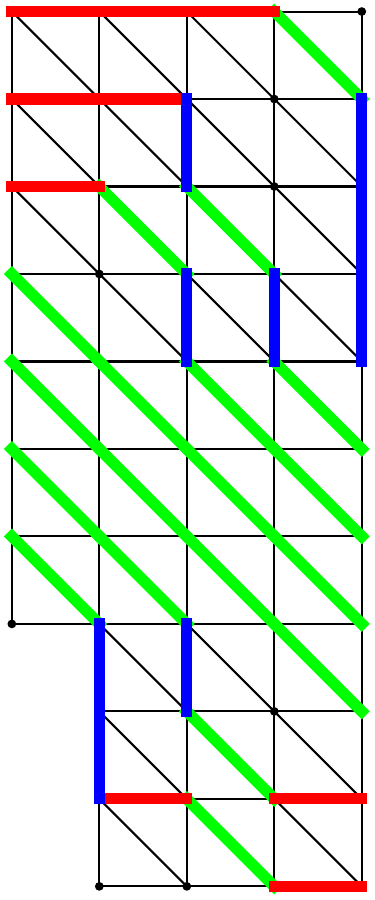}
    \caption{A dimer configuration on the $(4,3)$ tower Aztec diamond along with the analogous non-intersecting paths configuration on the $(4,3)$ tower DR graph.}
    \label{fig:TowerGraphs}
\end{figure}
%%%%%%%%%%%%%%%%%%%%%%%%%%%%%%%%%%%%%%%%%%%%%%%%%%%%%%%%%%%%%%%%%%%%%%%%%%%%%%%%%%%%
\subsubsection{Tower Aztec Diamond to Paths Process}
Just like in the case of the typical Aztec diamond, there is a bijection between the tower Aztec diamond and the tower DR graph, also depicted in Figure \ref{fig:TowerGraphs}. The vertices and edges of the tower DR graph are,
\begin{equation}
    \tV^{\text{Tow,DR}}_{n,p} = \big\{(2j, 2k-1) : 0 \leq j \leq n,\, - p \leq k \leq n \text{ or } 1 \leq j \leq n,\, -p-n + 1 \leq k \leq -p-1\big\}
\end{equation}
\begin{multline}
    \tE^{\text{Tow,DR}}_{n,p} = \big\{((2j, 2k-1),(2j + 2, 2k-1)) : 0 \leq j \leq n - 1,\, -p \leq k \leq n \big\}\\
    \cup \big\{((2j, 2k-1),(2j, 2k-3)) : 0 \leq j \leq n,\, -p + 1 \leq k \leq n\big\} \\
    \cup \big\{((2j, 2k-1),(2j + 2, 2k-3)) : 0 \leq j \leq n - 1,\, -p + 1 \leq k \leq n\big\} \\
    \cup \big\{((2j, 2k-1),(2j + 2, 2k-1)) : 1 \leq j \leq n -1,\, -n - p \leq k \leq -p - 1\big\} \\
    \cup \big\{((2j, 2k - 1),(2j, 2k-3)) : 1 \leq j \leq n,\, -n - p + 1 \leq k \leq-p - 1\big\} \\
    \cup \big\{((2j, 2k -1),(2j + 2, 2k - 3)) : 1 \leq j \leq n - 1, -n - p + 1 \leq k \leq -p -1\big\} \\
\end{multline}
Once again, we are considering a non-intersecting paths model on the tower DR graph. In this case there are $n+p$ paths starting at the vertices $\{(0,2k-1): -p \leq k \leq n \}$ and ending at the vertices $\{(2n,2k-1): -p-n+1 \leq k \leq p \}$. Once again, the weights from the tower Aztec diamond translate directly to the tower DR graph without changing the statistics as long as the south edges have weight one. 
\begin{figure}[h]
    \centering
    \includegraphics[scale=0.5]{TowerDR43.pdf}
    \hspace{1cm}
    \includegraphics[scale=0.5]{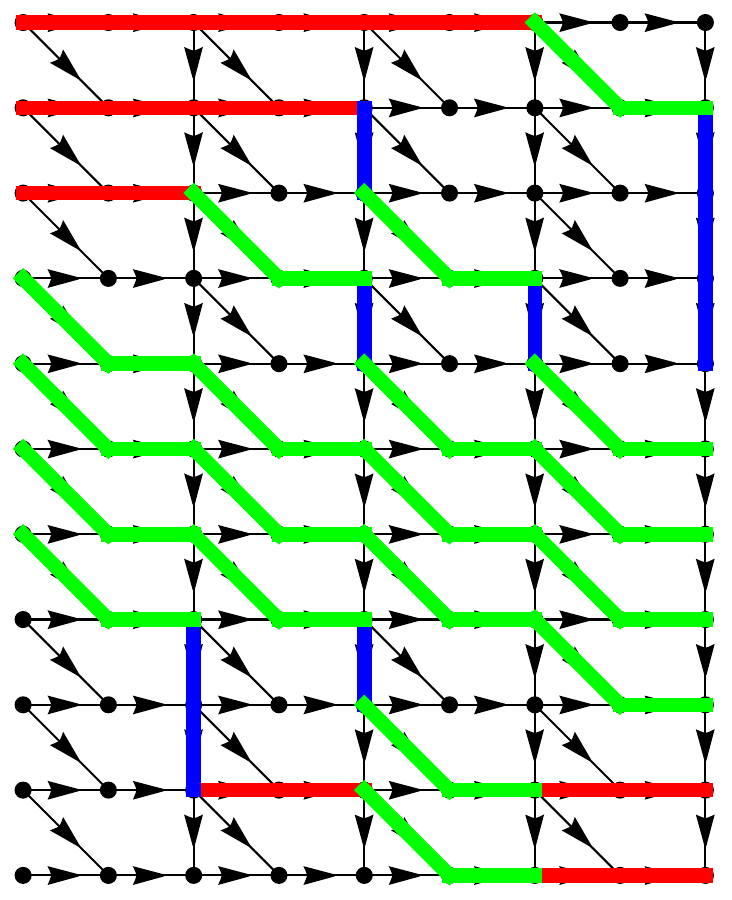}
    \caption{The tower DR graph and the equivalent tower graph from the work of Berggren and Duits. The two non-intersecting path configurations are in bijection.}
    \label{fig:TowerToBDPaths}
\end{figure}

The tower DR path process is nearly the process described in \cite{BD19}. To get exactly the process in their work, we should change the diagonal edges of the DR tower graph into a combination of a horizontal edge and diagonal edge, as depicted in Figure \ref{fig:TowerToBDPaths}. The additional horizontal edges will have weight one and thus not effect the statistics of the model. Once again we consider $n+p$ non intersection paths which start on the left at the top $n + p$ vertices and end on the right at the bottom $n+p$ vertices. 

Thus if we send the length of the corridor to infinity, the correlation kernel for the top part of the model,referred to as $K_{top}$, in \cite{BD19}) is also a correlation kernel of the Aztec diamond of size $n$. This correlation kernel is stated explicitly in \cite[Theorem 3.1]{BD19}.

\subsection{Proof of Lemma \ref{lemma:BDstuff}} \label{section:intermediatekernelproof}
Now we have introduced the necessary background to prove Lemma \ref{lemma:BDstuff}. We start by defining the matrices
\begin{equation} \label{eq:phiepsa}
    \Phi_{\eps,a}(z) = \frac{1}{(1-a^2 z^{-1})^2}\begin{pmatrix}1 & \frac{\eps^2}{a z} \\ \frac{1}{\eps^2 a} & 1\end{pmatrix}\begin{pmatrix}1 & \frac{\eps^2 a}{z} \\ \frac{a}{\eps^2} & 1\end{pmatrix} \begin{pmatrix}1 & \frac{1}{a z} \\ \frac{1}{a} & 1\end{pmatrix}\begin{pmatrix}1 & \frac{a}{z} \\ a & 1\end{pmatrix}
\end{equation}
and
\begin{equation} \label{eq:phian}
    \phi_{a,N}(z) = \Phi_{\alpha,a}(z)^{\frac{N}{2}}\Phi_{\beta,a}(z)^{\frac{N}{2}}
\end{equation}
Introducing this parameter $a \in (0,1)$ is necessary in order for us to directly apply \cite[Theorem 3.1]{BD19}. This is because the theorem does not permit singularities on the unit circle. We will call this weighting a split biased two-periodic weighting. Ultimately, we will send $a \to 1$ to obtain results for our model of interest. To compute the kernel, we will need to utilize the Weiner-Hopf factorization,
\begin{equation}
    \phi_{a,N}(z) = \tilde{\phi}_{-,a,N}(z)\tilde{\phi}_{+,a,N}(z)
\end{equation}
For a full description of the derivation and properties of the matrices $\tilde{\phi}_{-,a,N}(z)$ and $\tilde{\phi}_{+,a,N}(z)$ see \cite[Section 3]{BD19}. It is important to note that $\tilde{\phi}_{+,a,N}(z)^{\pm 1}$ is analytic for $|z| < 1$ and continuous for $|z| \leq 1$, while $\tilde{\phi}_{-,a,N}(z)^{\pm 1}$ is analytic for $|z|>1$ and continuous for $|z| \geq 1$. Moreover, the factorization commutes with the limit $a \to 1$. \\

The proof of Lemma \ref{lemma:BDstuff} depends on which side of the interface the coordinate $(4m',2\xi'+j)$ lies on. We will split this proof into two cases: when $m' \leq \nicefrac{N}{2}$ and $m' > \nicefrac{N}{2}$. 

%%%%%%%%%%%%%%%%%%%%%%%%%%%%%%%%%%%%%%%%%%%%
\subsubsection{Proof when $m' \leq \nicefrac{N}{2}$}
From \cite[Theorem 3.1]{BD19} we may immediately write the correlation kernel for the split biased two-periodic Aztec diamond,
\begin{multline} \label{eq:intialBDkernel}
    \Big[\mathbb{K}_{a,N}(4m',2\xi'+j;4m,2\xi+i)\Big]_{i,j=0}^1 = -\frac{\mathbb{I}_{m>m'}}{2 \pi \im} \oint_{\gamma_{0,1}} \frac{\dz}{z} z^{\xi'-\xi} \Phi_{\alpha,a}(z)^{\frac{N}{2}-m'}\Phi_{\eps,a}(z)^{m-\frac{N}{2}} \\
    + \frac{1}{(2\pi\im)^2} \oint_{\gamma_a} \dw \oint_{\gamma_{0,1,a}} \frac{\dz}{z(z-w)} \frac{w^{\xi'}}{z^{\xi}} \Phi_{\alpha,a}(w)^{\frac{N}{2}-m'}\Phi_{\beta,a}(w)^\frac{N}{2}\tilde{\phi}_{+,a,N}(w)^{-1}\tilde{\phi}_{-,a,N}(z)^{-1}\phi_{m,a,N}(z)
\end{multline}
where 
\begin{equation}
    \phi_{m,a,N}(z) = \begin{cases}
        \Phi_{\alpha,a}(z)^m & \text{if } m \leq \nicefrac{N}{2} \\
        \Phi_{\alpha,a}(z)^{\frac{N}{2}}\Phi_{\beta,a}(z)^{m-\frac{N}{2}} & \text{if } m > \nicefrac{N}{2}
    \end{cases}.
\end{equation}
The contour $\gamma_a$ is a simple closed curve around $a^2$, while the contour $\gamma_{0,1,a}$ is a simple closed curve containing $0$, $1$, and $\gamma_a$. To get the correlation kernel for the split two-periodic Aztec diamond, we wish to take the limit as $a \to 1$. The problem is the double contour integral in equation \eqref{eq:intialBDkernel} is singular at both $a^2$ and $a^{-2}$ (with respect to $w$), and these singularities are converging on opposite sides of the contour $\gamma_a$ in the limit. We must manipulate the integrand accordingly. We have
\begin{align}
    \Phi_{\alpha,a}(w)^{\frac{N}{2}-m'}\Phi_{\beta,a}(w)^{\frac{N}{2}}\tilde{\phi}_{+,a,N}(w)^{-1} &= \Phi_{\alpha,a}(w)^{-m'}\Phi_{\alpha,a}(w)^{\frac{N}{2}}\Phi_{\beta,a}(w)^{\frac{N}{2}}\tilde{\phi}_{+,a,N}(w)^{-1} \label{eq:2}\\
    &= \Phi_{\alpha,a}(w)^{-m'}\phi_{a,N}(w) \tilde{\phi}_{+,a,N}(w)^{-1}. \label{eq:3}
\end{align}
Now we can apply the eigen-decomposition of $\phi_{a,N}(w)=\Phi_{\alpha,a}(z)^{\frac{N}{2}}\Phi_{\beta,a}(z)^{\frac{N}{2}}$ to the above. We let $r_{a,1,N}(w)$ and $r_{a,2,N}(w)$ denote the eigenvalues of $\phi_{a,N}(w)$, which we do not state explicitly here. The key properties of the eigenvalues are stated below. Let $E_{a,N}(w)$ denote the eigenvector matrix of $\phi_{a,N}(w)$, with respect to the above eigenvalue ordering. Then $F_{a,1,N}(w)$ and $F_{a,2,N}(w)$ are the following matrices,
\begin{equation}
    F_{a,1,N}(w) = E_{a,N}(w)\begin{pmatrix} 1 & 0 \\ 0 & 0 \end{pmatrix}E_{a,N}(w)^{-1}
\end{equation}
\begin{equation}
    F_{a,2,N}(w) = E_{a,N}(w)\begin{pmatrix} 0 & 0 \\ 0 & 1 \end{pmatrix}E_{a,N}(w)^{-1}
\end{equation}
Using this decomposition, equation \eqref{eq:3} becomes, 
\begin{multline} \label{eq:1}
    \Phi_{\alpha,a}(w)^{\frac{N}{2}-m'}\Phi_{\beta,a}(w)^{\frac{N}{2}}\tilde{\phi}_{+,a,N}(w)^{-1} =\\
    \Phi_{\alpha,a}(w)^{-m'}\big(r_{a,1,N}(w)F_{a,1,N}(w) + r_{a,2,N}(w)F_{a,2,N}(w) \big) \tilde{\phi}_{+,a,N}(w)^{-1}
\end{multline}
The following lemma will be necessary for further simplification and is proven in Section \ref{section:analysisofeigen}. 
\begin{lemma} \label{lem:1}
    The eigenvalue $r_{a,1,N}(w)$ has a pole at $w=a^2$ and is analytic and non-zero at $w=a^{-2}$. The eigenvalue $r_{a,2,N}(w)$ is analytic and non-zero at $w=a^2$ and has a zero at $w=a^{-2}$. There are no other poles or zeros of the eigenvalue functions in the neighborhood of the contour $\gamma_a$.
\end{lemma}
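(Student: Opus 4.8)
The plan is to extract the singular behaviour of $r_{a,1,N}$ and $r_{a,2,N}$ from that of the two scalar invariants $t(z):=\operatorname{tr}\phi_{a,N}(z)$ and $d(z):=\det\phi_{a,N}(z)$, since the eigenvalues are precisely the roots of $\lambda^2-t(z)\lambda+d(z)=0$. First I would record the exact formula $d(z)=(a^2z-1)^{2N}a^{-4N}(z-a^2)^{-2N}$, which follows from a one-line determinant computation on \eqref{eq:phiepsa}--\eqref{eq:phian}: the four elementary factors of $\Phi_{\eps,a}$ have determinants $1-a^{-2}z^{-1}$ (twice) and $1-a^2z^{-1}$ (twice), with no $\eps$-dependence, and the scalar prefactor contributes $(1-a^2z^{-1})^{-4}$ to $\det\Phi_{\eps,a}$. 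Thus $d$ has a pole of order exactly $2N$ at $z=a^2$, a zero of order exactly $2N$ at $z=a^{-2}$, and no other zeros or poles near $\gamma_a$; in particular, since $r_{a,1,N}r_{a,2,N}=d$, the only zero of either eigenvalue in a neighbourhood of $\gamma_a$ sits at $z=a^{-2}$. Next I would write $\Phi_{\eps,a}(z)=\tfrac{z^2}{(z-a^2)^2}Q_\eps(z)$, where $Q_\eps$ is the product of the four elementary matrices — a matrix polynomial in $z^{-1}$, hence analytic off $z=0$ — so that $\phi_{a,N}(z)=\tfrac{z^{2N}}{(z-a^2)^{2N}}Q_\alpha(z)^{N/2}Q_\beta(z)^{N/2}$ with the matrix factor analytic at $z=a^{\pm2}$. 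Everything then reduces to the behaviour of that matrix factor at the two points $a^{\pm2}$.

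The algebraic core is that among the four elementary factors of $Q_\eps$, the two whose determinant is $1-a^2z^{-1}$ drop to rank one at $z=a^2$, while the two whose determinant is $1-a^{-2}z^{-1}$ drop to rank one at $z=a^{-2}$. Hence $Q_\eps(a^2)$ and $Q_\eps(a^{-2})$ have rank at most one, and a short direct computation makes them explicit: for instance $Q_\eps(a^2)=(1+\eps^2)(1+a^2)^2a^{-4}\,u_\eps v^{T}$ with $u_\eps=(1,a\eps^{-2})^{T}$ and $v=(1,a^{-1})^{T}$ (note $v$ is independent of $\eps$), which is rank one with trace $(1+\eps^2)(1+\eps^{-2})(1+a^2)^2a^{-4}\neq0$, and $Q_\eps(a^{-2})$ has the same shape. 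Multiplying powers and using $(uv^{T})^{k}=(v^{T}u)^{k-1}uv^{T}$, the only inner products entering $Q_\alpha(a^{\pm2})^{N/2}Q_\beta(a^{\pm2})^{N/2}$ are of the form $1+\eps^{-2}$ (besides the harmless scalar prefactors), so these products are again rank one with nonzero trace. Consequently $t(z)=\tfrac{z^{2N}}{(z-a^2)^{2N}}\operatorname{tr}\!\bigl(Q_\alpha(z)^{N/2}Q_\beta(z)^{N/2}\bigr)$ has a pole of order exactly $2N$ at $z=a^2$, whereas $\phi_{a,N}$ is analytic at $z=a^{-2}$ with $\phi_{a,N}(a^{-2})$ of rank one, hence has eigenvalues $\{t(a^{-2}),0\}$ there with $t(a^{-2})\neq0$.

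It remains to pass from $t$ and $d$ to the eigenvalues. I would write $r_{a,1,N},r_{a,2,N}=\tfrac12\bigl(t\pm\sqrt{t^2-4d}\bigr)$, fixing the branch of the square root near $z=a^2$ by $\sqrt{t^2-4d}\sim t$ — legitimate since near $a^2$ one has $t^2\sim(z-a^2)^{-4N}$, which dominates $4d\sim(z-a^2)^{-2N}$. Then near $z=a^2$ the eigenvalue $r_{a,1,N}$ inherits the pole of $t$, of order $2N$, while $r_{a,2,N}=d/r_{a,1,N}$ is analytic and nonzero; and since near $z=a^{-2}$ the function $d$ vanishes to order $2N$ while $t$ is analytic and nonzero, the pinned square root stays $\sim t$, giving $r_{a,1,N}(a^{-2})=t(a^{-2})\neq0$ and a zero of $r_{a,2,N}$ (of order $2N$) at $z=a^{-2}$. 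This is exactly the labelling asserted in the lemma; it is the natural one and is consistent with the limit $a\to1$, in which $r_{a,k,N}\to r_{k,N}$ and Lemmas \ref{lem:rnpole} and \ref{lem:rpole} locate a pole (respectively a zero) of order $2N$ of $r_{1,N}$ (respectively $r_{2,N}$) at $z=1$. Finally, an eigenvalue function can fail to be analytic and nonzero only at a pole of $t$ or $d$ (near $\gamma_a$ these lie only in $\{0,a^2,a^{-2}\}$), at a zero of $d$ (only $a^{-2}$), or at a branch point $t^2=4d$; near $a^2$ the growth of $t^2$ beats that of $4d$, and away from $a^2$ in a small disk about $z=1$ one has $t\to t_N:=\operatorname{tr}\phi_N$ and $d\to1$ as $a\to1$, with $t_N=(z-1)^{-2N}h(z)$ and $h(1)\neq0$, so $t_N^2-4$ is bounded away from $0$ there; hence for $a$ close to $1$ and $\gamma_a$ a suitable loop around $a^2$ inside that disk, $t^2-4d$ has no zero in a neighbourhood of $\gamma_a$.

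The main obstacle is this last point: making the degeneration of the biased picture to the $a=1$ picture precise enough to exclude the branch points $t^2=4d$ from a neighbourhood of $\gamma_a$ uniformly as $a\to1$. By contrast, the rank bookkeeping at $z=a^{\pm2}$ is elementary once one identifies which elementary factors degenerate there, and the inner products one must verify are all manifestly nonzero.
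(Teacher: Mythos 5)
Your proof is correct and, while it shares the paper's skeleton (control $t=\operatorname{tr}\phi_{a,N}$ and $d=\det\phi_{a,N}$, then recover the eigenvalues as roots of $\lambda^2-t\lambda+d$ together with $r_{a,1,N}r_{a,2,N}=d$), the way you obtain the local data for $t$ is genuinely different. The paper works from the explicit eigenvalues $r_{a,\eps,1},r_{a,\eps,2}$ of $\Phi_{\eps,a}$ and the function $g_{a,\alpha,\beta}$: it first argues that any extra pole or zero of an eigenvalue would force a pole of $\operatorname{tr}\phi_{a,N}$, locates the poles of $g_{a,\alpha,\beta}$ on the negative real axis (hence away from $\gamma_a$), and then computes Laurent expansions of $r_{a,\eps,k}$ and $g_{a,\alpha,\beta}$ at $w=a^{\pm 2}$, with coefficients recorded in Appendix \ref{appendix:coefficients}, to conclude that $t$ has a pole of order exactly $2N$ at $a^2$ and is finite and non-zero at $a^{-2}$. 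You instead factor $\Phi_{\eps,a}(z)=\frac{z^2}{(z-a^2)^2}Q_\eps(z)$ from \eqref{eq:phiepsa} and exploit the rank-one degeneration of $Q_\eps$ at $a^{\pm2}$; your outer-product formulas check out ($Q_\eps(a^2)=(1+\eps^2)(1+a^2)^2a^{-4}\,u_\eps v^{T}$ with $v$ independent of $\eps$, and the analogous identity at $a^{-2}$ with $\eps$-independent right vector $(1,a)^{T}$), and they yield the exact pole order of $t$ at $a^2$, the rank-one structure of $\phi_{a,N}(a^{-2})$, and, as a by-product, analyticity of $t$ away from $\{0,a^2\}$, so the ``no other poles or zeros near $\gamma_a$'' clause comes for free rather than through the poles of $g_{a,\alpha,\beta}$; this is a cleaner route that avoids the appendix coefficients altogether. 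One comment on what you single out as the main obstacle: excluding zeros of $t^2-4d$ near $\gamma_a$ is not needed for the lemma as stated, since a branch point is neither a pole nor a zero of an eigenvalue (there both eigenvalues equal $t/2$, and $t$, $d$ have no common zero near $\gamma_a$, the only zero of $d$ being $a^{-2}$, where $t\neq0$); what that discussion buys is the consistency of the labelling between the two points, i.e.\ that the branch carrying the pole at $a^2$ is the same one that is non-zero at $a^{-2}$. The paper's proof is silent on this and simply reads the conclusion off the two local expansions of $t$ and $d$, so your continuation sketch is a refinement beyond the paper's own treatment rather than a gap in yours — though, as you note, making it rigorous would require a uniform estimate in the intermediate regime $|w-a^2|\asymp|1-a^2|$, which neither you nor the paper supplies.
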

\noindent We will also need the following statement which is a generalization of Lemma \ref{lem:FNanalytic}, 
\begin{lemma} \label{lem:FNanalyticGEN}
    For any $a \in (0,1]$, the matrices $F_{a,1,N}(w)$ and $F_{a,2,N}(w)$ are analytic at $w=a^2$ and $w=a^{-2}$.
\end{lemma}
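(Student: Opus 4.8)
The plan is to represent the spectral projections $F_{a,1,N}$ and $F_{a,2,N}$ by the Lagrange interpolation formula and then read off analyticity from a pole‑order count. Near $w=a^2$ and near $w=a^{-2}$ the two eigenvalues of $\phi_{a,N}(w)$ are distinct: by Lemma \ref{lem:1}, at $w=a^2$ the eigenvalue $r_{a,1,N}$ has a pole while $r_{a,2,N}$ stays bounded and analytic, and at $w=a^{-2}$ the eigenvalue $r_{a,2,N}$ vanishes while $r_{a,1,N}$ is analytic and nonzero. Hence $\phi_{a,N}(w)$ is diagonalizable in punctured neighbourhoods of these two points, and there
\[
F_{a,1,N}(w)=\frac{\phi_{a,N}(w)-r_{a,2,N}(w)\,I}{r_{a,1,N}(w)-r_{a,2,N}(w)},\qquad F_{a,2,N}(w)=I-F_{a,1,N}(w).
\]
In particular it suffices to show that $F_{a,1,N}$ extends analytically across $w=a^2$ and $w=a^{-2}$.

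Next I would pin down the relevant orders. From \eqref{eq:phiepsa}, each $\Phi_{\eps,a}(z)$ is $(1-a^2z^{-1})^{-2}$ times a product of four matrices whose entries are analytic away from $z=0$; therefore $\Phi_{\eps,a}$ has at most a double pole at $z=a^2$ and, for $a\in(0,1)$, is analytic at $z=a^{-2}$. By \eqref{eq:phian} this gives that $\phi_{a,N}(w)$ has a pole of order at most $2N$ at $w=a^2$ and is analytic at $w=a^{-2}$. A short determinant computation (using $\det M=1-cd$ for $M=\left(\begin{smallmatrix}1&c\\d&1\end{smallmatrix}\right)$) yields $\det\Phi_{\eps,a}(z)=\big(\tfrac{a^2z-1}{a^2(z-a^2)}\big)^2$, independent of $\eps$, and hence $\det\phi_{a,N}(w)=\big(\tfrac{a^2w-1}{a^2(w-a^2)}\big)^{2N}$, which has a pole of order exactly $2N$ at $w=a^2$ and a zero of order exactly $2N$ at $w=a^{-2}$. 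Since $r_{a,1,N}r_{a,2,N}=\det\phi_{a,N}$, combining this with Lemma \ref{lem:1} forces $r_{a,1,N}$ to have a pole of order exactly $2N$ at $w=a^2$ and $r_{a,2,N}$ to have a zero of order exactly $2N$ at $w=a^{-2}$.

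I would then conclude at each point. At $w=a^2$ the numerator $\phi_{a,N}(w)-r_{a,2,N}(w)I$ has a pole of order at most $2N$ while the denominator $r_{a,1,N}(w)-r_{a,2,N}(w)$ has a pole of order exactly $2N$ (and is nonzero in a punctured neighbourhood, $r_{a,1,N}$ being unbounded there), so $F_{a,1,N}$ is bounded and its isolated singularity at $w=a^2$ is removable. At $w=a^{-2}$ the numerator is analytic and the denominator equals $r_{a,1,N}(a^{-2})\neq0$ there, so $F_{a,1,N}$ is analytic. Thus $F_{a,1,N}$, and with it $F_{a,2,N}=I-F_{a,1,N}$, is analytic at $w=a^2$ and $w=a^{-2}$. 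The only remaining case is $a=1$, where $a^2=a^{-2}=1$; this is precisely Lemma \ref{lem:FNanalytic} together with $F_{1,N}+F_{2,N}=I$.

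The step I expect to be the main obstacle is the pole‑order bookkeeping: one must verify that raising $\Phi_{\alpha,a}$ and $\Phi_{\beta,a}$ to the power $N/2$ and multiplying does not inflate the pole at $w=a^2$ beyond the order $2N$ dictated by the scalar prefactors, that the matrix factors in \eqref{eq:phiepsa} are genuinely analytic (and not secretly singular) at $w=a^2$ and $w=a^{-2}$, and that the determinant is computed correctly so that the pole order of $r_{a,1,N}$ exactly matches the upper bound on the numerator's pole order. These are routine once set up carefully, and the rest of the argument is immediate.
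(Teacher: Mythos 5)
Your argument is correct for $a\in(0,1)$ and is close in spirit to the paper's, but the mechanism differs. The paper never writes the projection as a ratio: it uses $F_{a,1,N}+F_{a,2,N}=I$ to force the two matrices to have poles of equal order, and then reads off a contradiction from pole-order consistency in the two identities $\phi_{a,N}=r_{a,1,N}F_{a,1,N}+r_{a,2,N}F_{a,2,N}$ and $\phi_{a,N}^{-1}=r_{a,1,N}^{-1}F_{a,1,N}+r_{a,2,N}^{-1}F_{a,2,N}$ (equations \eqref{eq:phia1}--\eqref{eq:phia2}), using the table of pole/zero orders of $r_{a,k,N}$ and $\phi_{a,N}^{\pm1}$ at $a^{\pm2}$. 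You instead use the explicit rank-one formula $F_{a,1,N}=(\phi_{a,N}-r_{a,2,N}I)/(r_{a,1,N}-r_{a,2,N})$, pin the pole of $r_{a,1,N}$ at $w=a^2$ to order exactly $2N$ via $\det\phi_{a,N}$, and conclude by comparing numerator and denominator. Your determinant computation agrees with the paper's, and the exactness of the pole order is the one ingredient your route needs that the paper's consistency argument gets to avoid (it only needs the order of $\phi_{a,N}$ at $a^2$ and of $\phi_{a,N}^{-1}$ at $a^{-2}$, plus Lemma \ref{lem:1}); in exchange your version is more explicit and handles each point in one line.

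The one genuine flaw is the endpoint $a=1$, which the lemma includes. There $a^2=a^{-2}=1$, $\det\phi_{1,N}\equiv1$, so your determinant argument gives no information on the order of the pole of $r_{1,N}$ at $w=1$, and you dispose of this case by citing Lemma \ref{lem:FNanalytic} --- but in this paper Lemma \ref{lem:FNanalytic} is itself obtained as the $a=1$ specialization of the very lemma you are proving, so as written the step is circular. The fix is cheap: at $a=1$ run the same ratio argument, replacing the determinant input by Lemma \ref{lem:rnpole} (a consequence of Lemma \ref{lem:r2Nexpansion}, proved independently), which gives that $r_{1,N}$ has a pole of order exactly $2N$ and $r_{2,N}$ a zero of order $2N$ at $w=1$; then the denominator $r_{1,N}-r_{2,N}$ has a pole of order exactly $2N$, matching the at-most-$2N$ pole of $\phi_N(w)-r_{2,N}(w)I$, and analyticity of $F_{1,N}$, hence of $F_{2,N}=I-F_{1,N}$, follows at $w=1$ as before. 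Also note that Lemma \ref{lem:1}, as stated, is really a statement for $a\neq1$ (its two assertions are incompatible when $a^2=a^{-2}$), so your punctured-neighborhood setup at $a=1$ should likewise lean on Lemma \ref{lem:rnpole} rather than Lemma \ref{lem:1}.
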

This lemma is also proven in Section \ref{section:analysisofeigen}. We can now go back to considering the expression in equation \eqref{eq:1}. We distribute and consider the two parts of the sum separately. Firstly, note that  
\begin{equation}
    \Phi_{\alpha,a}(w)^{-m'}r_{a,2,N}(w)F_{a,2,N}(w)\tilde{\phi}_{+,a,N}(w)^{-1}
\end{equation}
is entirely analytic inside the contour $\gamma_a$ so, by the residue theorem, this part of the contour integral evaluates to zero. The other part of \eqref{eq:1} needs a little massaging,
\begin{align*}
    \Phi_{\alpha,a}(w)^{-m'}r_{a,1,N}(w)F_{a,1,N}(w)&\tilde{\phi}_{+,a,N}(w)^{-1} = \\
    &=\Phi_{\alpha,a}(w)^{-m'}\phi_{a,N}(w)\phi_{a,N}(w)^{-1}r_{a,1,N}(w)F_{a,1,N}(w)\phi_{a,N}(w)^{-1}\tilde{\phi}_{-,a,N}(w) \\
    &= \Phi_{\alpha,a}(w)^{\frac{N}{2}-m'}\Phi_{\beta,a}(w)^{\frac{N}{2}}r_{a,1,N}(w)^{-1}F_{a,1,N}(w)\tilde{\phi}_{-,a,N}(w)
\end{align*}
All of the parts of the above expression are analytic at $a^{-2}$, so we only have poles inside the contour $\gamma_a$.\footnote{The matrices $\Phi_{\eps,a}(w)$, where $\eps = \alpha, \, \beta$, have poles at $a^2$ while their inverses have poles at $a^{-2}$. The way the expression is written guarantees that these matrices have positive exponents.} Thus, by dominated convergence, we may take the limit $a \to 1$. When $a = 1$, we will drop the subscript denoting the value of $a$. Note that, 
\begin{equation}
    \tilde{\phi}_{-,N}(w) = \frac{1}{(1-w^{-1})^N}\Phi_{\alpha}(w)^{\frac{N}{2}}
\end{equation}
by the Weiner-Hopf factorization.\footnote{See \cite[Section 5]{BD19} for details on how to compute this.} Thus the correlation kernel for the split two-periodic Aztec diamond may be written in the following manner in the case where $0 < m' \leq \nicefrac{N}{2}$,
\begin{multline}
    \Big[\mathbb{K}_N(4m',2\xi'+j;4m,2\xi+i)\Big]_{i,j=0}^1 = -\frac{\mathbb{I}_{m>m'}}{2 \pi \im} \oint_{\gamma_{0,1}} \frac{\dz}{z} z^{\xi'-\xi} \Phi_{\alpha}(z)^{\frac{N}{2}-m'}\Phi_{\eps}(z)^{m-\frac{N}{2}} \\
    + \frac{1}{(2\pi\im)^2} \oint_{\gamma_1} \dw \oint_{\gamma_{0,1}} \frac{\dz}{z(z-w)} \frac{w^{\xi'+N}(z-1)^{N}}{z^{\xi+N}(w-1)^N} \Phi_\alpha(w)^{-m'}F_{1,N}(w)\Phi_\alpha(w)^{\frac{N}{2}}\Phi_\eps(z)^{m-\frac{N}{2}}
\end{multline}
where $\eps = \alpha$ if $m \leq N/2$ and $\eps = \beta$ if $m > N/2$. 
%%%%%%%%%%%%%%%%%%%%%%%%%%%%%%%%%%%%%%%%%%%%%%%%%%%%%%
\subsubsection{Proof when $m' > \nicefrac{N}{2}$}
We state the results of \cite[Lemma 3.1]{BD19} under the stated condition, 
\begin{multline}
    \Big[\mathbb{K}_{a,N}(4m',2\xi'+j;4m,2\xi+i)\Big]_{i,j=0}^1 = -\frac{\mathbb{I}_{m>m'}}{2 \pi \im} \oint_{\gamma_{0,1}} \frac{\dz}{z} z^{\xi'-\xi} \Phi_{\alpha,a}(z)^{\frac{N}{2}-m'}\Phi_{\eps,a}(z)^{m-\frac{N}{2}} \\
    + \frac{1}{(2\pi\im)^2} \oint_{\gamma_a} \dw \oint_{\gamma_{0,1,a}} \frac{\dz}{z(z-w)} \frac{w^{\xi'}}{z^{\xi}} \Phi_{\beta,a}(w)^{N-m'}\tilde{\phi}_{+,a,N}(w)^{-1}\tilde{\phi}_{-,a,N}(z)^{-1}\phi_{m,a,N}(z)
\end{multline}
The double contour integral above has the same issue as the double contour integral in equation \eqref{eq:intialBDkernel}, the $w$-part of the integral has poles at both $a^2$ and $a^{-2}$ that we must address before taking the limit. 
\begin{align}
    \Phi_{\beta,a}(w)^{N-m'}\phi_{+,a,N}(w)^{-1} &= \Phi_{\beta,a}(w)^{\nhalf-m'}\Phi_{\beta,a}(w)^\nhalf \phi_{+,a,N}(w)^{-1} \\
    &= \Phi_{\beta,a}(w)^{\nhalf-m'}\Phi_{\alpha,a}(w)^{-\nhalf} \Phi_{\alpha ,a}(w)^\nhalf\Phi_{\beta,a}(w)^\nhalf \phi_{+,a,N}(w)^{-1}
\end{align}
The above manipulations have introduced a $\phi_{a,N}(w)=\Phi_{\alpha ,a}(w)^\nhalf\Phi_{\beta,a}(w)^\nhalf$ to the above product, which we replace this with its eigen-decomposition to obtain
\begin{equation}
    \Phi_{\beta,a}(w)^{\nhalf-m'}\Phi_{\alpha,a}(w)^{-\nhalf}\big(r_{a,1,N}(w)F_{a,1,N}(w) + r_{a,2,N}(w)F_{a,2,N}(w) \big) \phi_{+,a,N}(w)^{-1}
\end{equation}
The expression
\begin{equation}
    \Phi_{\beta,a}(w)^{\nhalf-m'}\Phi_{\alpha,a}(w)^{-\nhalf}r_{a,2,N}(w)F_{a,2,N}(w)\phi_{+,a,N}(w)^{-1}
\end{equation}
only has poles outside of the unit circle, and so it contributes nothing to the integral. We are left to manipulate the following,
\begin{align}
    \Phi_{\beta,a}(w)^{\nhalf-m'}\Phi_{\alpha,a}(w)^{-\nhalf}r_{a,1,N}(w)&F_{a,1,N}(w)\phi_{+,a,N}(w)^{-1} \nonumber \\
    &= \Phi_{\beta,a}(w)^{N-m'} \phi_{a,N}(w)^{-1}r_{a,1,N}(w)F_{a,1,N}(w)\phi_{a,N}(w)^{-1}\phi_{-,a,N}(w) \\
    &= \Phi_{\beta,a}(w)^{N-m'} r_{a,1,N}(w)^{-1}F_{a,1,N}(w)\phi_{-,a,N}(w)
\end{align}
The resulting expression only has poles inside the unit circle, so we are now free to take the limit $a \to 1$. Doing so gives,
\begin{multline}
    \Big[\mathbb{K}_N(4m',2\xi'+j;4m,2\xi+i)\Big]_{i,j=0}^1 = -\frac{\mathbb{I}_{m>m'}}{2 \pi \im} \oint_{\gamma_{0,1}} \frac{\dz}{z} z^{\xi'-\xi} \Phi_{\alpha}(z)^{\frac{N}{2}-m'}\Phi_{\eps}(z)^{m-\frac{N}{2}} \\
    + \frac{1}{(2\pi\im)^2} \oint_{\gamma_1} \dw \oint_{\gamma_{0,1}} \frac{\dz}{z(z-w)} \frac{w^{\xi'+N}(z-1)^{N}}{z^{\xi+N}(w-1)^N} \Phi_{\beta}(w)^{N-m'}r_{1,N}^{-1}(w)F_{1,N}(w)\Phi_\alpha(w)^{\frac{N}{2}}\Phi_\eps(z)^{m-\frac{N}{2}}
\end{multline}
After taking the limit, it will benefit us later to make the following substitution,
\begin{align*}
    \Phi_{\beta}(w)^{N-m'} &= \Phi_{\beta}(w)^{\nhalf-m'}\Phi_{\alpha}(w)^{-\nhalf}\Phi_\alpha(w)^\nhalf\Phi_\beta(w)^\nhalf \\
    &= \Phi_{\beta}(w)^{\nhalf-m'}\Phi_{\alpha}(w)^{-\nhalf}\phi_N(w)
\end{align*}
And so the kernel becomes, 
\begin{multline}
    \mathbb{K}_N(4m',2\xi'+j;4m,2\xi+i) = -\frac{\mathbb{I}_{m>m'}}{2 \pi \im} \oint_{\gamma_{0,1}} \frac{\dz}{z} z^{\xi'-\xi} \Phi_{\alpha}(z)^{\frac{N}{2}-m'}\Phi_{\eps}(z)^{m-\frac{N}{2}} \\
    + \frac{1}{(2\pi\im)^2} \oint_{\gamma_1} \dw \oint_{\gamma_{0,1}} \frac{\dz}{z(z-w)} \frac{w^{\xi'+N}(z-1)^{N}}{z^{\xi+N}(w-1)^N} \Phi_{\beta}(w)^{\nhalf-m'}\Phi_{\alpha}(w)^{-\nhalf}F_{1,N}(w)\Phi_\alpha(w)^{\frac{N}{2}}\Phi_\eps(z)^{m-\frac{N}{2}}
\end{multline}
%%%%%%%%%%%%%%%%%%%%%%%%%%%%%%%%%%%%%%%%%%%%%%%%%%%%%%%%%%%%%%%%%%%%%%%%%%%%%%%%%%%%%%%%%%%%%%%%%%%%%%%%%%%%%%%%%%%%%%%%%%%%%%%%%%%%%%%%%%%%%%%%%%%%%%%%%%%%%
\section{Analysis of the Eigen-Decomposition} \label{section:analysisofeigen}
In this section, we take a closer look at the behavior of the eigenvalues and eigenvectors of $\phi_{a,N}(z)$. We first prove Lemma \ref{lem:r2Nexpansion}, which appears in the proof of Theorem \ref{theorem:kernel} in Section \ref{section:proofofkernel1}. Next, we analyze the poles and zeros of the eigenvalues $r_{a,1,N}(z)$ and $r_{a,2,N}(z)$. In particular, we prove Lemma \ref{lem:1}. Lastly, we prove Lemma \ref{lem:FNanalyticGEN}. Lemmas \ref{lem:1} and \ref{lem:FNanalyticGEN} are necessary for the statement of the intermediate correlation kernel in Lemma \ref{lemma:BDstuff}. They are also generalizations of Lemmas \ref{lem:rnpole} and \ref{lem:FNanalytic}, respectively, which both appear in the proof of Theorem \ref{theorem:kernel}.

%%%%%%%%%%%%%%%%%%%%%%%%%%%%%%%%%%%%%%%%%%%%%%%%
\subsection{Proof of Lemma \ref{lem:r2Nexpansion}} 
We are immediately ready to prove Lemma \ref{lem:r2Nexpansion}. Recall that we can express $r_{2,N}(z)$ in terms of the trace of $\phi_N(z)$, 
$$r_{2,N}(z) = \frac{1}{2}\left(t_N(z) - \sqrt{t_N(z)^2-4}\right)$$
where $t_N(z) = \text{tr } \phi_N (z)$ is given in equation \eqref{eq:trace}. Fix $z \in \mathbb{C} \backslash \mathcal{B}$ and choose $N$ so that $t_N(z)$ is sufficiently large. We know that this is the case for any $z$ away from the branch cuts because $|r_{\alpha,1}(z)|, \, |r_{\beta,1}(z)| >1$. We can then expand $r_{2,N}(z)$ as a series in $t_N(z)$ around infinity,
\begin{equation}
    r_{2,N}(z) = \sum_{k=1}^\infty c_k t_N(z)^{1-2k}.
\end{equation}
It is important to note that the coefficients $c_k$ are independent of $z$ and $N$, however their exact value is not important so we do not explicitly state them. For simplicity, we let $x = r_{\alpha,1}(z)^\nhalf$ and $y = r_{\beta,1}(z)^\nhalf$. We write, 
\begin{equation}
    t_N(z) = (x^{-1}y + xy^{-1})\Big[\big(\nicefrac{1}{2}+g_{\al,\beta}(z)\big)u(x,y) + \big(\nicefrac{1}{2} - g_{\al,\beta}(z)\big)\Big]
\end{equation}
where 
\begin{equation}
    u(x,y) = \frac{xy + x^{-1}y^{-1}}{xy^{-1} + x^{-1}y}
\end{equation}
For any $z \in \mathbb{C} \backslash \mathcal{B}$ and $N$ sufficiently large, $u(x,y)$ also tends to infinity. So we can write the series expansion of $t_N(z)^{1-2k}$ in $u(x,y)$ about infinity, 
\begin{equation}
    t_N(z)^{1-2k} = (x^{-1}y + xy^{-1})^{1-2k}\sum_{j=2k-1}^\infty c'_{k,j} \frac{\big(\nicefrac{1}{2}-g_{\al,\beta}(z)\big)^{j-2k+1}}{\big(\nicefrac{1}{2}+g_{\al,\beta}(z)\big)^j}u(x,y)^{-j}
\end{equation}
where $c'_{k,j}$ are coefficients independent of $z$ and $N$ that we don't state explicitly. We can plug the above into our expansion of $r_{2,N}(z)$ to get, 
\begin{equation}
    r_{2,N}(z) = \sum_{k=1}^\infty \sum_{j=2k-1}^\infty c''_{k,j} \frac{\big(\nicefrac{1}{2}-g_{\al,\beta}(z)\big)^{j-2k+1}}{\big(\nicefrac{1}{2}+g_{\al,\beta}(z)\big)^j}\left(\frac{1}{xy+x^{-1}y^{-1}}\right)^{j}(x^{-1}y+xy^{-1})^{j-2k+1}
\end{equation}
where $c''_{k,j} = c_kc'_{k,j}$. The last piece is simply a Laurent polynomial so we write, 
\begin{equation}
    (x^{-1}y+xy^{-1})^{j-2k+1} = \sum_{s=0}^{j-2k+1} d_{j,k,s} \left(\frac{y}{x}\right)^{2s-j+2k-1}
\end{equation}
We also write
\begin{equation}
    \left(\frac{1}{xy+x^{-1}y^{-1}}\right)^j = \frac{(xy)^j}{\big((xy)^2+1\big)^j}
\end{equation}
and expand in $xy$ around infinity. Combining these results gives, 
\begin{equation}
    r_{2,N}(z)= \sum_{k=1}^\infty \sum_{j=2k-1}^\infty c''_{k,j} \frac{\big(\nicefrac{1}{2}-g_{\al,\beta}(z)\big)^{j-2k+1}}{\big(\nicefrac{1}{2}+g_{\al,\beta}(z)\big)^j} \left(\sum_{r=j}^\infty d'_{r,j} \left(\frac{1}{xy}\right)^{2r-j}\right)\left(\sum_{s = 0}^{j-2k+1} d_{j,k,s}\left(\frac{y}{x}\right)^{2s-j+2k-1}\right)
\end{equation}
Once again we've introduced the coefficient $d'_{r,j}$ and $d''_{j,k,s}$ which are independent of $z$ and $N$. This expansion is convergent as long as $z \not\in \mathcal{B}$ and $N$ is sufficiently large. Besides $x$ and $y$, the only part of the expansion that depends on $z$ are the terms
$$\frac{\big(\nicefrac{1}{2}-g_{\al,\beta}(z)\big)^{j-2k+1}}{\big(\nicefrac{1}{2}+g_{\al,\beta}(z)\big)^j}$$
It is easy to check that these terms have no pole at $z = 1$ given the definition of $g_{\alpha,\beta}(z)$ in equation \eqref{eq:g}.

We next need to inspect the possible orders of $x$ and $y$ that appear in the expansion. Notice that the exponent in the last sum is at most $j - 2k+1$ and at least $-j+2k-1$. Moreover, the smallest exponent in the penultimate sum is $j$. Checking the possible powers on $x$ and $y$ gives,
\begin{equation}
    r_{2,N}(z) = \sum_{p=0}^\infty \sum_{q=0}^\infty c_{p,q}(z) x^{-(2p+1)}y^{-(2q+1)}
\end{equation}
The statement in Lemma \ref{lem:r2Nexpansion} follows from replacing $x$ and $y$ with $r_{\alpha,1}(z)$ and $r_{\beta,1}(z)$, respectively. 

%%%%%%%%%%%%%%%%%%%%%%%%%%%%%%%%%%%%%%%%%%%
\subsection{Poles and Zeros of $r_{a,k,N}(w)$}
Our ultimate goal in this section is to prove Lemma \ref{lem:1}. In doing so, we will also prove Lemma \ref{lem:rnpole}, which is a direct consequence of Lemma \ref{lem:1}. We begin by stating some equations necessary in defining the eigenvalues $r_{a,k,N}(w)$. We start by stating the eigenvalues of $\Phi_{\eps,a}(w)$,
\begin{multline}
    r_{a,\eps,1}(w) = \frac{1}{(w-a^2)^2}\Big((w+1)^2+\frac{1}{2}w(a+a^{-1})^2(\eps^2+\eps^{-2})\\ +(a+a^{-1})(\eps+\eps^{-1})\sqrt{w(w^2+x_{\eps,a}w+1)}\Big)
\end{multline}
\begin{multline}
    r_{a,\eps,2}(w) = \frac{1}{(w-a^2)^2}\Big((w+1)^2+\frac{1}{2}w(a+a^{-1})^2(\eps^2+\eps^{-2})\\ -(a+a^{-1})(\eps+\eps^{-1})\sqrt{w(w^2+x_{\eps,a}w+1)}\Big)
\end{multline}
where
\begin{equation*}
    x_{\eps,a} = \frac{1}{4}\left((a+a^{-1})^2(\eps^2+\eps^{-2})-2(a - a^{-1})^2\right)
\end{equation*}
The function $r_{a,\eps,1}(w)$ has a pole at $w = a^2$ and the function $r_{a,\eps,2}(w)$ has a zero at $w = a^{-2}$. Moreover, these are the only poles and zeros of these functions. These eigenvalues can be used to help compute the eigenvalues of the matrix-valued function $\phi_{a,N}(w)$. First we state, 
\begin{equation}
    \det \Phi_{\eps,a}(w) = \frac{(1-a^{-2}w^{-1})^2}{(1-a^2w^{-1})^2}
\end{equation}
and so
\begin{equation}
    \det \phi_{a,N}(w) = \frac{(1-a^{-2}w^{-1})^{2N}}{(1-a^2w^{-1})^{2N}}
\end{equation}
Also, 
\begin{multline}
    \text{tr } \phi_{a,N}(w) = \left(\frac{1}{2} + g_{a,\alpha,\beta}(w)\right)\left(r_{a,\alpha,1}(w)^\nhalf r_{a,\beta,1}(w)^\nhalf + r_{a,\alpha,2}(w)^\nhalf r_{a,\beta,2}(w)^\nhalf \right) \\
    + \left(\frac{1}{2} - g_{a,\alpha,\beta}(w)\right)\left(r_{a,\alpha,1}(w)^\nhalf r_{a,\beta,2}(w)^\nhalf + r_{a,\alpha,2}(w)^\nhalf r_{a,\beta,1}(w)^\nhalf \right)
\end{multline}
where
\begin{multline}
    g_{a,\alpha,\beta}(w) = \Big(2a^2(\alpha^2+\beta^2)(w^2-1)+w\left((\alpha^2-1)(\beta^2-1)(a^4+1)+2a^2(\alpha^2+1)(\beta^2+1)\right)\Big)\Big/ \\
    \Big(2\sqrt{(a^2+1)^2(\alpha^4+1)w-2\alpha^2\left((a^4+1)w-2a^2(w^2+w+1)\right)} \\
    \times \sqrt{(a^2+1)^2(\beta^4+1)w-2\beta^2\left((a^4+1)w-2a^2(w^2+w+1)\right)}\Big)
\end{multline}
Note that these are just the generalized versions of equations \eqref{eq:trace} and \eqref{eq:g}. Now we may write the functions $r_{a,1,N}(w)$ and $r_{a,2,N}(w)$ in terms of the trace and determinant above,
\begin{equation}
    r_{a,1,N}(w) = \frac{1}{2}\left(\text{tr } \phi_{a,N}(w) + \sqrt{(\text{tr }\phi_{a,N}(w))^2 - 4 \det \phi_{a,N}(w)}\right)
\end{equation}
\begin{equation}
    r_{a,2,N}(w) = \frac{1}{2}\left(\text{tr } \phi_{a,N}(w) - \sqrt{(\text{tr }\phi_{a,N}(w))^2 - 4 \det \phi_{a,N}(w)}\right)
\end{equation}
We are now ready to prove Lemma \ref{lem:1}.

\begin{proof}
We will start by justifying that the only potential poles or zeros of the function $r_{a,k,N}(w)$ with positive real part can occur at the points $w=a^2$ and $w=a^{-2}$. Recall the equation, 
\begin{equation} \label{eq:detrelation}
    r_{a,1,N}(w)r_{a,2,N}(w) = \det \phi_{a,N}(w)
\end{equation}
From this we can deduce that, other than the points $w = a^2$ and $w = a^{-2}$, any pole of $r_{a,1,N}(w)$ must be a zero of $r_{a,2,N}(w)$ and vice versa. Additionally, since 
\begin{equation}
    r_{a,1,N}(w) + r_{a,2,N}(w) = \text{tr } \phi_{a,N}(w)
\end{equation}
any additional poles or zeros must also be poles of $\text{tr } \phi_{a,N}(w)$. Other than $w = a^2$, the potential poles of $\text{tr } \phi_{a,N}(w)$ come from the poles of $g_{a,\alpha,\beta}(w)$. These are explicitly, 
\begin{equation*}
    \frac{2 \left(a^2-1\right)^2 \eps^2-\left(a^2+1\right)^2 \left(\eps^4+1\right) \pm \left(a^2+1\right) \left(\eps^2-1\right) \sqrt{\left(a^2+1\right)^2(\eps^4+1) - 2 \left(a^4-6 a^2+1\right) \eps^2}}{8 a^2 \eps^2}
\end{equation*}
For $\eps = \alpha, \, \beta$. One should observe that these values are always negative and real for $a \in (0,1]$. For $a = 1$, these are the points $w = -\eps^2$ and $w = -\eps^{-2}$. 

Now that we have establish that the only poles or zeros of $r_{a,1,N}(w)$ and $r_{a,2,N}(w)$ in the neighborhood of $\gamma_a$ are at $w = a^2$ and $w = a^{-2}$, we can look specifically at the order of these poles by considering the Laurent expansions around these points. We start with the following expansions, about $w=a^2$ 
\begin{equation}
    r_{a,\eps,1}(w) = \frac{c_{a,\eps,+} }{(w-a^2)^2} + O\big((w-a^2)^{-1}\big)
\end{equation}
\begin{equation}
    r_{a,\eps,2}(w) = d_{a,\eps,+} + O\big(w-a^2\big)
\end{equation}
About $w = a^{-2}$ we have,
\begin{equation}
    r_{a,\eps,1}(w) = c_{a,\eps,-} +  O\big(w-a^{-2}\big)
\end{equation}
\begin{equation}
    r_{a,\eps,2}(w) = d_{a,\eps,-}(w-a^{-2})^2 + O\big((w-a^2)^{-3}\big)
\end{equation}
The coefficients $c_{a,\eps,\pm}$ and $d_{a,\eps,\pm}$ are stated explicitly in Appendix \ref{appendix:coefficients}. We also have the following expansions of the function $g_{a,\alpha,\beta}(w)$,
\begin{equation}
    g_{a,\alpha,\beta}(w) = \frac{1}{2} + b_{a,+}(w-a^2) + O\big((w-a^2)^2\big)
\end{equation}
\begin{equation}
    g_{a,\alpha,\beta}(w) = \frac{1}{2} + b_{a,-}(w-a^{-2}) + O\big((w-a^{-2})^2\big)
\end{equation}
where coefficients $b_{a,-}$ and $b_{a,+}$ are also given in Appendix \ref{appendix:coefficients}. This gives us the information necessary to write the leading order terms of the Laurent expansion of $\text{tr } \phi_{a,N}(w)$ about $w = a^2$ and $w = a^{-2}$
\begin{equation}
    \text{tr } \phi_{a,N}(w) = \frac{(c_{a,\alpha,+}c_{a,\beta,+})^\nhalf}{2(w-a^2)^{2N}} + O\big((w-a^2)^{-2N+1}\big)
\end{equation}
\begin{equation}
    \text{tr } \phi_{a,N}(w) = \frac{1}{2}(c_{a,\alpha,-}c_{a,\beta,-})^\nhalf + O\big(w-a^{-2}\big)
\end{equation}
Using the above expansions and the form of the determinant, this is enough to see that $r_{a,1,N}(w)$ has a pole of order $2N$ at $w = a^2$ and is analytic and non-zero at $w = a^{-2}$. By equation \eqref{eq:detrelation}, we can conclude that $r_{a,2,N}(w)$ is analytic and non-zero at $w = a^2$ and has a zero of order $2N$ at $w = a^{-2}$.
\end{proof}
%%%%%%%%%%%%%%%%%%%%%%%%%%%%%%%%%%%%%%%%%

%%%%%%%%%%%%%%%%%%%%%%%%%%%%%%%%%%%%%%%%%%%%%
\subsection{Analysis of the $F_N$-matrices} 
For the proofs of Theorem \ref{theorem:kernel} and Lemma \ref{lemma:BDstuff} we need to have an understanding of the poles of the matrix valued functions $F_{a,N,k}(z)$. In particular, our goal is to prove Lemma \ref{lem:FNanalyticGEN}. In turn we will also prove Lemma \ref{lem:FNanalytic}, as it is a direct consequence Lemma \ref{lem:FNanalyticGEN}. We will avoid explicitly stating the $F_{a,N}$ matrices as they are quite involved and their form is not insightful. 
\begin{proof}
    We write the eigen-decomposition of $\phi_{a,N}(z)$ as, 
    $$\phi_{a,N}(z) = E_{a,N}(z)D_{a,N}(z)E_{a,N}(z)^{-1}$$
    Where 
    $$D_{a,N}(z) = \begin{pmatrix}
        r_{a,1,N}(z) & 0 \\
        0 & r_{a,2,N}(z)
    \end{pmatrix}$$
    And the $F_{a,N}$ matrices are defined by, 
    $$F_{a,1,N}(z) = E_{a,N}(z)\begin{pmatrix}
        1 & 0\\ 0 & 0
    \end{pmatrix}E_{a,N}(z)^{-1}$$
    $$F_{a,2,N}(z) = E_{a,N}(z)\begin{pmatrix}
        0 & 0\\ 0 & 1
    \end{pmatrix}E_{a,N}(z)^{-1}$$
    First we should note the following equality, 
    \begin{equation}
        F_{a,1,N}(z) + F_{a,2,N}(z) = I
    \end{equation}
    Where $I$ is the identity matrix. This means for any $z \in \mathbb{C}$, $F_{a,1,N}(z)$ and $F_{a,2,N}(z)$ must have poles of the same order. We recall some facts about the poles of relevant functions, 
    \begingroup
    \begin{center}
    \renewcommand{\arraystretch}{1.5}
    \begin{tabular}{c|c|c}
        & $z = a^2$ & $z=a^{-2}$ \\
        \hline
        $r_{a,1,N}(z)$ & pole of order $2N$ & finite (non-zero) \\
        \hline
        $r_{a,2,N}(z)$ & finite (non-zero) & zero of order $2N$ \\
        \hline 
        $\phi_{a,N}(z)$ & pole of order $2N$ & finite (non-zero) \\
        \hline 
        $\phi_{a,N}(z)^{-1}$ & finite (non-zero) & pole of order $2N$\\
    \end{tabular}
    \end{center}
    \endgroup
    \noindent The statements about $r_{a,1,N}(z)$ and $r_{a,2,N}(z)$ are proven in Lemma \ref{lem:1}, while the facts about $\phi_{a,N}(z)^{\pm1}$ can be directly computed via the definition in equations \eqref{eq:phiepsa} and \eqref{eq:phian}. Now we write, 
    \begin{equation} \label{eq:phia1}
        \phi_{a,N}(z) = r_{a,1,N}(z)F_{a,1,N}(z) + r_{a,2,N}(z)F_{a,2,N}(z)
    \end{equation}
    The order of the pole at $z = a^2$ must agree on both sides of the equation. Since $F_{a,1,N}(z)$ and $F_{a,2,N}(z)$ must have poles of the same order, the only way the equation is consistent is if $F_{a,1,N}(z)$ and $F_{a,2,N}(z)$ have no pole at $z = a^2$. Similarly, by analyzing the equality 
    \begin{equation} \label{eq:phia2}
        \phi_{a,N}(z)^{-1} = r_{a,1,N}(z)^{-1}F_{a,1,N}(z) + r_{a,2,N}(z)^{-1}F_{a,2,N}(z)
    \end{equation}
    We can deduce that $F_{a,1,N}(z)$ and $F_{a,2,N}(z)$ have no pole at $z = a^{-2}$. 
\end{proof}

%%%%%%%%%%%%%%%%%%%%%%%%%%%%%%%%%%%%%%%%%%%%%%%%%%%%%%%%%%%%%%%%%%%%%%%%%%%%%%%%%%%%%%%%%%%%%%%%%%%%%%%%%%%%%%%%%%%%%%%%%%%%%%%%%%%%%%%%%%%%%%%%%%%%%%%%%%%%%
\appendix

\section{Explicit Statement of Coefficients} \label{appendix:coefficients}
Below we have recorded some coefficients named, but not explicitly stated, in the proof of Lemma \ref{lem:1}.
\begin{align*}
    c_{a,\eps,+} = \frac{\left(\eps^2+1\right)^2 \left(a^2+1\right)^2}{\eps^2} && c_{a,\eps,-} = \frac{\left(\eps^2+1\right)^2}{\eps^2 \left(a^2-1\right)^2}
\end{align*}
\begin{align*}
    d_{a,\eps,+} = \frac{\eps^2 \left(a^2-1\right)^2}{\left(\eps^2+1\right)^2 a^4} && d_{a,\eps,-} = \frac{\eps^2 a^4}{\left(\eps^2+1\right)^2 \left(a^2+1\right)^2}
\end{align*}
\begin{align*}
    b_{a,+} = \frac{\left(a^4-1\right) \left(\alpha^2-\beta^2\right)^2}{\left(\alpha^2+1\right)^2 \left(\beta^2+1\right)^2 \left(a^3+a\right)^2} && b_{a,-} = -\frac{a^2 \left(a^2-1\right) \left(\alpha^2-\beta^2\right)^2}{\left(\alpha^2+1\right)^2 \left(\beta^2+1\right)^2 \left(a^2+1\right)}
\end{align*}

\vspace{1cm}

\noindent \textbf{Data Availability.} No datasets were generated or analyzed for this research.\\

\noindent \textbf{Conflict of Interest.} There are no relevant financial or non-financial interest to disclose.

\bibliographystyle{abbrv}
\bibliography{bibliography}

\end{document}